\newcommand*{\comments}{} 
\newcommand{\todo}[1]{\textcolor{red}{#1}}
\newcommand{\todo}[1]{}
\newcommand{\irule}[2]%
   {\mkern-2mu\displaystyle\frac{#1}{\vphantom{,}#2}\mkern-2mu}
\newcommand{\irulelabel}[3]
{
\mkern-2mu
\begin{array}{ll}
\displaystyle\frac{#1}{\vphantom{,}#2} & #3
\end{array}
\mkern-2mu
}
\algnewcommand\algorithmicforeach{\textbf{for each}}
\algnewcommand{\IfThenElse}[3]{
\State \algorithmicif\ #1\ \algorithmicthen\ #2\ \algorithmicelse\ #3}
\newcommand{\toolname}{\textsc{Relish}\xspace}
\newcommand{\eusolver}{\textsc{EUSolver}\xspace}
\newcommand{\denot}[1]{\llbracket #1 \rrbracket}
\newcommand{\set}[1]{\{ #1 \}}
\newcommand{\bigset}[1]{\big\{ #1 \big\}}
\newcommand{\functions}{\mathcal{F}}
\newcommand{\grammars}{\mathcal{G}}
\newcommand{\interps}{\mathcal{I}}
\newcommand{\lang}{\mathcal{L}}
\newcommand{\programs}{\mathcal{P}}
\newcommand{\occurs}{\mathcal{M}}
\newcommand{\bigO}{\mathcal{O}}
\newcommand{\fta}{\mathcal{A}}
\newcommand{\ftaStates}{Q}
\newcommand{\ftaAlphabet}{\Sigma}
\newcommand{\ftaFinal}{Q_f}
\newcommand{\ftaTrans}{\Delta}
\newcommand{\ftaRun}{\pi}
\newcommand{\ftastate}{q}
\newcommand{\ftastates}{\ftaStates}
\newcommand{\alphabet}{\ftaAlphabet}
\newcommand{\finalstates}{{\ftaFinal}}
\newcommand{\transitions}{\ftaTrans}
\newcommand{\hfta}{\mathcal{H}}
\newcommand{\hftaNodes}{V}
\newcommand{\hftaRoot}{v_r}
\newcommand{\hftaAnnot}{\Omega}
\newcommand{\hftaTrans}{\Lambda}
\newcommand{\htree}{\mathcal{T}}
\newcommand{\htreeNodes}{V}
\newcommand{\htreeEdges}{E}
\newcommand{\htreeRoot}{v_r}
\newcommand{\htreeAnnot}{\Upsilon}
\renewcommand{\dots}{\cdots}
\newcommand{\ex}{e}
\newcommand{\exs}{\vec{\ex}}
\newcommand{\inp}{{\emph{in}}}
\newcommand{\out}{{\emph{out}}}
\newcommand{\grammar}{G}
\newcommand{\terminals}{T}
\newcommand{\nonterminals}{N}
\newcommand{\productions}{P}
\newcommand{\outputsymbol}{s_0}
\newcommand{\inputsymbol}{x}
\newcommand{\semantics}[1]{\llbracket{#1}\rrbracket}
\begin{document}

\title{Relational Program Synthesis}         


\author{Yuepeng Wang}
\affiliation{
  \institution{University of Texas at Austin}
  \country{USA}
}
\email{ypwang@cs.utexas.edu}

\author{Xinyu Wang}
\affiliation{
  \institution{University of Texas at Austin}
  \country{USA}
}
\email{xwang@cs.utexas.edu}

\author{Isil Dillig}
\affiliation{
  \institution{University of Texas at Austin}
  \country{USA}
}
\email{isil@cs.utexas.edu}

\begin{abstract}

This paper proposes \emph{relational program synthesis}, a new problem that concerns synthesizing one or more programs that collectively satisfy a \emph{relational specification}. As a dual of relational program verification, relational program synthesis is an important problem that has many practical applications, such as automated program inversion and automatic generation of comparators. However, this relational synthesis problem introduces new challenges over its non-relational counterpart due to the combinatorially larger search space.  
As a first step towards solving this problem, this paper presents a synthesis technique that combines the counterexample-guided inductive synthesis framework with a novel inductive synthesis algorithm that is based on \emph{relational version space learning}.  
We have implemented the proposed technique in a framework called \toolname, which can be instantiated to different application domains by providing a suitable domain-specific language and the relevant relational specification. We have used the \toolname framework to build relational synthesizers to automatically generate string encoders/decoders as well as  comparators, and we evaluate our tool on several benchmarks taken from prior work and online forums. Our experimental results show that the proposed technique can solve almost all of these benchmarks and that it significantly outperforms \eusolver, a generic synthesis framework that won the general track of the most recent SyGuS competition. 

\end{abstract}

\begin{CCSXML}
<ccs2012>
<concept>
<concept_id>10011007.10011006.10011050.10011056</concept_id>
<concept_desc>Software and its engineering~Programming by example</concept_desc>
<concept_significance>500</concept_significance>
</concept>
<concept>
<concept_id>10011007.10011074.10011092.10011782</concept_id>
<concept_desc>Software and its engineering~Automatic programming</concept_desc>
<concept_significance>500</concept_significance>
</concept>
<concept>
<concept_id>10003752.10003766</concept_id>
<concept_desc>Theory of computation~Formal languages and automata theory</concept_desc>
<concept_significance>500</concept_significance>
</concept>
</ccs2012>
\end{CCSXML}

\ccsdesc[500]{Software and its engineering~Programming by example}
\ccsdesc[500]{Software and its engineering~Automatic programming}
\ccsdesc[500]{Theory of computation~Formal languages and automata theory}

\keywords{Relational Program Synthesis, Version Space Learning, Counterexample Guided Inductive Synthesis.}

\maketitle

\section{Introduction}\label{sec:intro}

\emph{Relational properties} describe requirements on the interaction between multiple  programs or different runs of the same program. Examples of relational properties include the following:

\begin{itemize}
\item \emph{Equivalence:} Are two programs $P_1, P_2$ observationally equivalent? (i.e., $\forall \vec{x}.\  P_1(\vec{x}) = P_2(\vec{x})$)
\item \emph{Inversion:} Given two programs $P_1, P_2$, are they inverses of each other? (i.e., $\forall x. \ P_2(P_1(x)) = x)$
\item \emph{Non-interference:} Given a program $P$ with two types of inputs, namely \emph{low} (public) input $\vec{l}$ and \emph{high} (secret) input $\vec{h}$, does $P$ produce the same output when run on the same low input $\vec{l}$ but two different high inputs $\vec{h_1}, \vec{h_2}$? (i.e., $\forall \vec{l}, \vec{h_1}, \vec{h_2}. \ P(\vec{l}, \vec{h_1}) = P(\vec{l}, \vec{h_2}))$
\item \emph{Transitivity:} Given a program $P$ that returns a boolean value, does $P$ obey transitivity? (i.e., $\forall x,y,z. \ P(x,y) \land P(y,z) \Rightarrow P(x, z)$)
\end{itemize}
Observe that the first two properties listed above relate  different programs, while the latter two relate multiple runs of the same program. 

Due to their importance in a wide range of application domains, relational properties have received significant attention from the program verification community. For example, prior papers propose novel \emph{program logics} for verifying relational properties~\cite{rhl,prhl,chl,qchl,relsep} or transform the relational verification problem to standard safety  by constructing so-called \emph{product programs}~\cite{product1,product-asymmetric,product3}.

In this paper, we consider the dual \emph{synthesis} problem of relational verification. That is, given a relational specification $\Psi$ relating  runs of $n$ programs $P_1, \ldots, P_n$, our goal is to automatically {synthesize} $n$ programs that satisfy $\Psi$. This \emph{relational synthesis} problem has a broad range of practical applications. For example, we can use relational synthesis to automatically generate comparators
that provably satisfy certain correctness requirements  such as transitivity and anti-symmetry. As another example, we can use relational synthesis to solve the \emph{program inversion} problem where the goal is to generate a program $P'$ that is an inverse of another program $P$~\cite{inversion-loris,inversion-swarat}. Furthermore, since many automated repair techniques rely on program synthesis~\cite{repair1,repair2}, we believe that relational synthesis could also be useful for repairing programs that violate a relational property like non-interference.

Solving the relational synthesis problem introduces new challenges over its non-relational counterpart due to the combinatorially larger search space. In particular,  
a naive algorithm that simply enumerates combinations of programs and then checks their correctness with respect to the relational specification is unlikely to scale.
Instead, we need to design novel \emph{relational} synthesis algorithms 
to {efficiently} search for \emph{tuples of programs} that collectively satisfy the given specification.

We solve this challenge by introducing a novel synthesis algorithm that learns \emph{relational version spaces (RVS)}, a generalization of the notion of \emph{version space} utilized in prior work~\cite{vsa,flashfill,dace}. 
Similar to other synthesis algorithms based on counterexample-guided inductive synthesis (CEGIS)~\cite{cegis}, our method also alternates between inductive synthesis and verification; but the counterexamples returned by the verifier are relational in nature. Specifically, given a set of \emph{relational counterexamples}, such as $f(1) = g(1)$ or $f(g(1),g(2))=f(3)$, our inductive synthesizer  compactly represents \emph{tuples} of programs  and efficiently searches for their implementations that satisfy all relational counterexamples.

In more detail, our relational version space learning algorithm is based on the novel concept of \emph{hierarchical finite tree automata (HFTA)}. Specifically, an HFTA is a hierarchical collection of finite tree automata (FTAs), where each individual FTA represents  possible implementations of the different functions to be synthesized. Because  relational counterexamples can refer to compositions of functions (e.g., $f({g(1), g(2)})$), the HFTA representation allows us to compose different FTAs according to the hierarchical structure of  subterms in the relational counterexamples. Furthermore, our method constructs the HFTA  in such a way that tuples of programs that do not satisfy the examples are rejected and therefore excluded from the search space. Thus, the HFTA representation allows us to compactly represent those programs that are consistent with the relational examples.

We have implemented the proposed relational synthesis algorithm in a tool called \toolname~\footnote{\toolname stands for RELatIonal SyntHesis.} and evaluate it in the context of two different relational properties. 
First, we use \toolname to automatically synthesize encoder-decoder pairs that are provably inverses of each other. 
Second, we use \toolname to automatically generate comparators, which must satisfy three different relational properties, namely, anti-symmetry, transitivity, and totality. 
Our evaluation shows that \toolname can efficiently solve interesting benchmarks taken from previous literature and online forums. 
Our evaluation also shows that \toolname significantly outperforms \eusolver, a general-purpose synthesis tool that won the General Track of the most recent SyGuS competition for syntax-guided synthesis.

To summarize, this paper makes the following key contributions:

\begin{itemize}
\item We introduce the \emph{relational synthesis problem} and take a first step towards solving it.
\item We describe a \emph{relational version space} learning algorithm based on the concept of \emph{hierarchical finite tree automata} (HFTA).
\item We show how to construct HFTAs from relational examples expressed as ground formulas and describe an algorithm for {finding the desired accepting runs}.
\item We experimentally evaluate our approach in two different application domains and demonstrate the advantages of our relational version space learning approach over a state-of-the-art synthesizer based on enumerative search.
\end{itemize}

\section{Overview} \label{sec:overview}

\begin{figure}
\small
\begin{center}
\bgroup
\def\arraystretch{1.1}
\begin{tabular}{|c|c|}
\hline
 {\bf Property}  & {\bf Relational specification} \\ \hline
 Equivalence &   $\forall \vec{x}. \  f_1(\vec{x}) = f_2(\vec{x})$  \\ \hline
 Commutativity & $\forall x. \ f_1(f_2(x)) = f_2(f_1(x))$ \\ \hline
 Distributivity & $\forall x, y, z. \ f_2(f_1(x,y), z) = f_1(f_2(x,z), f_2(y,z))$ \\ \hline
 Associativity & $\forall x,y,z. \ f(f(x,y), z) = f(x, f(y,z))$ \\ \hline
 Anti-symmetry & $\forall x,y. \ (f(x,y)=\emph{true} \land x \neq y) \Rightarrow f(y, x)=\emph{false}$
\\ \hline
\end{tabular}
\egroup
\end{center}
\vspace{-8pt}
\caption{Examples of relational specifications for five relational properties.}\label{fig:ex-specs}
\vspace{-0.15in}
\end{figure}

In this section, we define the \emph{relational synthesis} problem and give a few motivating examples.

\subsection{Problem Statement}

The input to our synthesis algorithm is a \emph{relational specification} defined as follows:
\begin{definition}[\textbf{Relational specification}]\label{def:rel-spec}
A relational specification with functions $f_1, \ldots, f_n$ is a first-order sentence $\Psi = \forall \vec{x}. ~ \phi(\vec{x})$, where $\phi(\vec{x})$ is a quantifier-free formula with uninterpreted functions  $f_1, \ldots, f_n$.
\end{definition}

Fig.~\ref{fig:ex-specs} shows some familiar relational properties like distributivity and associativity  as well as  their corresponding specifications. Even though we refer to Definition~\ref{def:rel-spec} as a \emph{relational specification}, observe that it allows us to express combinations of both relational and non-relational properties. For instance, the specification $\forall x. \ (f(x) \geq 0  \land f(x) + g(x) = 0)$ imposes both a non-relational property on $f$ (namely, that all of its outputs must be non-negative) as well as the relational property that the outputs of $f$ and $g$ must always add up to zero.

\begin{definition}[\textbf{Relational program synthesis}]\label{def:rel-synth}
Given a set of $n$ function symbols $\functions = \set{f_1, \ldots, f_n}$, their corresponding domain-specific languages $\set{L_1, \ldots, L_n}$, and a relational specification $\Psi$, the relational program synthesis problem is to find an interpretation $\interps$ for $\functions$ such that:
\begin{enumerate} 
    \item For every function symbol $f_i \in \functions$, $\interps(f_i)$ is a program in $f_i$'s DSL $L_i$ (i.e., $\interps(f_i) \in L_i$).
    \item The interpretation $\interps$ satisfies the relational specification $\Psi$ (i.e., $\interps(f_1), \ldots, \interps(f_n) \models \Psi$).
\end{enumerate} 
\end{definition}

\subsection{Motivating Examples}\label{sec:motivating-ex}
We now illustrate the practical relevance of the relational program synthesis problem through several real-world programming scenarios.

\begin{example}[\textbf{String encoders and decoders}] \label{ex:codec}
Consider a programmer who needs to implement a Base64 encoder \texttt{encode(x)} and its corresponding decoder \texttt{decoder(x)} for any Unicode string \texttt{x}. For example, according to Wikipedia~\footnote{\url{https://en.wikipedia.org/wiki/Base64}}, the encoder should  transform the string ``\texttt{Man}'' into ``\texttt{TWFu}'', ``\texttt{Ma}'' into ``\texttt{TWE=}'', and ``\texttt{M}'' to ``\texttt{TQ==}''. In addition, applying the decoder to the encoded string should yield the original string. Implementing this encoder/decoder  pair is a relational synthesis problem in which the relational specification is the following:
\[
\small
\begin{array}{lll}
    & \texttt{encode(} \text{``} \texttt{Man} \text{''} \texttt{)} = \text{``} \texttt{TWFu} \text{''} \land \texttt{encode(} \text{``} \texttt{Ma} \text{''} \texttt{)} = \text{``} \texttt{TWE=} \text{''} \land \texttt{encode(} \text{``} \texttt{M} \text{''} \texttt{)} = \text{``} \texttt{TQ==} \text{''} & (\emph{input-output examples}) \\
\land & \forall x.~ \texttt{decode(encode($x$))} = x & (\emph{inversion}) \\
\end{array}
\]
Here, the first part  (i.e., the first line) of the specification gives three input-output examples for \texttt{encode}, and the second part states that \texttt{decode} must be the inverse of \texttt{encode}. \toolname can automatically synthesize the correct Base64 encoder and decoder from this specification using a DSL targeted for this domain (see Section~\ref{sec:encoder}).

\end{example}

\begin{example}[\textbf{Comparators}] \label{ex:comparator}
Consider a programmer who needs to implement a comparator for sorting an array of integers according to the number of occurrences of the number \texttt{5}~\footnote{\url{https://stackoverflow.com/questions/19231727/sort-array-based-on-number-of-character-occurrences}}.
Specifically, \texttt{compare(x,y)} should return \texttt{-1} (resp. \texttt{1}) if \texttt{x} (resp. \texttt{y}) contains less \texttt{5}'s than \texttt{y} (resp. \texttt{x}), and ties should be broken based on the actual values of the integers. For instance, sorting the array [``\texttt{24}'',``\texttt{15}'',``\texttt{55}'',``\texttt{101}'',``\texttt{555}''] using this comparator should yield array [``\texttt{24}'',``\texttt{101}'',``\texttt{15}'',``\texttt{55}'',``\texttt{555}'']. Furthermore, since the comparator must define a total order, its implementation should satisfy reflexivity, anti-symmetry, transitivity, and totality.  The problem of generating a suitable \texttt{compare} method is again a relational synthesis problem and can be defined using the following specification:
\[
\small
\begin{array}{lll}
    & \texttt{compare(} \text{``} \texttt{24} \text{''}, \text{``} \texttt{15} \text{''} \texttt{)} = \texttt{-1} \land \texttt{compare(} \text{``} \texttt{101} \text{''}, \text{``} \texttt{24} \text{''} \texttt{)} = \texttt{1} \land \ldots & (\emph{input-output examples}) \\
    \land & \forall x. ~ \texttt{compare($x,x$)} = \texttt{0} & (\emph{reflexivity}) \\
\land & \forall x, y. ~ \emph{sgn}(\texttt{compare($x,y$)}) = -\emph{sgn}(\texttt{compare($y, x$)}) & (\emph{anti-symmetry}) \\
\land & \forall x, y, z.~ \texttt{compare($x, y$)} > \texttt{0} \land \texttt{compare($y, z$)} > \texttt{0} \Rightarrow \texttt{compare($x, z$)} > \texttt{0} & (\emph{transitivity}) \\ 
\land & \forall x, y, z.~ \texttt{compare($x, y$)} = \texttt{0} \Rightarrow \emph{sgn}(\texttt{compare($x, z$)}) = \emph{sgn}(\texttt{compare($y, z$)}) & (\emph{totality}) \\
\end{array}
\]

A solution to this problem is given by the following implementation of \texttt{compare}:
\[
\small
\begin{BVerbatim}
let a = intCompare (countChar x '5') (countChar y '5') in
    if a != 0 then a else intCompare (toInt x) (toInt y)
\end{BVerbatim}
\]
where {\tt countChar} function returns the number of occurrences of a character in the input string, and {\tt intCompare} is the standard comparator on integers.

\end{example}

\begin{example}[\textbf{Equals and hashcode}]
A common programming task is to implement {\tt equals} and {\tt hashcode} methods for a given class.  These functions are closely related because {\tt equals(x,y)=true} implies that the hash codes of {\tt x} and {\tt y} must be the same. Relational synthesis can be used to simultaneously generate implementations of {\tt equals} and {\tt hashcode}. For example, consider an {\tt ExperimentResults} class that internally maintains an array of numbers  where negative integers indicate an anomaly (i.e., failed experiment) and should be ignored when comparing the results of two experiments. For instance, the results \texttt{[23.5,-1,34.7]} and \texttt{[23.5,34.7]} should be equal whereas \texttt{[23.5,34.7]} and \texttt{[34.7,23.5]} should not. The programmer can use a relational synthesizer to generate {\tt equals} and {\tt hashcode} implementations by providing the following specification:
\[
\small
\begin{array}{lll}
    & \texttt{equals([23.5,-1,34.7]}, \texttt{[23.5,34.7])} = \texttt{true} & (\emph{input-output examples}) \\
    \land  & \texttt{equals([23.5,34.7]}, \texttt{[34.7,23.5])} = \texttt{false} \land \ldots \\
\land & \forall x, y. \ \texttt{equals($x, y$)} \Rightarrow (\texttt{hashcode($x$)} = \texttt{hashcode($y$)}) & (\emph{equals-hashcode}) \\
\end{array}
\]

A possible solution consists of the following pair of implementations of {\tt equals} and {\tt hashcode}:
\[\small 
\begin{BVerbatim}
equals(x,y) : (filter (>= 0) x) == (filter (>= 0) y)
hashcode(x) : foldl (\u.\v. 31 * (u + v)) 0 (filter (>= 0) x)
\end{BVerbatim}
\]
\end{example}

\begin{example}[\textbf{Reducers}]  MapReduce is a software framework for writing applications that process large datasets in parallel. Users of this framework need to implement both a \emph{mapper} that maps input key/value pairs to intermediate ones as well as a \emph{reducer} which transforms intermediate values that share a key to a smaller set of values. An important requirement in the MapReduce framework is that the {\tt reduce} function must be associative. Now, consider the task of using the MapReduce framework to sum up sensor measurements obtained from an experiment.
In particular, each sensor value is either a real number or \texttt{None}, and the final result should be \texttt{None} if \emph{any} sensor value is \texttt{None}.  In order to implement this functionality, the user needs to write a reducer that takes sensor values \texttt{x}, \texttt{y} and returns their sum if neither of them is \texttt{None}, or returns \texttt{None} otherwise.  We can express this problem using the following relational specification:
\[
\small
\begin{array}{lll}
    & \texttt{reduce(1} , \texttt{2)} = \texttt{3} \land \texttt{reduce(1} , \texttt{None)} = \texttt{None} & (\emph{input-output examples}) \\
\land & \forall x, y, z. ~ \texttt{reduce(reduce($x, y$)$, z$)} = \texttt{reduce(}x, \texttt{reduce(} y, z \texttt{))} & (\emph{associativity}) \\
\end{array}
\]
The first part of the specification gives input-output examples to illustrate the desired functionality, and the latter part expresses the associativity requirement on {\tt reduce}. A possible solution to this relational synthesis problem is given by the following implementation:
\[\small
\begin{BVerbatim}
reduce(x,y) : if x == None then None
              else if y == None then None
              else x + y
\end{BVerbatim}
\]

\end{example}

\section{Preliminaries}

Since the rest of this paper requires knowledge of finite tree automata and their use in program synthesis, we first briefly review some background material.

\subsection{Finite Tree Automata}

A tree automaton is a type of state machine that recognizes trees rather than strings.  More formally, a (bottom-up) \emph{finite tree automaton} (FTA) is a tuple $\fta = (\ftaStates, \ftaAlphabet, \ftaFinal, \ftaTrans)$, where
\begin{itemize}
    \item $\ftaStates$ is a finite set of states.
    \item $\ftaAlphabet$ is an alphabet.
    \item $\ftaFinal \subseteq \ftaStates$ is a set of final states.
    \item $\ftaTrans \subseteq \ftaStates^* \times \ftaAlphabet \times \ftaStates$ is a set of transitions (or rewrite rules).
\end{itemize}

 Intuitively, a tree automaton $\fta$ recognizes a   term (i.e., tree) $t$ if we can rewrite $t$ into a final state $q \in \ftaFinal$ using the rewrite rules given by $\ftaTrans$.  

More formally, suppose we have a tree $t = (V, E, v_r)$ where $V$ is a set of nodes labeled with $\sigma \in \Sigma$, $E \subseteq V \times V$ is a set of edges, and $v_r \in V$ is the root node. A \emph{run} of an FTA $\fta = (\ftaStates, \ftaAlphabet, \ftaFinal, \ftaTrans)$ on $t$ is a mapping $\ftaRun: V \to \ftaStates$ compatible with $\ftaTrans$ (i.e., given node $v$ with label $\sigma$ and children $v_1, \ldots, v_n$ in the tree, the run $\ftaRun$ can only map $v, v_1, \ldots, v_n$ to $q, q_1, \ldots, q_n$  if there is a transition $\sigma(q_1, \ldots, q_n) \to q$ in $\ftaTrans$). 
Run $\ftaRun$ is said to be \emph{accepting} if it maps the root node of $t$ to a final state, and a tree $t$ is \emph{accepted by} an FTA $\fta$ if there exists an accepting run of $\fta$ on $t$. The \emph{language} $\lang(\fta)$ recognized by $\fta$ is the set of all those trees that are accepted by $\fta$.

\begin{wrapfigure}{R}{2.3in}
\centering
\includegraphics[scale=0.5]{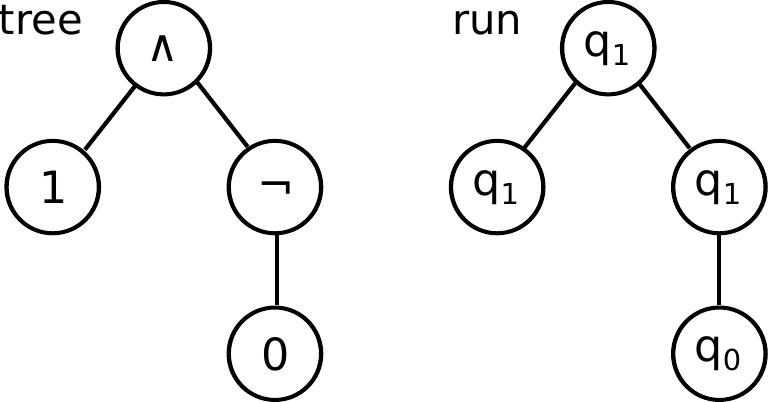}
\caption{Tree for $1 \land \neg 0$ and its accepting run.}
\label{fig:fta-ex}
\end{wrapfigure} 

\begin{example}
Consider a finite tree automaton $\fta = (\ftaStates, \ftaAlphabet, \ftaFinal, \ftaTrans)$ with states $\ftaStates = \set{q_0, q_1}$, alphabet $\ftaAlphabet = \set{0, 1, \neg, \land}$, final states $\ftaFinal = \set{q_1}$, and transitions $\ftaTrans$:
\[\small 
\begin{array}{l l l l}
1 \to q_1 & 0 \to q_0 & \land(q_0, q_0) \to q_0 & \land(q_0, q_1) \to q_0 \\
\neg(q_0) \to q_1 & \neg(q_1) \to q_0 & \land(q_1, q_0) \to q_0 & \land(q_1, q_1) \to q_1 \\
\end{array}
\]

Intuitively, the states of this FTA correspond to boolean constants (i.e., $q_0, q_1$ represent false and true respectively), and the transitions define the semantics of the boolean connectives $\neg$ and $\land$. Since the final state is $q_1$, the FTA accepts all boolean formulas (containing only $\land$ and $\neg$) that evaluate to $\emph{true}$. For example, $1 \land \neg 0$ is accepted by $\fta$, and the corresponding tree and its accepting run are shown in Fig.~\ref{fig:fta-ex}.
\end{example}

\subsection{Example-based Synthesis using FTAs}\label{sec:fta-construct}

Since our relational synthesis algorithm leverages prior work on example-based synthesis using FTAs~\cite{dace,syngar}, we briefly review how FTAs can be used for program synthesis.

At a high-level, the idea is to build an FTA that accepts \emph{exactly} the ASTs of those DSL programs that are consistent with the given input-output examples. The states of this FTA correspond to concrete values, and the transitions are constructed using the DSL's semantics. 
In particular, the FTA states correspond to output values of DSL programs on the input examples, and the final states of the FTA are determined by the given output examples. 
Once this FTA is constructed, the synthesis task boils down to finding an accepting run of the FTA. 
A key advantage of using FTAs for synthesis is to enable search space reduction by allowing sharing between programs that have the same input-output behavior.  

\begin{figure}[!t]
\small
\[
\begin{array}{cr}
\begin{array}{cc}
\irule{
\ex_{in}= (e_1, \ldots, e_n)
}{
\grammar, \ex \ \  \vdash \ \  \ftastate_{\inputsymbol_i}^{{\ex_i}} \in \ftastates, \ \  x_i \rightarrow q_{x_i}^{e_i} \in \transitions
} \ \ {\rm (Input)}
\ \ \ \ \ \ \ \ \ \ \ \ \ \ \ \ \ \ \ \ \ \ \  \ 
\irule{
}{
\grammar, \ex \ \ \vdash \ \   \ftastate_{\outputsymbol}^{\ex_\out}  \in \finalstates
} \ \  {\rm (Output)}
\end{array}
\\ \ \\
\begin{array}{l}
\irule{
\begin{array}{c}
(s \rightarrow \sigma(s_1, \ldots, s_n)) \in P \quad G, e \vdash q_{s_1}^{{c_1}} \in \ftastates, \ldots, q_{s_n}^{{c_n}} \in \ftastates 
\quad
c = \semantics{\sigma(c_{1}, \ldots, c_{n})}
\quad 
\end{array}
}{
\grammar = (T, \nonterminals, \productions, \outputsymbol), \ex \ \ \vdash  \ \ 
\ftastate_{s}^{{c}} \in \ftastates, \ \  \sigma(q_{s_1}^{{c_1}}, \ldots, q_{s_n}^{{c_n}}) \rightarrow \ftastate_{s}^{{c}}  \in \transitions
}\ \ {\rm (Prod)}
\end{array}
\end{array}
\]
\vspace{-10pt}
\caption{Rules for constructing FTA $\fta = (\ftastates, \alphabet, \finalstates, \transitions)$ given example $\ex = (\ex_\inp, \ex_\out)$ and grammar $\grammar =   (\terminals, \nonterminals, \productions, \outputsymbol)$. The alphabet $\alphabet$ of the FTA is exactly the terminals $T$ of $\grammar$.  }
\label{fig:fta-rules}
\vspace{-10pt}
\end{figure}

In more detail, suppose we are given a set of input-output examples $\exs$ and a context-free grammar $\grammar$ describing the target domain-specific language. 
We assume that $\grammar$ is of the form $(T, \nonterminals, \productions, \outputsymbol)$  where $T$ and $\nonterminals$ are the terminal and non-terminal symbols  respectively, $\productions$ is a set of productions of the form $s \rightarrow \sigma(s_1, \dots, s_n)$,
and $\outputsymbol \in \nonterminals$ is the topmost non-terminal (start symbol) in $\grammar$. We also assume that the program to be synthesized takes arguments  $x_1, \ldots, x_n$. 

Fig. \ref{fig:fta-rules} reviews the construction rules for a single example $\ex = (\ex_\inp, \ex_\out)$~\cite{syngar}. In particular, the Input rule creates $n$ initial states $q_{x_1}^{\ex_1}, \ldots, q_{x_n}^{\ex_n}$ for input example $\ex_\inp = (\ex_1, \ldots, \ex_n)$.  We then iteratively use the Prod rule to generate new states and transitions using the productions in grammar $\grammar$ and the concrete semantics of the DSL constructs associated with the production. Finally, according  to the Output rule, the only final state is $q_{s_0}^c$ where $s_0$ is the start symbol and $c$ is the output example $\ex_\out$. Observe that we can build an FTA for a \emph{set} of input-output examples by constructing the FTA for each example individually and then taking their intersection using standard techniques~\cite{tata}. 

\begin{wrapfigure}{R}{0.4\textwidth}
\vspace{-12pt}
\centering
\includegraphics[scale=0.7]{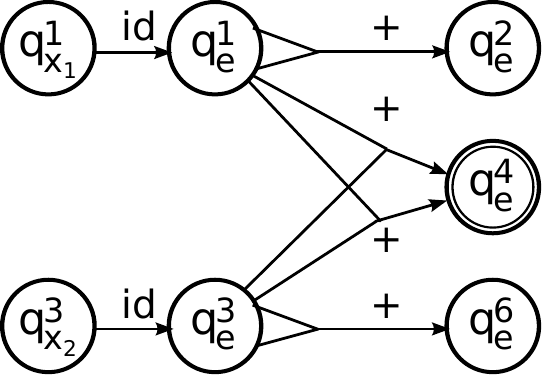}
\caption{FTA for input-output $(1, 3) \to 4$.}
\label{fig:fta-plus}
\vspace{-10pt}
\end{wrapfigure}

\vspace{2pt}
\noindent
{\bf \emph{Remark.}}  Since the rules in Fig.~\ref{fig:fta-rules} may generate an infinite set of states and transitions, the FTA is constructed by applying the Prod rule a finite number of times. The number of applications of the Prod rule is determined by a bound on the AST depth  of the synthesized program.

\begin{example}
Consider a very simple DSL specified by the following CFG, where  the notation $s \rightarrow s'$ is short-hand for $s \rightarrow \emph{id}(s')$ for a unary identity function $\emph{id}$:
\[ \hspace*{-160pt} e \ \ \rightarrow \ \ x_1 ~|~ x_2 ~|~ e + e \]
Suppose  we want to find a program (of size at most 3) that is consistent with the input-output example $(1, 3) \rightarrow 4$. Using the rules from Fig.~\ref{fig:fta-rules}, we can construct $\fta = (\ftastates, \alphabet, \finalstates, \transitions)$ with states $\ftastates = \set{q^1_{x_1}, q^3_{x_2}, q^1_e, q^2_e, q^3_e, q^4_e, q^6_e}$, alphabet $\alphabet = \set{x_1, x_2, \emph{id}, +}$, final states $\finalstates = \set{q^4_e}$, and transitions~$\transitions$:
\[\small 
\begin{array}{llll}
x_1 \to q^1_{x_1} & x_2 \to q^3_{x_2} & \emph{id}(q^1_{x_1}) \rightarrow q^1_e &  \emph{id}(q^3_{x_2}) \rightarrow q^3_e\\
+(q^1_e, q^1_e) \to q^2_e & +(q^1_e, q^3_e) \to q^4_e & +(q^3_e, q^1_e) \to q^4_e & +(q^3_e, q^3_e) \to q^6_e \\
\end{array}
\]
For example, the FTA contains the transition $+(q_e^3, q_e^1) \rightarrow q_e^4$ because the grammar contains a production $e \rightarrow e+e$ and we have $3+1 = 4$.  Since it is sometimes convenient to visualize FTAs as hypergraphs, Fig.~\ref{fig:fta-plus} shows a hypergraph representation of this FTA, using circles to indicate states, double circles to indicate final states, and labeled (hyper-)edges  to represent transitions.
The transitions of nullary alphabet symbols (i.e. $x_1 \to q^1_{x_1}, x_2 \to q^3_{x_2}$) are omitted in the hypergraph for brevity.
Note that the only two programs that are accepted by this FTA are $x_1 + x_2$ and $x_2 + x_1$.
\end{example}

\section{Hierarchical Finite Tree Automata}

As mentioned in Section~\ref{sec:intro}, our relational synthesis technique uses a version space learning algorithm that is based on the novel concept of \emph{hierarchical finite tree automata (HFTA)}. Thus, we first introduce HFTAs in this section before describing our relational synthesis algorithm.

A hierarchical finite tree automaton (HFTA) is a tree in which each node is annotated with a finite tree automaton (FTA). More formally, an HFTA is a tuple $\hfta = (\hftaNodes, \hftaAnnot, \hftaRoot, \hftaTrans)$ where
\begin{itemize}
\item 
$\hftaNodes$ is a finite set of nodes.
\item 
$\hftaAnnot$ is a mapping that maps each node $v \in \hftaNodes$ to a finite tree automaton.
\item 
$\hftaRoot \in \hftaNodes$ is the root node.
\item 
$\hftaTrans \subseteq \emph{States}(\emph{Range}(\hftaAnnot)) \times \emph{States}(\emph{Range}(\hftaAnnot))$ is a set of inter-FTA transitions.
\end{itemize}

Intuitively, an HFTA is a tree-structured (or hierarchical) collection of FTAs, where $\hftaTrans$ corresponds to the edges of the tree and specifies how to transition from a state of a child FTA to a state of its parent.
{
For instance, the left-hand side of Fig.~\ref{fig:hfta-ex} shows an HFTA, where the inter-FTA transitions  $\hftaTrans$ (indicated by dashed lines) correspond to the edges $(v_3, v_1)$ and $(v_3, v_2)$ in the tree.
}

Just as FTAs accept trees, HFTAs accept \emph{hierarchical trees}. 
Intuitively, as depicted in the right-hand side of Fig.~\ref{fig:hfta-ex}, a hierarchical tree organizes a collection of trees in a tree-structured (i.e., hierarchical) manner. 
More formally, a hierarchical tree is of the form $\htree = (\htreeNodes, \htreeAnnot, \htreeRoot, \htreeEdges)$, where:
\begin{itemize}
\item 
$\htreeNodes$ is a finite set of nodes.
\item 
$\htreeAnnot$ is a mapping that maps each node $v \in \htreeNodes$ to a tree.
\item 
$\htreeRoot \in \htreeNodes$ is the root node.
\item 
{$\htreeEdges \subseteq \htreeNodes \times \htreeNodes$ is a set of edges. } 
\end{itemize}

\begin{figure}[!t]
\centering
\includegraphics[scale=0.42]{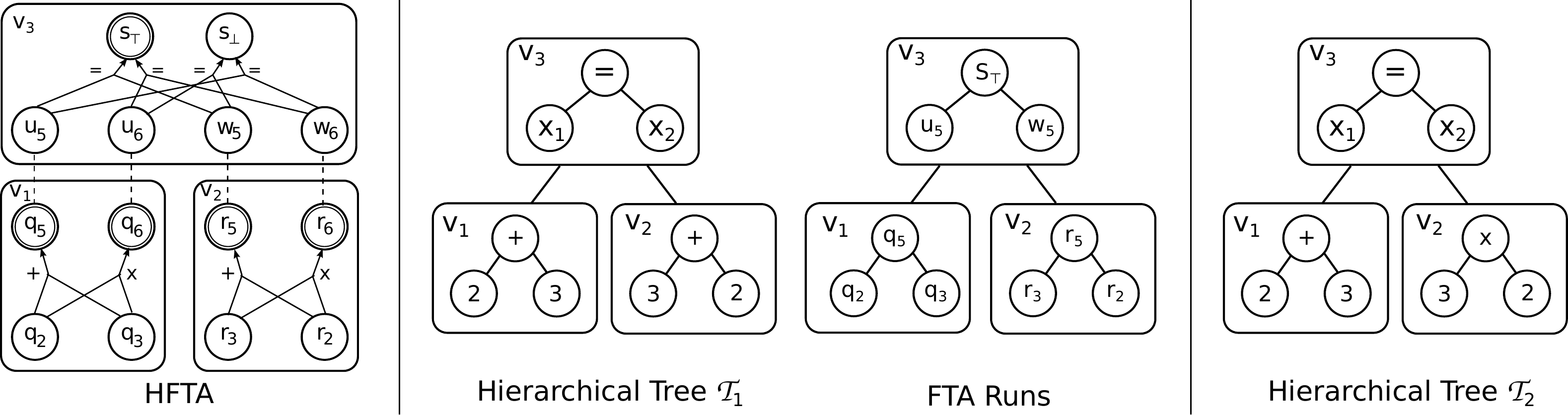}
\vspace{-7pt}
\caption{Example HFTA and hierarchical trees.
Left-hand side shows HFTA $\hfta = (\hftaNodes, \hftaAnnot, v_3, \hftaTrans)$, where nodes $\hftaNodes = \set{v_1, v_2, v_3}$ are represented by rectangles, and annotated FTAs are represented as hypergraphs inside the corresponding rectangles. Specifically, circles in the hypergraph correspond to FTA states, double circles indicate final states, and labeled (hyper-)edges correspond to transitions. Inter-FTA transitions $\hftaTrans$ are represented as dashed lines.
An example hierarchical tree $\htree_1 = (\htreeNodes, \htreeAnnot, v_3, \htreeEdges)$ that is \emph{accepted} by $\hfta$ is shown in the middle, where nodes $\htreeNodes = \set{v_1, v_2, v_3}$ are represented by rectangles, the tree annotation $\htreeAnnot(v_i)$ is depicted inside the corresponding rectangle $v_i$, and edges are $\htreeEdges = \set{(v_3, v_1), (v_3, v_2)}$. ``FTA runs'' shows the runs of the FTAs for each $v_i$ in $\hfta$ on the corresponding tree in $\htree_1$. 
On the right-hand side, we show another example of a hierarchical tree $\htree_2$ that is \emph{not accepted} by $\hfta$. 
}
\label{fig:hfta-ex}
\vspace{-5pt}
\end{figure}

We now define what it means for an HFTA to \emph{accept} a hierarchical tree:

\begin{definition}\label{def:hfta-accept}
A hierarchical tree $\htree = (\htreeNodes, \htreeAnnot, \htreeRoot, \htreeEdges)$ is accepted by an HFTA $\hfta = (\hftaNodes, \hftaAnnot, \hftaRoot, \hftaTrans)$~iff:
\begin{itemize}
\item 
For any node $v \in \htreeNodes$, the tree $\htreeAnnot(v)$ is accepted by FTA $\hftaAnnot(v)$. 
\item 
For any edge $(v, v') \in E$, there is an accepting run $\pi$ of FTA $\hftaAnnot(v)$ on tree $\htreeAnnot(v)$ and an accepting run $\pi'$ of $\hftaAnnot(v')$ on $\htreeAnnot(v')$ such that there exists a unique leaf node $l \in \htreeAnnot(v)$ where we have $\pi' \big(\emph{Root}(\htreeAnnot(v')) \big) \rightarrow \pi(l) \in \hftaTrans$. 
\end{itemize}
\end{definition}

In other words, a hierarchical tree $\htree$ is accepted by an HFTA $\hfta$ if the tree at each node $v$ of $\htree$ is accepted by the corresponding FTA $\hftaAnnot(v)$ of $\hfta$, and the runs of the individual FTAs can be ``stitched'' together according to the inter-FTA transitions of  $\hfta$.
The language of an HFTA $\hfta$, denoted $\lang(\hfta)$, is the set of all hierarchical trees accepted by $\hfta$.

\begin{example}

Consider the HFTA $\hfta = (\hftaNodes, \hftaAnnot, v_3, \hftaTrans)$ shown on the left-hand side of Fig.~\ref{fig:hfta-ex}. This HFTA has nodes $\hftaNodes = \set{v_1, v_2, v_3}$ annotated with FTAs as follows:
\[ \hftaAnnot = \set{v_1 \mapsto \fta_1, v_2 \mapsto \fta_3, v_3 \mapsto \fta_3}\]
where:

$\fta_1 = (\set{q_2, q_3, q_5, q_6}, \set{2, 3, +, \times}, \set{q_5, q_6}, \ftaTrans_1)$ with transitions $\ftaTrans_1$:
\[\small 
\begin{array}{llll}
2 \to q_2 & 3 \to q_3 & +(q_2, q_3) \to q_5 & \times(q_2, q_3) \to q_6 \\
\end{array}
\] 

$\fta_2 = (\set{r_2, r_3, r_5, r_6}, \set{2, 3, +, \times}, \set{r_5, r_6}, \ftaTrans_2)$ with transitions $\ftaTrans_2$:
\[\small
\begin{array}{lllll}
2 \to r_2 & 3 \to r_3 & +(r_3, r_2) \to r_5 & \times(r_3, r_2) \to r_6 \\
\end{array}
\]

$\fta_3 = (\set{u_5, u_6, w_5, w_6, s_\top, s_\bot}, \set{x_1, x_2, =}, \set{s_\top}, \ftaTrans_3)$ with transitions $\ftaTrans_3$:
\[\small 
\begin{array}{llll}
x_1 \to u_5 & x_1 \to u_6 & =(u_5, w_5) \to s_\top & =(u_6, w_5) \to s_\bot \\
x_2 \to w_5 & x_2 \to w_6 & =(u_5, w_6) \to s_\bot & =(u_6, w_6) \to s_\top \\
\end{array}
\]
Finally, $\hftaTrans$ includes the following four transitions: 
\[\small 
\begin{array}{llll}
q_5 \to u_5 \ \ \ \ \ \ \ & q_6 \to u_6 \ \ \ \ \ \ \ & r_5 \to w_5 \ \ \ \ \ \ \ & r_6 \to w_6 
\end{array}
\]

Intuitively, this HFTA accepts ground first-order formulas of the form $t_1 = t_2$ where $t_1$ (resp. $t_2$) is a ground term accepted by $\fta_1$ (resp. $\fta_2$) such that $t_1$ and $t_2$ are equal.
For example, $t_1 = 2 + 3$ and $t_2 = 3 + 2$ are collectively accepted. However, $t_1 = 2 + 3$ and $t_2 = 3 \times 2$ are not accepted because $t_1$ is not equal to $t_2$.
Fig.~\ref{fig:hfta-ex} shows a hierarchical tree $\htree_1$ that is accepted by $\hfta$ and another one $\htree_2$ that is rejected by $\hfta$. 
\end{example}

\section{Relational Program Synthesis using HFTA} \label{sec:synth}

In this section, we  describe  the high-level structure of our relational synthesis algorithm and explain its sub-procedures in the following subsections.

Our top-level synthesis algorithm is shown in Algorithm~\ref{algo:cegis}. The procedure {\sc Synthesize} takes as input a tuple of function symbols $\functions = (f_1, \ldots, f_n)$, their corresponding context-free grammars $\grammars = (G_1, \ldots, G_n)$, and a relational specification $\Psi = \forall \vec{x}. \ \phi(\vec{x})$. 
The output is a mapping $\programs$ from each function symbol $f_i$ to a program in grammar $G_i$ such that these programs collectively satisfy the given relational specification $\Psi$, or \emph{null} if no such solution exists. 

As mentioned in Section~\ref{sec:intro}, our synthesis algorithm is based on the framework of counterexample-guided inductive synthesis (CEGIS)~\cite{cegis}. 
Specifically, given candidate programs $\programs$, the verifier checks whether these programs satisfy the relational specification $\Psi$. If so, the CEGIS loop terminates with $\programs$ as a solution (line 6). 
Otherwise, we obtain a \emph{relational counterexample} in the form of a \emph{ground formula} and use it to strengthen the current \emph{ground relational specification} $\Phi$ (line 7). 
In particular, a relational counterexample can be obtained by instantiating the quantified variables in the relational specification $\Psi$ with concrete inputs that violate $\Psi$.
Then, in the \emph{inductive synthesis phase}, the algorithm finds candidates $\programs$ that satisfy the new ground relational specification $\Phi$ (lines 8-11) and repeats the aforementioned process. 
Initially, the candidate programs $\programs$ are chosen at random, and the ground relational specification $\Phi$ is \emph{true}.

Since the verification procedure is not a contribution of this paper, the rest of this section focuses on the inductive synthesis phase  which proceeds in three steps:
\begin{enumerate}[leftmargin=*]
\item {\bf \emph{Relaxation:}} 
Rather than directly taking $\Phi$ as the inductive specification, 
our algorithm first constructs a relaxation  $\Phi'$ of $\Phi$ (line 8) by replacing each occurrence of function $f_i$ in $\Phi$ with a fresh symbol ({e.g., $f_1(f_1(1)) = 1$ is converted to $f_{1,1}(f_{1,2}(1)) = 1$}). This strategy relaxes the requirement that different occurrences of a function symbol must have the same implementation and allows us to construct our relational version space in a \emph{compositional} way. 
\vspace{1pt}
\item {\bf \emph{HFTA construction:}} 
Given the relaxed specification $\Phi'$, the next step is to construct an HFTA whose language consists of exactly those programs that satisfy $\Phi'$ (line 9). 
Our HFTA construction follows the recursive decomposition of $\Phi'$ and allows us to compose the individual version spaces of each subterm in a way that the final HFTA is consistent with $\Phi'$. 
\vspace{1pt}
\item  {\bf \emph{Enforcing functional consistency:}} 
Since $\Phi'$ relaxes the original ground relational specification $\Phi$, the programs for different occurrences of the same function symbol in $\Phi$ that are accepted by the HFTA from Step (2) may be different. 
Thus, in order to guarantee functional consistency, our algorithm looks for programs that are accepted by the HFTA where all occurrences of the same function symbol correspond to the same program (lines 10-11).
\end{enumerate}

The following subsections discuss each of these three steps in more detail.

\begin{figure}[!t]
\vspace{-10pt}
\begin{algorithm}[H]
\caption{Top-level algorithm for inductive relational program synthesis.}
\label{algo:cegis}
\begin{algorithmic}[1]
\Procedure{\textsc{Synthesize}}{$\functions, \grammars, \Psi$}
\vspace{2pt}
\small 
\Statex 
\textbf{input:} Function symbols $\functions = (f_1, \ldots, f_n)$, grammars $\grammars = (G_1, \ldots, G_n)$, and specification $\Psi = \forall \vec{x}.~\phi(\vec{x})$. 
\Statex 
\textbf{output:} Mapping $\programs$ that maps function symbols in $\functions$ to programs. 
\vspace{2pt}
\State 
$\programs \gets \programs_{\text{random}}$; 
\State $\Phi \gets \emph{true}$; 
\While{\emph{true}}
\If{$\programs = \emph{null}$ } \Return $\emph{null}$; \EndIf
\If{$\textsf{Verify}(\programs, \Psi)$} \Return $\programs$; \EndIf 
\State $\Phi \gets \Phi \land \textsf{GetRelationalCounterexample}(\programs, \Psi)$; 
\vspace{1pt}
\State $(\Phi', \occurs) \gets \textsc{Relax}(\Phi, \functions)$;
\State $\hfta \gets \textsc{BuildHFTA}(\Phi', \grammars, \occurs)$;
\State $\programs_{\bot} \gets \set{f_1 \mapsto \bot, \ldots, f_n \mapsto \bot}$;
\State $\programs \gets \textsc{FindProgs}(\hfta, \programs_{\bot}, \occurs)$;
\EndWhile
\EndProcedure
\end{algorithmic}
\end{algorithm}
\vspace{-15pt}
\end{figure}

\subsection{Specification Relaxation} \label{sec:relax}

Given a ground relational specification $\Phi$ and a set of functions $\functions$ to be synthesized, the {\sc Relax} procedure generates a relaxed specification $\Phi'$ as well as a mapping $\occurs$ from each function $f \in \functions$ to a \emph{set} of fresh function symbols that represent different occurrences of $f$ in $\Phi$. As mentioned earlier, this relaxation strategy allows us to build a relational version space in a compositional way by composing the individual version spaces for each subterm. In particular, constructing an FTA for a function symbol $f \in \functions$ requires knowledge about the possible values of its arguments, which can introduce circular dependencies if we have multiple occurrences of $f$ (e.g., $f(f(1))$). The relaxed specification, however, eliminates such circular dependencies and allows us to build a relational version space in a compositional way.

We present our specification relaxation procedure, {\sc Relax}, in the form of inference rules shown in Fig.~\ref{fig:relax}. Specifically, Fig.~\ref{fig:relax} uses judgments of the form  $\functions \vdash \Phi \hookrightarrow (\Phi', \occurs)$, meaning that ground specification $\Phi$ is transformed into $\Phi'$ and $\occurs$ gives the mapping between the function symbols used in $\Phi$ and $\Phi'$. In particular, a mapping $f \mapsto \{ f_1, \ldots, f_n\}$ in $\occurs$ indicates that function symbols $f_1, \ldots, f_n$ in $\Phi'$ all correspond to the same function $f$ in $\Phi$.
It is worthwhile to point out that there are no inference rules for variables in Fig.~\ref{fig:relax} because specification $\Phi$ is a ground formula that does not contain any variables.

Since the relaxation procedure is pretty straightforward, we do not explain the rules in Fig.~\ref{fig:relax} in detail. At a high level, we recursively transform all subterms  and combine the results using a special union operator $\Cup$ defined as follows:
\[\small 
(\occurs_1 \Cup \occurs_2)(f) = 
\left \{
\begin{array}{ll}
\emptyset & {\rm if} f \not \in \emph{dom}(\occurs_1) \text{ and } f \not\in \emph{dom}(\occurs_2) \\
\occurs_i(f) & {\rm if} f  \in \emph{dom}(\occurs_i) \text{ and } f  \not \in \emph{dom}(\occurs_j) \\
\occurs_1(f) \cup \occurs_2(f) & {\rm otherwise} \\
\end{array}
\right .
\]

In the remainder of this paper, we also use the notation $\occurs^{-1}$ to denote the inverse of $\occurs$. That is, $\occurs^{-1}$ maps each function symbol in  the relaxed specification $\Phi'$ to a unique function symbol in the original specification $\Phi$ (i.e., $
f = \occurs^{-1}(f_i) \ \Leftrightarrow \ f_i \in \occurs(f)
$).

\begin{figure}[!t]
\centering
\small
\[
\begin{array}{c}
\irulelabel
{ }
{ \functions \vdash c \hookrightarrow (c, \emptyset) }
{ \textrm{(Const)} }

\ \ \ \ 

\irulelabel
{ \begin{array}{c}
f \in \functions \qquad f' ~ \textrm{is fresh} \qquad \functions \vdash t_i \hookrightarrow (t_i', \occurs_i) \qquad i = 1, \ldots, m \\
\end{array}}
{ \functions \vdash f(t_1, \ldots, t_m) \hookrightarrow (f'(t_1', \ldots, t_m'), ~ \Cup^{m}_{i=1} \occurs_i \Cup \set{f \mapsto \set{f'}}) }
{ \textrm{(Func)} }

\vspace{0.20in}
\\

\irulelabel
{ \begin{array}{c}
\functions \vdash \psi_1 \hookrightarrow (\psi_1', \occurs_1) \qquad \functions \vdash \psi_2 \hookrightarrow (\psi_2', \occurs_2) \\
\end{array}}
{ \functions \vdash \psi_1 ~ \emph{op} ~ \psi_2  \hookrightarrow (\psi_1' ~ \emph{op} ~ \psi_2', ~ \occurs_1 \Cup \occurs_2) }
{ \textrm{(Logical)} }

\qquad

\irulelabel
{ \begin{array}{c}
\functions \vdash \Phi \hookrightarrow (\Phi', \occurs) \\
\end{array}}
{ \functions \vdash \neg \Phi \hookrightarrow (\neg \Phi', \occurs) }
{ \textrm{(Neg)} }

\end{array}
\]
\caption{Inference rules for relaxing inductive relational specification $\Phi$ to $\Phi'$. $\occurs$ maps each function symbol in $\Phi$ to its occurrences in $\Phi'$. $\Cup$ is a special union operator for $\occurs$ and $\psi$ denotes either a term or a formula.}
\label{fig:relax}
\end{figure}

\subsection{HFTA Construction} \label{sec:build}

We now turn our attention to the relational version space learning algorithm using hierarchical finite tree automata. Given a relaxed relational specification $\Phi$, our algorithm builds an HFTA $\hfta$ that recognizes exactly those programs that satisfy $\Phi$. 
Specifically, each node in $\hfta$ corresponds to a subterm (or subformula) in $\Phi$. For instance, consider the subterm $f(g(1))$ where $f$ and $g$ are functions to be synthesized. In this case, the HFTA contains  a node  $v_1$ for $f(g(1))$ whose child $v_2$ represents the subterm $g(1)$. The inter-FTA transitions represent data flow from the children subterms to the parent term, and the FTA for each node $v$ is constructed according to the grammar of the top-level constructor of the subterm represented by $v$.

Fig.~\ref{fig:build-hfta} describes HFTA construction in more detail using the judgment $\grammars, \occurs \vdash \Phi \twoheadrightarrow \hfta$ where $\grammars$ describes the DSL for each function symbol~\footnote{If there is an interpreted function in the relational specification, we assume that a trivial grammar with a single production for that function is added to $\grammars$ by default.}
and $\occurs$ is the mapping constructed in Section~\ref{sec:relax}. The meaning of this judgment is that, under CFGs $\grammars$ and mapping $\occurs$, HFTA $\hfta$ represents exactly those programs that are consistent with $\Phi$.
In addition to the judgment $\grammars, \occurs \vdash \Phi \twoheadrightarrow \hfta$, Fig.~\ref{fig:build-hfta} also uses an auxiliary judgment of the form $\grammars, \occurs \vdash \Phi \leadsto \hfta$ to construct the HFTAs for the subformulas and subterms in the specification $\Phi$.

Let us now consider the HFTA construction rules from Fig.~\ref{fig:build-hfta} in more detail. The first rule, called Const, is the base case of our algorithm and builds an HFTA node for a constant $c$. Specifically, we introduce a fresh node $v_c$ and construct a trivial FTA $\fta$ that accepts only constant $c$.

The next rule, Func, shows how to build an HFTA for a function term $f(t_1, \ldots, t_m)$.
In this rule, we first recursively build the HFTAs $\hfta_i = (\hftaNodes{_i}, \hftaAnnot_i, v_{r_i}, \hftaTrans_i)$ for each subterm $t_i$. 
Next, we use a procedure called {\sc BuildFTA} to build an FTA $\fta$ for the function symbol $f$ that takes $m$ arguments $x_1, \dots, x_m$.
The {\sc BuildFTA} procedure is shown in Fig.~\ref{fig:buildFTA} and is a slight adaptation of the example-guided FTA construction technique described in Section~\ref{sec:fta-construct}: {Instead of assigning the input example to each argument $x_i$, our \textsc{BuildFTA} procedure assigns possible values that $x_i$ may take  based on the HFTA $\hfta_i$ for subterm $t_i$.
Specifically, let $Q_{f_i}$ denote the final states of the FTA $\fta_{f_i}$ associated with the root node of $\hfta_i$. If there is a final state $q_{s}^c $ in $Q_{f_i}$, this indicates that argument $x_i$ of $f$ may take value $c$; thus our {\sc BuildFTA} procedure adds a state $q_{x_i}^c$ in the FTA $\fta$ for $f$ (see the Input rule in Fig.~\ref{fig:buildFTA}).
The transitions and new states are built exactly as described in Section~\ref{sec:fta-construct} (see the Prod rule of Fig.~\ref{fig:buildFTA}), and we mark every state $q_{s_0}^c$ as a final state if $s_0$ is the start symbol in $f$'s grammar.
To avoid getting an FTA of infinite size, we impose a predefined bound on how many times the Prod rule can be applied when building the FTA.
Using this FTA $\fta$ for $f$ and the HFTAs $\hfta_i$ for $f$'s subterms, we now construct a bigger HFTA $\hfta$ as follows: First, we introduce a new node $v_f$ for function $f$ and annotate it with $\fta$. Second, we add appropriate inter-FTA transitions in $\hftaTrans$ to account for the binding of formal to actual parameters. Specifically, if $q_s^c$ is a final state of $\fta_{f_i}$, we add a transition $q_s^c \rightarrow q_{x_i}^c$ to $\hftaTrans$. Observe that our HFTA is compositional in that there is a separate node for every subterm, and the inter-FTA transitions account for the flow of information from each subterm to its parent.

\begin{figure}[!t]
\vspace{-5pt}
\centering
\small
\[
\begin{array}{c}

\irulelabel
{ \begin{array}{c}
\hftaNodes = \set{v_c} \qquad \fta = ( \big\{ q^{\denot{c}}_c \big\}, \set{c}, \big\{ q^{\denot{c}}_c \big\}, \big\{ c \to q^{\denot{c}}_c \big\} ) \qquad \hftaAnnot = \set{v_c \mapsto \fta} \\
\end{array}}
{ \grammars, \occurs \vdash c \leadsto (\hftaNodes, \hftaAnnot, v_c, \emptyset) }
{ \textrm{(Const)} }

\vspace{0.20in}
\\

\irulelabel
{ \begin{array}{c}
\grammars, \occurs \vdash t_i \leadsto (\hftaNodes_i, \hftaAnnot_i, v_{r_i}, \hftaTrans_i) \qquad \hftaAnnot_i(v_{r_i}) = (\ftaStates_i, \ftaAlphabet_i, Q_{f_i}, \ftaTrans_i) \qquad i = 1, \ldots, m \\
\hftaAnnot = \bigcup^{m}_{i=1} \hftaAnnot_i \cup \big\{ v_f \mapsto \textsc{BuildFTA} \big( \grammars(\occurs^{-1}(f)), [Q_{f_1}, \ldots, Q_{f_m}] \big) \big\} \\
\hftaTrans = \bigcup^{m}_{i=1} \hftaTrans_i \cup \big\{ q^c_s \to q^c_{x_i} \ \big| \ q^c_s \in Q_{f_i}, i \in [1, m] \big\} \\
\end{array}}
{ \grammars, \occurs \vdash f(t_1, \ldots, t_m) \leadsto (\bigcup^{m}_{i=1} \hftaNodes_i \cup \set{v_f}, \hftaAnnot, v_f, \hftaTrans) }
{ \textrm{(Func)} }

\vspace{0.20in}
\\

\hspace{-0.15in}
\irulelabel
{ \begin{array}{rl}
\grammars, \occurs \vdash \psi_i \leadsto (\hftaNodes_i, \hftaAnnot_i, v_{r_i}, \hftaTrans_i) & \hftaAnnot_i(v_{r_i}) = (\ftaStates_i, \ftaAlphabet_i, Q_{f_i}, \ftaTrans_i) \qquad i = 1, 2 \\
\ftaStates = \big\{ q^{\bot}_{s_0}, q^{\top}_{s_0} \big\} \cup \big\{ q^{c_i}_{x_i} \ \big| \ q^{c_i}_{s_i} \in Q_{f_i}, i = 1, 2 \big\} & \ftaTrans = \big\{ \emph{op}(q^{c_1}_{x_1}, q^{c_2}_{x_2}) \to q^c_{s_0} \ \big| \ c = \denot{\emph{op}}(c_1, c_2) \big\} \\
\hftaAnnot = \hftaAnnot_1 \!\cup\! \hftaAnnot_2 \!\cup\! \big\{ v_{\emph{op}} \mapsto (\ftaStates, \set{x_1, x_2, \emph{op}}, \big\{ q^{\bot}_{s_0}, q^{\top}_{s_0} \big\}, \ftaTrans) \big\} & \hspace{-0.08in} \hftaTrans = \hftaTrans_1 \!\cup\! \hftaTrans_2 \!\cup\! \big\{ q^{c}_{s_i} \to q^{c}_{x_i} \ \big| \ q^c_{s_i} \in Q_{f_i}, i = 1, 2 \big\} \\
\end{array}}
{ \grammars, \occurs \vdash \psi_1 ~ \emph{op} ~ \psi_2  \leadsto (\hftaNodes_1 \cup \hftaNodes_2 \cup \set{v_{\emph{op}}}, \hftaAnnot, v_{\emph{op}}, \hftaTrans) }
{ \hspace{-0.1in} \textrm{(Logical)} }

\vspace{0.20in}
\\

\irulelabel
{ \begin{array}{rl}
\grammars, \occurs \vdash \Phi \leadsto (\hftaNodes, \hftaAnnot, v_{r}, \hftaTrans) & \hftaAnnot(v_{r}) = (\ftaStates, \ftaAlphabet, Q_{f}, \ftaTrans) \\
\ftaStates^\prime = \big \{ q^{\bot}_{x_1}, q^{\top}_{x_1}, q^{\bot}_{s_0}, q^{\top}_{s_0} \big\} & \ftaTrans^\prime = \big\{ \neg(q^{\bot}_{x_1}) \to q^{\top}_{s_0}, \neg(q^{\top}_{x_1}) \to q^{\bot}_{s_0} \big\} \\
\hftaAnnot^\prime = \hftaAnnot \cup \big\{ v_\neg \mapsto \big( \ftaStates^\prime, \set{x_1, \neg}, \big\{ q^{\top}_{s_0}, q^{\bot}_{s_0} \big\}, \ftaTrans^\prime \big) \big\} & \hftaTrans^\prime = \hftaTrans \cup \big\{ q^{c}_{s} \to q^{c}_{x_1} \ \big| \ q^c_{s} \in Q_{f} \big\} \\
\end{array}}
{ \grammars, \occurs \vdash \neg \Phi \leadsto (\hftaNodes \cup \set{v_\neg}, \hftaAnnot^\prime, v_\neg, \hftaTrans^\prime) }
{ \textrm{(Neg)} }

\vspace{0.20in}
\\

\irulelabel
{ \begin{array}{c}
\grammars, \occurs \vdash \Phi \leadsto (\hftaNodes, \hftaAnnot, \hftaRoot, \hftaTrans) \qquad \hftaAnnot(\hftaRoot) = (\ftaStates, \ftaAlphabet, \ftaFinal, \ftaTrans) \\
\hftaAnnot' = \hftaAnnot \big[ \hftaRoot \mapsto (\ftaStates, \ftaAlphabet, \big\{ q^{\top}_{s_0} \big\}, \ftaTrans) \big] \\
\end{array}}
{ \grammars, \occurs \vdash \Phi \twoheadrightarrow (\hftaNodes, \hftaAnnot', \hftaRoot, \hftaTrans) }
{ \textrm{(Final)} }

\end{array}
\]
\caption{Rules for constructing HFTA $\hfta = (\hftaNodes, \hftaAnnot, \hftaRoot, \hftaTrans)$ for ground relational specification $\Phi$ from grammars $\grammars$, and function symbol mapping $\occurs$. Every node $v$ is a fresh node. $\psi$ denotes either a term or a formula.
Every variable $x_i$ in function $f$ is assumed to be annotated by $f$ to enforce its global uniqueness.
}
\label{fig:build-hfta}
\end{figure}

\begin{figure}[!t]
\vspace{-10pt}
\small
\[
\begin{array}{cr}
\begin{array}{ccc}
\irule{
q_s^c \in Q_i 
}{
\grammar, \vec{Q} \ \  \vdash \ \  \ftastate_{\inputsymbol_i}^{{c}} \in \ftastates, \ \ x_i \rightarrow q_{x_i}^c \in \transitions 
} \ \ {\rm (Input)} \quad \quad
\irule{
\grammar, \vec{Q} \vdash q_{s_0}^c \in Q
}{
\grammar, \vec{Q} \ \ \vdash \ \  q_{s_0}^c  \in \finalstates
} \ \  {\rm (Output)}
\end{array}
\\ \\
\irule{
\begin{array}{c}
\big( s \rightarrow \sigma(s_1, \ldots, s_n) \big) \in P \quad G, \vec{Q} \vdash q_{s_1}^{{c_1}} \in \ftastates, \ldots, q_{s_n}^{{c_n}} \in \ftastates 
\quad
c = \semantics{\sigma(c_{1}, \ldots, c_{n})}
\quad 
\end{array}
}{
\grammar = (T, \nonterminals, \productions, \outputsymbol), \vec{Q} \ \ \vdash  \ \ 
\ftastate_{s}^{{c}} \in \ftastates, ~ \sigma(q_{s_1}^{{c_1}}, \ldots, q_{s_n}^{{c_n}}) \rightarrow \ftastate_{s}^{{c}}  \in \transitions
}\ \ {\rm (Prod)}
\end{array}
\]
\caption{
Auxiliary \textsc{BuildFTA} procedure to construct an FTA $\fta = (\ftastates, \alphabet, \finalstates, \transitions)$ for function $f$ that takes $m$ arguments $x_1, \ldots, x_m$. $\grammar$ is a context-free grammar of function $f$. $\vec{Q} = (Q_1, \ldots, Q_m)$ is a vector where $Q_i$ represents the initial state set for argument $x_i$.
}
\label{fig:buildFTA}
\end{figure}

The next rule, called Logical, shows how to build an HFTA for a subterm $\psi_1 \ op \ \psi_2$ where \emph{op} is either a relation constant (e.g., $=, \leq$) or a binary logical connective (e.g., $\land, \leftrightarrow$). 
As in the previous rule, we first recursively construct the HFTAs $\hfta_i = (\hftaNodes{_i},\hftaAnnot_i, v_{r_i}, \hftaTrans_i)$ for the subterms $\psi_1$ and $\psi_2$. 
Next, we introduce a fresh HFTA node $v_{\emph{op}}$ for the subterm ``$\psi_1 \ op \ \psi_2$'' and construct its corresponding FTA $\fta$ using \emph{op}'s semantics. 
The inter-FTA transitions between $\fta$ and the FTAs annotating the root nodes of $\hfta_1$ and $\hfta_2$ are constructed in the same way as in the previous rule.~\footnote{In fact, the Logical and Neg rules could be viewed as special cases of the Func rule where \emph{op} and $\neg$ have a trivial grammar with a single production. However, we separate these rules for clarity of presentation. } Since the Neg rule for negation is quite similar to the Logical rule, we elide its description in the interest of space.

The last rule, Final, in Fig.~\ref{fig:build-hfta} simply assigns the final states of the constructed HFTA. Specifically, given  specification $\Phi$, we first use the auxiliary judgment $\leadsto$ to construct the HFTA $\hfta$ for $\Phi$. However, since we  want to accept only those programs that satisfy $\Phi$, we change the final states of the FTA that annotates the root node of $\hfta$ to be $\{q_{s_0}^\top\}$ rather than   $\{q_{s_0}^\top, q_{s_0}^\bot \}$.

\begin{figure}[!t]
\centering
\includegraphics[scale=0.57]{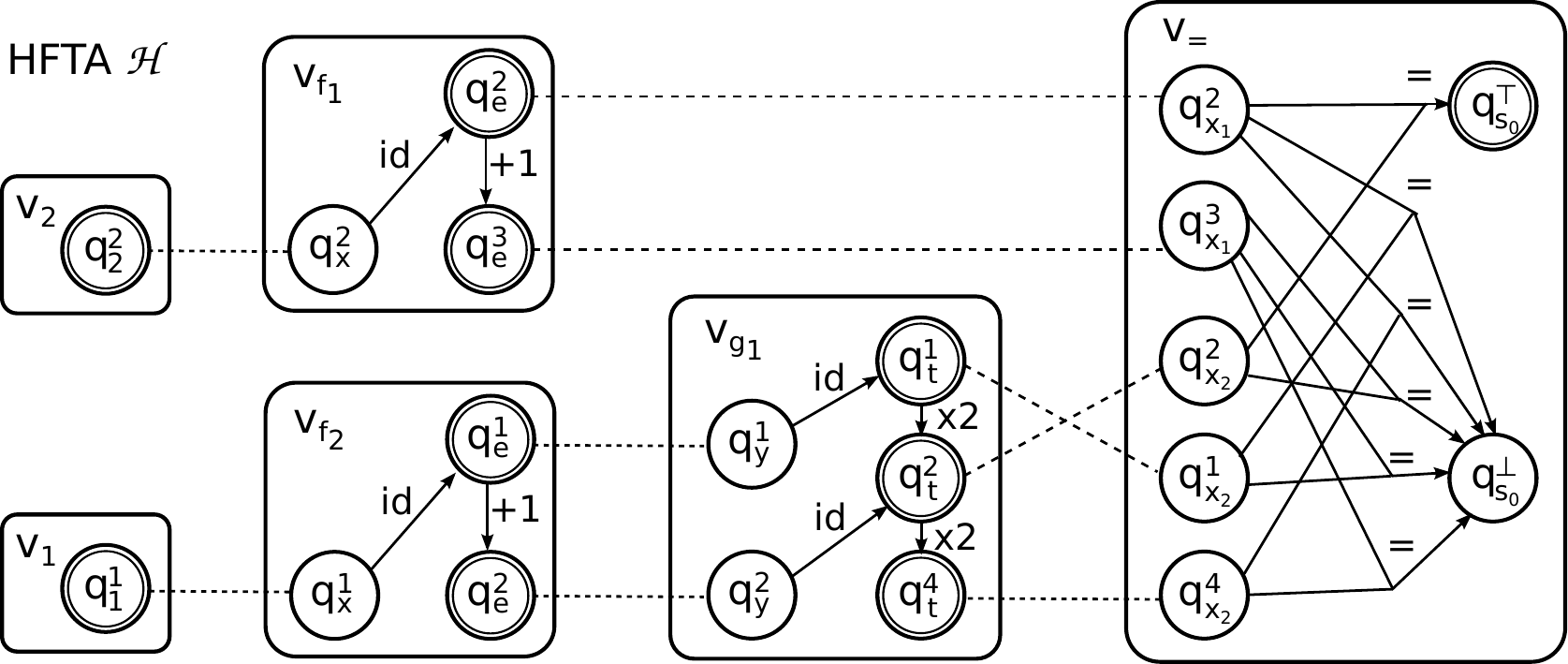}

\rule{0.75\textwidth}{0.5pt}

\vspace{0.05in}
\includegraphics[scale=0.57]{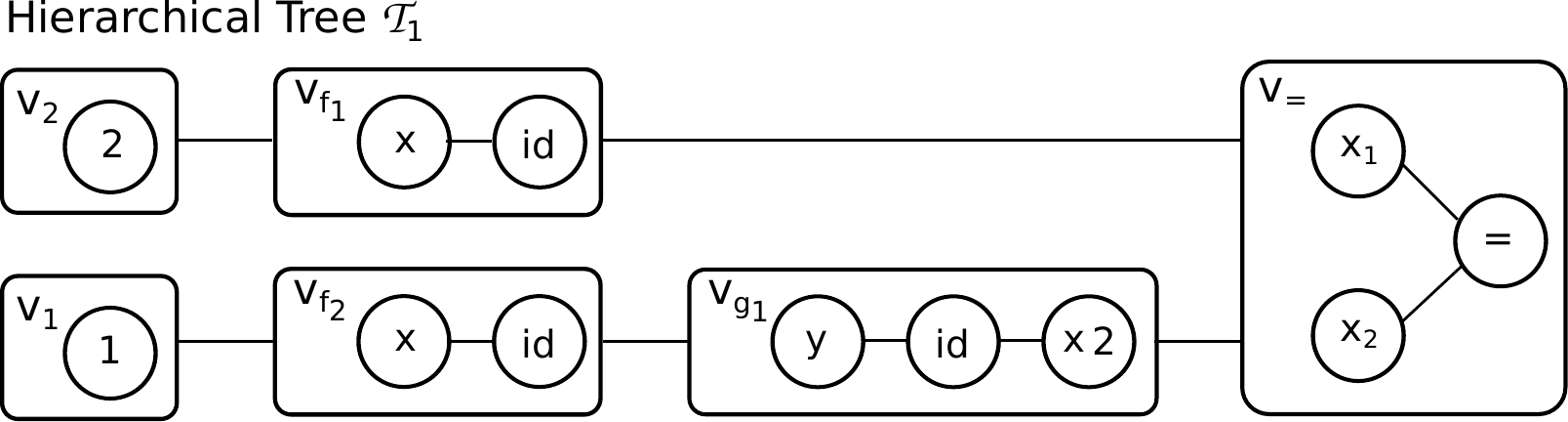}

\rule{0.75\textwidth}{0.5pt}

\vspace{0.05in}
\includegraphics[scale=0.57]{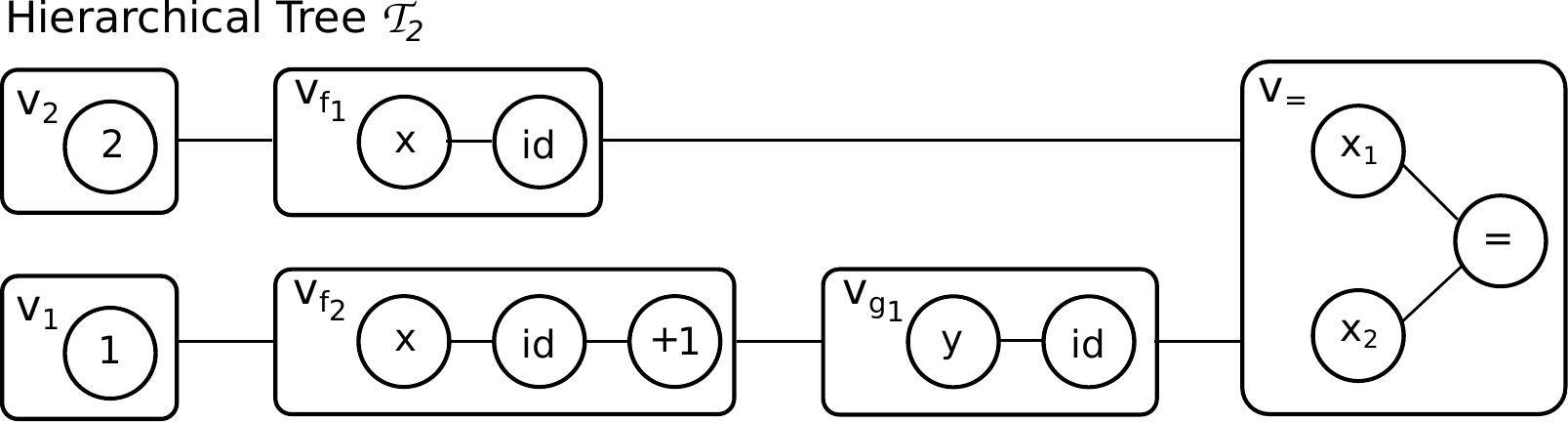}
\vspace{-5pt}
\caption{HFTA $\hfta = (\hftaNodes, \hftaAnnot, v_=, \hftaTrans)$ for specification $f_1(2) = g_1(f_2(1))$. FTAs are represented as hypergraphs, where circles correspond to  FTA states, double circles indicate final states, and labeled (hyper-)edges correspond to transitions. Inter-FTA transitions $\hftaTrans$ are represented as dashed lines. Hierarchical trees $\htree_1$ and $\htree_2$ are both accepted by~$\hfta$.}
\label{fig:hfta-plus-mult}
\vspace{-0.05in}
\end{figure}

\begin{example}\label{ex:hfta-plus-mult}{
Consider the  relational specification $\Phi: f(2) = g(f(1))$, where the DSL for $f$ is $e \rightarrow x ~|~ e + 1$ and the DSL for $g$ is $t \rightarrow y ~|~ t \times 2$ (Here, $x,y$ denote the inputs of $f$ and $g$ respectively). We now explain HFTA construction for this specification.
\begin{itemize}[leftmargin=*]
\item 
First, the relaxation procedure transforms the specification $\Phi$ to a relaxed version $\Phi': f_1(2) = g_1(f_2(1))$ and produces a mapping $\occurs = \big\{ f \mapsto \set{f_1, f_2}, ~ g \mapsto \set{g_1} \big\}$ relating the symbols in $\Phi, \Phi'$. 
\item 
Fig.~\ref{fig:hfta-plus-mult} shows the HFTA $\hfta$ constructed for $\Phi'$ where we view $+1$ and $\times 2$ as  unary functions for ease of illustration. 
Specifically, by the \textrm{Const} rule of Fig.~\ref{fig:build-hfta}, we build two nodes $v_1$ and $v_2$ that correspond to two FTAs that only accept constants 1 and 2, respectively. This sets up the initial state set of FTAs corresponding to $f_2$ and $f_1$, which results in two HFTA nodes $v_{f_2}$ and $v_{f_1}$ and their corresponding FTAs $\hftaAnnot(v_{f_2})$ and $\hftaAnnot(v_{f_1})$ constructed according to the \textrm{Func} rule. (Note that we build the FTAs using only two applications of the Prod rule.) Similarly, we build a node $v_{g_1}$ and FTA $\hftaAnnot(v_{g_1})$ by taking the final states of $\hftaAnnot(v_{f_2})$ as the initial state set of $\hftaAnnot(v_{g_1})$. Finally, the node $v_=$ along with its FTA $\hftaAnnot(v_=)$ is built by the \textrm{Logical} rule.
\item
Fig.~\ref{fig:hfta-plus-mult} also shows two hierarchical trees,  $\htree_1$ and  $\htree_2$, that are accepted by the HFTA $\hfta$ constructed above. However, note that only $\htree_1$ corresponds to a valid solution to the original synthesis problem because (1) in $\htree_1$ both $f_1$ and $f_2$ refer to the program $f = \lambda x. \ x$, and (2) in $\htree_2$ these two occurrences (i.e., $f_1$ and $f_2$) of $f$ correspond to different programs. 
\end{itemize}
}
\end{example}

\begin{theorem}{\bf (HFTA Soundness)}\label{thm:sound-relaxed}
Suppose $\grammars,\occurs \vdash \Phi \twoheadrightarrow \hfta$ according to the rules from Fig~\ref{fig:build-hfta}, and let $f_1, \ldots, f_n$ be the function symbols used in the relaxed specification $\Phi$. Given a hierarchical tree $\htree = (\htreeNodes, \htreeAnnot, \htreeRoot, \htreeEdges)$ that is accepted by $\hfta$, we have:
\begin{enumerate}
\item $\htreeAnnot(v_{f_i})$ is a program that conforms to grammar $\grammars(\occurs^{-1}(f_i))$
\item $\htreeAnnot(v_{f_1}), \ldots, \htreeAnnot(v_{f_n})$ satisfy $\Phi$, i.e. $\htreeAnnot(v_{f_1}), \ldots, \htreeAnnot(v_{f_n}) \models \Phi$
\end{enumerate}
\end{theorem}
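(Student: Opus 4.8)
The plan is to prove both claims simultaneously by structural induction on the derivation of $\grammars, \occurs \vdash \Phi \twoheadrightarrow \hfta$, which by the Final rule reduces to an induction on the auxiliary derivation $\grammars, \occurs \vdash \psi \leadsto \hfta$ over subterms and subformulas $\psi$ of $\Phi$. I would first prove a strengthened lemma about $\leadsto$: whenever $\grammars, \occurs \vdash \psi \leadsto (\hftaNodes, \hftaAnnot, \hftaRoot, \hftaTrans)$ and a hierarchical tree $\htree$ is accepted by this HFTA, then (a) for each node $v_{f_i}$ corresponding to a function occurrence, $\htreeAnnot(v_{f_i})$ conforms to $\grammars(\occurs^{-1}(f_i))$; and (b) the FTA $\hftaAnnot(\hftaRoot)$ accepts $\htreeAnnot(\hftaRoot)$ with an accepting run mapping the root to a state $q_{s_0}^c$, where $c = \denot{\psi}$ evaluated under the interpretation given by the subtrees $\htreeAnnot(v_{f_i})$. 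Part (b) — tracking that the value $c$ appearing in the accepting state is \emph{exactly} the semantic value of the subterm under the induced interpretation — is the real content; claims (1) and (2) of the theorem then follow by taking $\psi = \Phi$, noting that the Final rule restricts the root FTA's final state to $q_{s_0}^\top$, so acceptance forces $\denot{\Phi} = \top$.

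For the inductive argument itself, I would handle the rules one at a time. The Const case is immediate: the trivial FTA accepts only the single-node tree labeled $c$, with run mapping it to $q_c^{\denot c}$, matching $\denot c$. In the Func case, given acceptance of $\htree$, Definition~\ref{def:hfta-accept} tells me each subtree $\htreeAnnot(v_{f_i})$ is accepted by $\hftaAnnot(v_{f_i})$ and — crucially — the inter-FTA transition $q_s^c \to q_{x_i}^c$ forces the value $c$ that flows into argument $x_i$ of $f$ to be the root value of subtree $i$, which by the induction hypothesis is $\denot{t_i}$. I then invoke the correctness of the \textsc{BuildFTA} construction (Fig.~\ref{fig:buildFTA}, which mirrors the Fig.~\ref{fig:fta-rules} construction whose correctness is established in Section~\ref{sec:fta-construct}): an accepting run of $\hftaAnnot(v_f)$ on $\htreeAnnot(v_f)$ corresponds to a program $P_f$ in $\grammars(\occurs^{-1}(f))$ such that the state at the root is $q_{s_0}^{P_f(\denot{t_1}, \ldots, \denot{t_m})}$, which is exactly $\denot{f'(t_1, \ldots, t_m)}$ under the interpretation assigning $P_f$ to $f'$. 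This gives (a) for $v_f$ (and (a) for the other $v_{f_j}$ carries over from the sub-HFTAs unchanged, since the Func rule only unions in new nodes/transitions and does not alter the existing ones — a point worth checking explicitly) and (b). The Logical and Neg cases are analogous but simpler, since the "grammar" is the trivial one-production grammar for $\emph{op}$ or $\neg$, so the only program is $\emph{op}$ itself and the root-state value is $\denot{\emph{op}}(\denot{\psi_1}, \denot{\psi_2})$ by the transition rule $\ftaTrans = \{\emph{op}(q_{x_1}^{c_1}, q_{x_2}^{c_2}) \to q_{s_0}^c \mid c = \denot{\emph{op}}(c_1, c_2)\}$; again I must note these rules leave the sub-HFTAs' internal structure untouched so the IH on claim (a) transfers.

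I expect the main obstacle to be the bookkeeping around claim (b) — namely, being precise about "the interpretation induced by $\htree$" and showing it is well-defined and consistent across all the nodes that the induction stitches together. Specifically, the relaxed specification $\Phi$ has distinct symbols $f_1, \ldots, f_n$ (each occurring exactly once as the head of some subterm), so there is a unique node $v_{f_i}$ per symbol and no consistency conflict arises — but I need to state this invariant cleanly (each function symbol of $\Phi$ labels exactly one HFTA node, which follows because every node introduced by the rules is fresh and the Func rule is applied once per occurrence) and thread it through the induction so that "$\htreeAnnot(v_{f_1}), \ldots, \htreeAnnot(v_{f_n}) \models \Phi$" in claim (2) is literally the statement that $\denot{\Phi}$ under the assignment $f_i \mapsto \htreeAnnot(v_{f_i})$ equals $\top$. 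A secondary subtlety is the "unique leaf node $l$" condition in Definition~\ref{def:hfta-accept}: I should check that in the HFTAs produced by Fig.~\ref{fig:build-hfta}, each argument slot $x_i$ of $f$ appears as a leaf in exactly the right way so that the inter-FTA stitching is unambiguous — this is guaranteed by the uniqueness of the $q_{x_i}^c$ states and the annotation of each $x_i$ by its enclosing function, as remarked in the caption of Fig.~\ref{fig:build-hfta}.
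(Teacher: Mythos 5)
Your proposal is correct and follows essentially the same route as the paper's own proof: the paper first proves a lemma about \textsc{BuildFTA} (by induction on tree height, showing the accepting run's root state is $q_s^c$ with $c$ the semantic value of the tree on the given argument values), then proves two structural-induction lemmas over the $\leadsto$ rules for terms and for formulas that track exactly the root-state value you propose to track, and finally concludes via the Final rule, which restricts the root FTA's final states to $\{q_{s_0}^\top\}$ so that acceptance forces satisfaction of $\Phi$. The only difference is presentational: where you invoke the correctness of \textsc{BuildFTA} as an already-established fact, the paper proves it explicitly as a separate lemma, which is the routine height-induction your argument implicitly assumes.
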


\begin{proof}
See Appendix A.
\end{proof}

\begin{theorem}{\bf (HFTA Completeness)}\label{thm:complete-relaxed}
Let $\Phi$ be a ground formula where every function symbol $f_1, \ldots, f_n$ occurs exactly once, and suppose we have $\grammars,\occurs \vdash \Phi \twoheadrightarrow \hfta$.  If there are implementations $P_i$ of $f_i$ such that $P_1, \ldots, P_n \models \Phi$, where $P_i$ conforms to grammar  $\grammars(\occurs^{-1}(f_i))$, then there exists a hierarchical tree $\htree = (\htreeNodes, \htreeAnnot, \htreeRoot, \htreeEdges)$ accepted by $\hfta$ such that $\htreeAnnot(v_{f_i}) = P_i$ for all $i \in [1,n]$.
\end{theorem}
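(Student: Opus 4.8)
The plan is to prove completeness by strengthening the statement to one about the auxiliary judgment $\leadsto$ and then reading the theorem off the single \textrm{Final} rule. Concretely, I would prove by structural induction on the derivation of $\grammars,\occurs \vdash \psi \leadsto (\hftaNodes,\hftaAnnot,\hftaRoot,\hftaTrans)$ (equivalently, on the structure of the term or formula $\psi$) the following claim: if every function symbol occurring in $\psi$ occurs exactly once, and implementations $P_j$ conform to $\grammars(\occurs^{-1}(f_j))$ and lie within the depth bound used by \textsc{BuildFTA}, then, letting $c$ be the value obtained by evaluating $\psi$ with each $f_j$ interpreted as $P_j$, there is a hierarchical tree $\htree$ with $\htreeAnnot(v_{f_j}) = P_j$ for every such $f_j$, together with accepting runs of all node FTAs that ``stitch'' according to $\hftaTrans$ in the sense of Definition~\ref{def:hfta-accept}, and such that the run of the root FTA $\hftaAnnot(\hftaRoot)$ sends the root of $\htreeAnnot(\hftaRoot)$ to the final state tagged with the value $c$. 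The essential point is that the induction must track \emph{which} value-tagged state is reached, not merely that some accepting run exists; this is what lets the hypothesis compose when $\psi$ is used as an argument of a larger term.

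For the base case (\textrm{Const}, $\psi = c$), the single-node hierarchical tree annotating $v_c$ with the tree for $c$ works, and its run reaches $q^{\denot{c}}_c$. For \textrm{Func}, $\psi = f(t_1,\ldots,t_m)$: by the induction hypothesis each $t_i$ yields a hierarchical tree $\htree_i$ whose root FTA run reaches some $q^{c_i}_s \in Q_{f_i}$, where $c_i$ is the value of $t_i$; since $q^{c_i}_s \in Q_{f_i}$, the \textrm{Input} rule of Fig.~\ref{fig:buildFTA} puts $q^{c_i}_{x_i}$ and the transition $x_i \to q^{c_i}_{x_i}$ into the FTA $\fta = \hftaAnnot(v_f)$ built for $f$. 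A routine sub-induction on the AST of the implementation $P$ of $f$ (exactly the classical correctness argument for example-guided FTA synthesis, using the \textrm{Prod} rule, and where the depth-bound assumption is consumed) produces an accepting run of $\fta$ on $P$'s AST reaching $q^{c}_{s_0}$ with $c = \denot{P}(c_1,\ldots,c_m)$, i.e.\ the value of $\psi$. I then form $\htree$ as the union of the $\htree_i$ together with a new node $v_f$ annotated by $P$'s AST and edges $(v_f, v_{r_i})$; the inter-FTA transitions $q^{c_i}_s \to q^{c_i}_{x_i}$ of $\hftaTrans$ connect the root run of $\htree_i$ to the $x_i$-leaf of $P$'s AST, discharging the condition of Definition~\ref{def:hfta-accept} for those edges. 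The \textrm{Logical} and \textrm{Neg} cases are analogous, using the transitions $\emph{op}(q^{c_1}_{x_1}, q^{c_2}_{x_2}) \to q^{c}_{s_0}$ (resp.\ the negation transitions) so that the reached state again carries the true value of the subformula. Finally, the theorem follows: $\grammars,\occurs \vdash \Phi \twoheadrightarrow \hfta$ arises from $\grammars,\occurs \vdash \Phi \leadsto \hfta_0$ by resetting the root FTA's final states to $\{q^{\top}_{s_0}\}$; since $P_1,\ldots,P_n \models \Phi$ the value of $\Phi$ under these implementations is $\top$, so the hierarchical tree supplied by the strengthened claim reaches $q^{\top}_{s_0}$ and is thus accepted by $\hfta$, with $\htreeAnnot(v_{f_i}) = P_i$ for all $i$.

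I expect the main obstacle to be the bookkeeping around combining the sub-trees, not a conceptual gap. The union $\bigcup_i \htreeAnnot_i$ must be a well-defined annotation, and at each node $v_{f_j}$ the annotation must be unambiguously $P_j$; this is exactly where the hypothesis that each $f_i$ occurs exactly once is used, since it forces the HFTA nodes (in particular the function-symbol nodes) contributed by distinct subterms to be disjoint. This hypothesis is automatically met by the relaxed specifications produced by \textsc{Relax}, and the functional-consistency step performed by \textsc{FindProgs} is what bridges from the relaxed specification back to the original, un-relaxed one. The other delicate points are (i) the sub-induction on $P$'s AST, which only produces an accepting run when $P$ lies within the Prod-application bound, so completeness should be understood relative to that bound; and (ii) matching the $x_i$-leaf of $P$'s AST against the inter-FTA transition, which must be phrased to conform to the ``unique leaf'' wording of Definition~\ref{def:hfta-accept}.
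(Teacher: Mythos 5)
Your proposal follows essentially the same route as the paper's own proof: a strengthened structural induction on the $\leadsto$ judgment that tracks the value-tagged final state reached at each node (the paper's Lemmas on terms and formulas), an inner induction on the implementation's AST for the \textsc{BuildFTA} step (the paper's FTA completeness lemma), and a final appeal to the \textrm{Final} rule using $P_1,\ldots,P_n \models \Phi$ to land in $q^{\top}_{s_0}$. Your added remarks about the depth bound and the disjointness afforded by the ``occurs exactly once'' hypothesis are consistent with, and slightly more explicit than, the paper's treatment.
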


\begin{proof}
See Appendix A.
\end{proof}

\subsection{Enforcing Functional Consistency} \label{sec:find}

As stated by Theorems~\ref{thm:sound-relaxed} and~\ref{thm:complete-relaxed}, the HFTA method discussed in the previous subsection gives a sound and complete synthesis  procedure with respect to the \emph{relaxed} specification where each function symbol occurs exactly once. 
However, a hierarchical tree $\htree$ that is accepted by the HFTA might still violate functional consistency by assigning different programs to the same function $f$ used in the original  specification.
In this section, we describe an algorithm for finding hierarchical trees that  (a) are accepted by the HFTA and (b)  conform to the functional consistency requirement.

Algorithm~\ref{algo:find} describes our technique for finding accepting programs that obey functional consistency.  
Given an HFTA $\hfta$ and a mapping $\occurs$ from function symbols in the original specification to those in the relaxed specification, the {\sc FindProgs} procedure finds a mapping $\programs_{\text{res}}$ from each function symbol $f_i$ in the domain of $\occurs$ to a program $p_i$ such that $p_1, \ldots, p_n$ are accepted by $\hfta$. 
Since the resulting mapping $\programs_{\text{res}}$ maps each function symbol in the \emph{original} specification to a single program, the solution returned by {\sc FindProgs} is guaranteed to be a valid solution for the original relational synthesis problem.

We now explain how Algorithm~\ref{algo:find} works in more detail.
At a high level, {\sc FindProgs} is a recursive procedure that, in each invocation, updates the current mapping $\programs$ by finding a program for a currently unassigned function symbol $f$. In particular, we say that a function symbol $f$ is unassigned if $\programs$ maps $f$ to $\bot$. 
Initially, all function symbols are unassigned, but {\sc FindProgs} iteratively makes assignments to each function symbol in the domain of $\programs$.  
Eventually, if all function symbols have been assigned, this means that $\programs$ is a valid solution;
thus, Algorithm~\ref{algo:find} returns $\programs$ at line 3 if \textsf{ChooseUnassigned} yields \emph{null}.

\begin{algorithm}[t]
\caption{Algorithm for enforcing functional consistency.}
\label{algo:find}
\begin{algorithmic}[1]
\small 
\vspace{0.05in}
\Procedure{\textsc{FindProgs}}{$\hfta, \programs, \occurs$}
\vspace{0.05in}
\Statex \textbf{Input:} HFTA $\hfta = (\hftaNodes, \hftaAnnot, \hftaRoot, \hftaTrans)$, mapping $\programs$ that maps a function symbol to a program, and mapping $\occurs$ that maps a function symbol to its occurrences.
\Statex \textbf{Output:} Mapping $\programs_{\text{res}}$ that maps each function symbol to a program. 
\vspace{0.05in}
\State $f \gets \textsf{ChooseUnassigned}(\programs)$;
\If{$f = \emph{null}$} \Return $\programs$; \EndIf
\State $f_i \gets \textsf{ChooseOccurrence}(\occurs, f)$;
\ForEach{$p \in \big\{ \htreeAnnot(v_{f_i}) \ \big| \ (\htreeNodes, \htreeAnnot, \htreeRoot, \htreeEdges) \in \lang(\hfta), ~ v_{f_i} \in \htreeNodes \big\}$}
    \State $\hfta' \gets \textsc{Propagate}(\hfta, p, f, \occurs)$;
    \If{$\lang(\hfta') = \emptyset$} \textbf{continue}; \EndIf
    \State $\programs' \gets \programs[f \mapsto p]$;
    \State $\programs_{\text{res}} \gets \textsc{FindProgs}(\hfta', \programs', \occurs)$;
    \If{$\programs_{\text{res}} \neq \emph{null}$} \Return $\programs_{\text{res}}$; \EndIf
\EndFor
\State \Return $\emph{null}$;
\EndProcedure
\end{algorithmic}
\end{algorithm}

Given a  function symbol $f$ that is currently unassigned, {\sc FindProgs} first chooses some occurrence $f_i$ of $f$ in the relaxed specification (i.e., $f_i \in \occurs(f)$) and finds a program $p$ for $f_i$. 
In particular, given a hierarchical tree $\htree$ accepted by $\hfta$~\footnote{Section~\ref{sec:impl} describes a strategy for lazily enumerating hierarchical trees accepted by a given HFTA.}, we find the program $p$ that corresponds to $f_i$ in $\htree$ (line 5). Now, since every occurrence of $f$ must correspond to the same program, we use a procedure called {\sc Propagate} that propagates $p$ to all other occurrences of $f$ (line 6). The procedure {\sc Propagate} is shown in Algorithm~\ref{algo:aux} and returns a modified HFTA $\hfta'$ that enforces that all occurrences of $f$ are consistent with $p$. (We will discuss the  {\sc Propagate} procedure in more detail after finishing the discussion of {\sc FindProgs}).

The loop in lines 5--10 of Algorithm~\ref{algo:find} performs backtracking search. In particular, if the language of $\hfta'$ becomes empty after the call to {\sc Propagate}, this means that $p$ is not a suitable implementation of $f$ --- i.e., given the current mapping $\programs$, there is no extension of $\programs$ where  $p$ is assigned to $f$. Thus, the algorithm backtracks at line 7 by moving on to the next program for $f$. 

On the other hand, assuming that the call to {\sc Propagate} does not result in a contradiction (i.e., $\lang(\hfta') \neq \emptyset$), we try to find an implementation of the remaining function symbols via the recursive call to {\sc FindProgs} at line 9. 
If the recursive call does not yield \emph{null}, we have found a set of programs that both satisfy the relational specification and obey functional consistency; thus, we return $\programs_{\text{res}}$ at line 10. Otherwise, we again backtrack and look for a different implementation of $f$.

\paragraph{\bf \emph{Propagate subroutine.}} 
We now discuss the {\sc Propagate} subroutine for enforcing that different occurrences of a function have the same implementation. Given an HFTA $\hfta$ and a  function symbol $f$ with candidate implementation $p$, {\sc Propagate} returns a new HFTA $\hfta'$ such that the FTAs for all occurrences of $f$ only accept programs that have the same input-output behavior as $p$.

In more detail, the loop in lines 3--7 of Algorithm~\ref{algo:aux} iterates over all HFTA nodes $v$ that correspond to some occurrence of $f$. Since we want to make sure that the FTA for $v$ only accepts those programs that have the same input-output behavior as $p$, we first compute all final states $Q_f'$ that can be reached by successful runs of the FTA on program $p$ (line 5) and change the final states of $\hftaAnnot(v)$ to $Q_f'$ (line 6). Since some of the inter-FTA transitions become spurious as a result of this modification, we also remove inter-FTA transitions $q \rightarrow q'$ where $q$ is no longer a final state of its corresponding FTA (line 7). These modifications ensure that the FTAs for different occurrences of $f$ \emph{only} accept programs that have the same behavior as $p$.

\begin{figure}[!t]
\vspace{-10pt}
\begin{algorithm}[H]
\caption{Auxilary procedure for propagating a program in HFTA.}
\label{algo:aux}
\begin{algorithmic}[1]
\small 
\vspace{0.05in}
\Procedure{\textsc{Propagate}}{$\hfta, p, f, \occurs$}
\vspace{0.05in}
\Statex \textbf{Input:} HFTA $\hfta = (\hftaNodes, \hftaAnnot, \hftaRoot, \hftaTrans)$, program $p$, function symbol $f$, $\occurs$ maps function symbols to occurrences
\Statex \textbf{Output:} An updated HFTA $\hfta' = (\hftaNodes, \hftaAnnot', \hftaRoot, \hftaTrans')$
\vspace{0.05in}

\State $\hftaAnnot' \gets \hftaAnnot, \quad \hftaTrans' \gets \hftaTrans$;
\ForEach {$v \in \set{v_{f_i} ~|~ f_i \in \occurs(f), v_{f_i} \in \hftaNodes}$}
    \State $(\ftaStates, \ftaAlphabet, \ftaFinal, \ftaTrans) \gets \hftaAnnot(v)$;
    \State $\ftaFinal' \gets \textsf{ReachableFinalStates}(\hftaAnnot(v), p)$; \Comment{remove unreachable final states}
    \State $\hftaAnnot' \gets \hftaAnnot'[v \mapsto (\ftaStates, \ftaAlphabet, \ftaFinal', \ftaTrans)]$;
    \State $\hftaTrans' \gets \hftaTrans' \setminus \cup_{q \in \ftaFinal \setminus \ftaFinal'} \set{q \to q' ~|~ q \to q' \in \hftaTrans'}$; \Comment{remove spurious transitions}
\EndFor
\State \Return $(\hftaNodes, \hftaAnnot', \hftaRoot, \hftaTrans')$;
\EndProcedure
\end{algorithmic}
\end{algorithm}
\vspace{-10pt}
\end{figure}

\begin{figure}[!t]
\centering
\includegraphics[scale=0.6]{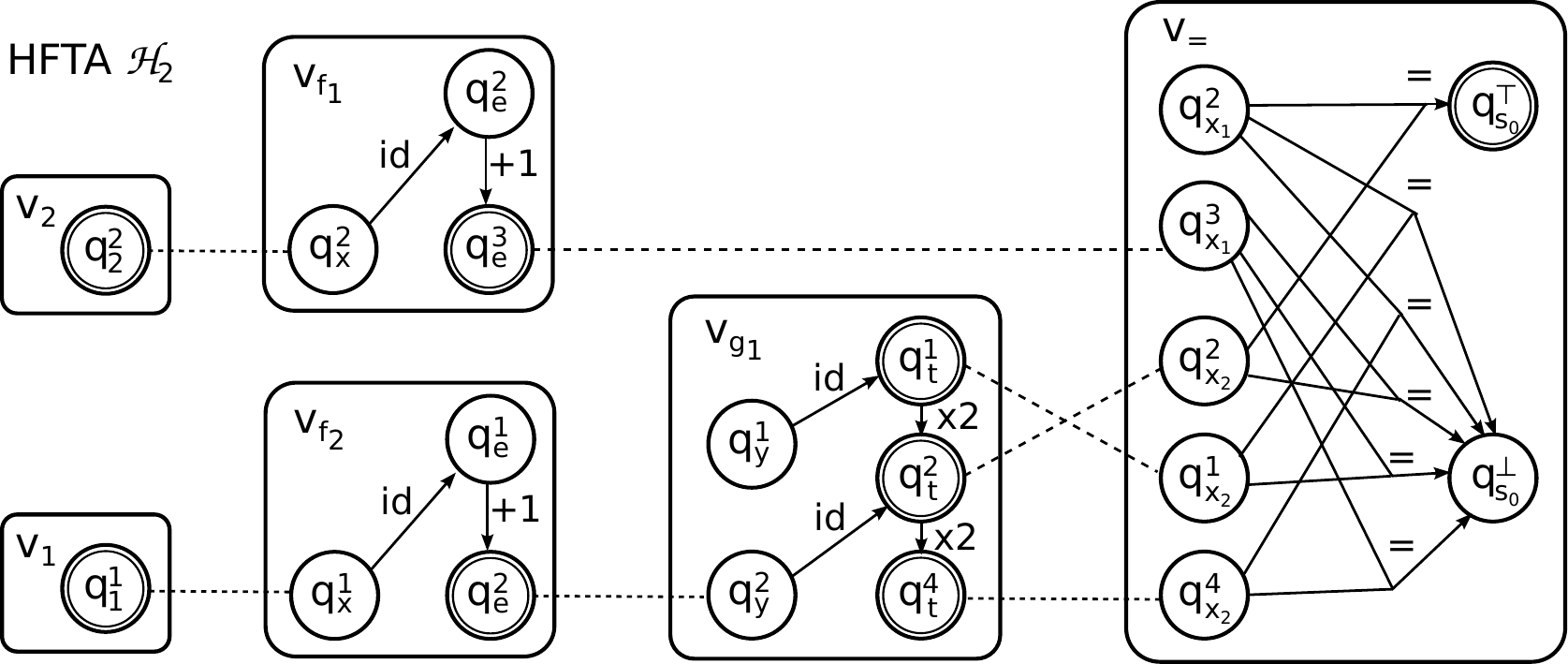}

\rule{0.75\textwidth}{0.5pt}

\vspace{0.05in}
\includegraphics[scale=0.6]{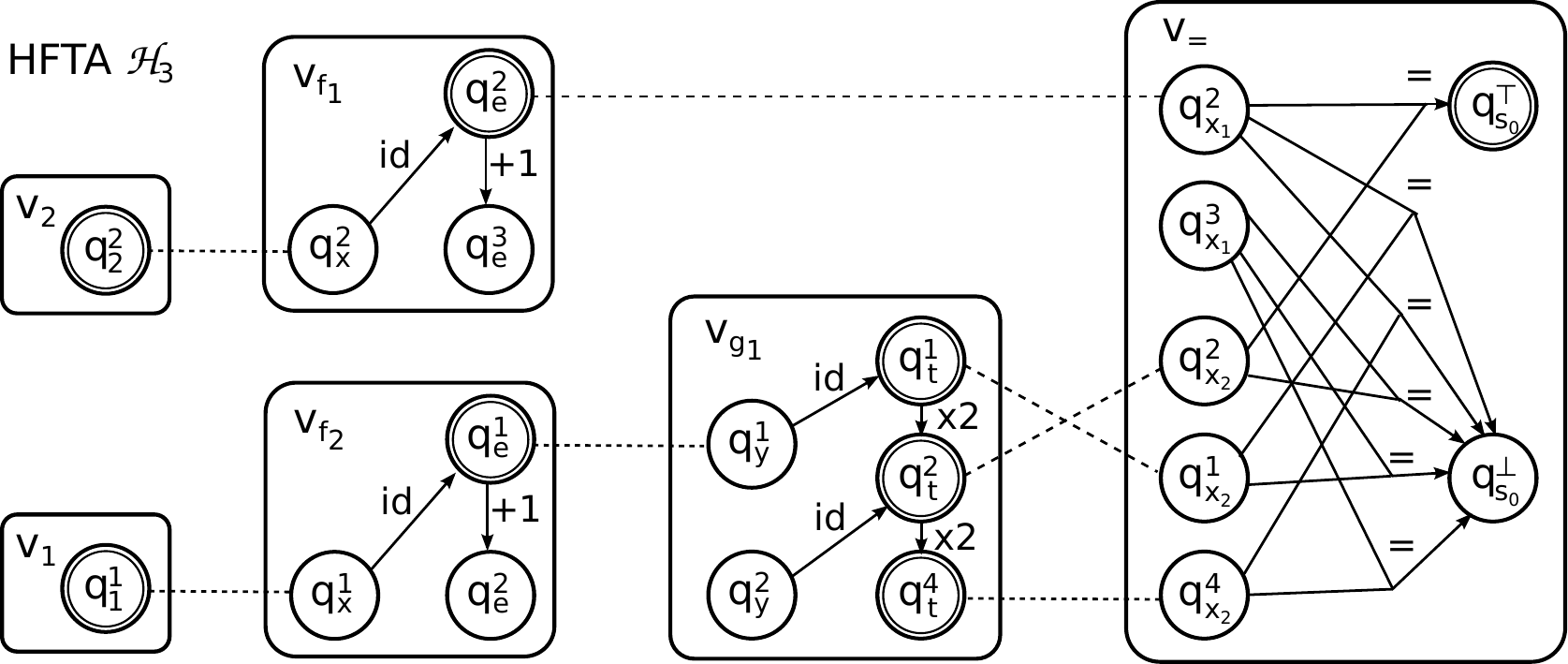}
\caption{HFTAs after propagation. $\hfta_2$ is obtained by propagating $f_1 = f_2 = \lambda x. ~ x+1$ on HFTA $\hfta$ in Example~\ref{ex:hfta-plus-mult} (shown in Fig.~\ref{fig:hfta-plus-mult}). $\hfta_3$ is obtained by propagating $f_1 = f_2 = \lambda x. ~ x$ on HFTA $\hfta$. }
\label{fig:hfta-propagated}
\vspace{-0.1in}
\end{figure}

\begin{example}
{
We now illustrate how {\sc FindProgs} extracts programs from the HFTA in Example~\ref{ex:hfta-plus-mult}. Suppose we first choose occurrence $f_2$ of function $f$ and  its implementation $f_2 = \lambda x. ~ x+1$. Since $f_1$ and $f_2$ must have the same implementation, the call to {\sc Propagate} results in the modified HFTA  $\hfta_2$ shown in Fig.~\ref{fig:hfta-propagated}. However, observe that $\lang(\hfta_2)$ is empty because the final state $q^\top_{s_0}$ in FTA $\hftaAnnot(v_=)$ is only reachable via state $q^2_{x_1}$, but $q^2_{x_1}$ no longer has incoming transitions. Thus, the algorithm backtracks to the only other implementation choice for $f_2$, namely $\lambda x. x$, and {\sc Propagate} now yields the HFTA $\hfta_3$ from Fig.~\ref{fig:hfta-propagated}. This time, $\lang(\hfta_3)$ is not empty; hence, the algorithm moves on to function symbol $g$. Since the only hierarchical tree in $\lang(\hfta_3)$ is $\htree_1$ from Fig.~\ref{fig:hfta-plus-mult}, we are forced to choose the implementation $g = \lambda y. ~ y \times 2$. Thus, the \textsc{FindProgs} procedure successfully returns $\programs = \set{f \mapsto \lambda x. x, ~ g \mapsto \lambda y. ~ y \times 2}$.
}
\end{example}

\subsection{Properties of the Synthesis Algorithm}

The following theorems state the soundness and completeness of the synthesis algorithm.

\begin{theorem}{\bf (Soundness)}\label{thm:sound}
Assuming the soundness of the \textsf{Verify} procedure invoked by Algorithm~\ref{algo:cegis}, then the set of programs $\programs$ returned by {\sc Synthesize} are guaranteed to satisfy the relational specification $\Psi$, and program $P_i$ of function $f_i$ conforms to grammar $\grammars(f_i)$.
\end{theorem}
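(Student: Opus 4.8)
The plan is to prove soundness of \textsc{Synthesize} by tracing the invariant maintained by the CEGIS loop in Algorithm~\ref{algo:cegis} and then appealing to the soundness results already established for the inductive-synthesis sub-procedures. First I would observe that \textsc{Synthesize} returns a non-null result only at line~6, i.e.\ only when $\textsf{Verify}(\programs, \Psi)$ returns true. By the assumed soundness of \textsf{Verify}, this immediately gives $\programs \models \Psi$, so the relational specification is satisfied by the returned programs. The remaining obligation is that each $\programs(f_i)$ is a well-formed program of the appropriate grammar $\grammars(f_i)$, and for this I would argue that every value ever assigned to $\programs$ inside the loop (other than the initial $\programs_{\text{random}}$, which is chosen from the grammars by construction) has this property.

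The key step is therefore to establish that the output of \textsc{FindProgs} (line~11), when non-null, maps each $f_i$ to a program in $\grammars(f_i)$. I would proceed in the following order. (i) By the \textsc{Relax} rules of Fig.~\ref{fig:relax}, the mapping $\occurs$ produced on line~8 satisfies $\occurs^{-1}(f') = f_i$ for every fresh occurrence symbol $f' \in \occurs(f_i)$, and the relaxed specification $\Phi'$ uses exactly these occurrence symbols. (ii) By Theorem~\ref{thm:sound-relaxed} (HFTA Soundness), any hierarchical tree $\htree$ accepted by the HFTA $\hfta$ built on line~9 has the property that $\htreeAnnot(v_{f'})$ is a program conforming to $\grammars(\occurs^{-1}(f')) = \grammars(f_i)$, for each occurrence symbol $f'$. (iii) \textsc{FindProgs} only ever assigns to a function symbol $f$ a program $p$ of the form $\htreeAnnot(v_{f_i})$ for some accepting hierarchical tree $\htree \in \lang(\hfta')$ (line~5 of Algorithm~\ref{algo:find}), where $\hfta'$ is obtained from $\hfta$ by a sequence of \textsc{Propagate} calls; since \textsc{Propagate} only deletes final states and inter-FTA transitions (Algorithm~\ref{algo:aux}), we have $\lang(\hfta') \subseteq \lang(\hfta)$, so the extracted $p$ is still a tree accepted by the original $\hfta$ and hence, by (ii), a program in $\grammars(f_i)$. (iv) An easy induction on the recursion depth of \textsc{FindProgs} then shows that the returned $\programs_{\text{res}}$ maps \emph{every} function symbol in $\mathrm{dom}(\occurs) = \functions$ to such a program: the base case is when $\textsf{ChooseUnassigned}$ returns \emph{null} (all symbols already assigned, each by step (iii)), and the inductive step appeals to the recursive call on line~9 together with the fact that line~8 only refines one more symbol using a grammar-conformant $p$.

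Combining these observations closes the argument: whenever \textsc{Synthesize} returns $\programs \ne \emph{null}$, the value $\programs$ was produced by \textsc{FindProgs} on the previous iteration (or is $\programs_{\text{random}}$, handled trivially), hence each $\programs(f_i) \in \grammars(f_i)$ by the above, and $\programs \models \Psi$ by the soundness of \textsf{Verify}. I expect the main obstacle to be the bookkeeping in step (iii): one must check carefully that \textsc{Propagate} never \emph{introduces} new accepted trees — only prunes — so that grammar-conformance established for $\lang(\hfta)$ is inherited by $\lang(\hfta')$, and also that the program $p$ extracted at line~5 of \textsc{FindProgs} is read off from an HFTA node $v_{f_i}$ that is genuinely the node the HFTA-construction rules associate with the occurrence symbol $f_i$ (so that Theorem~\ref{thm:sound-relaxed} applies to it). Everything else is routine induction over the loop and the recursion.
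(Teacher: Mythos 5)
Your proposal is correct and follows essentially the same route as the paper: the paper also discharges satisfaction of $\Psi$ directly from the assumed soundness of \textsf{Verify}, and establishes grammar conformance by splitting into the $\programs_{\text{random}}$ case and the \textsc{FindProgs} case, where the latter is handled (in the paper, via a separate lemma about \textsc{FindProgs}) by exactly your observation that \textsc{Propagate} only removes final states and inter-FTA transitions, so the language only shrinks and Theorem~\ref{thm:sound-relaxed} applies to the extracted components $\htreeAnnot(v_{f_i})$. The only cosmetic difference is that the paper factors your steps (ii)--(iv) into an auxiliary lemma stated as an invariant over the sequence of propagated HFTAs, whereas you inline just the grammar-conformance portion needed here.
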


\begin{proof}
See Appendix A.
\end{proof}

\begin{theorem}{\bf (Completeness)}\label{thm:complete}
If there is a set of programs that (a) satisfy $\Psi$ and (b) can be implemented using the DSLs given by $\grammars$, then Algorithm~\ref{algo:cegis} will eventually terminate with programs $\programs$ satisfying $\Psi$.
\end{theorem}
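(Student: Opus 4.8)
The plan is to argue that, under the CEGIS framework of Algorithm~\ref{algo:cegis}, the synthesis loop terminates with a solution whenever one exists. I would structure the proof around two claims: (i) every iteration of the CEGIS loop that does not return makes ``progress'' in a measurable sense, so the loop cannot run forever without finding a solution, and (ii) the inductive synthesis phase (lines 8--11) is \emph{complete} with respect to the current ground specification $\Phi$, in the sense that if some tuple of DSL programs satisfies $\Phi$ \emph{and} the global specification $\Psi$, then \textsc{FindProgs} returns such a tuple (or at least a tuple consistent with $\Phi$). Claim (ii) is where the earlier results do the heavy lifting: \textsc{BuildHFTA} together with Theorems~\ref{thm:sound-relaxed} and~\ref{thm:complete-relaxed} guarantee that the HFTA's language, restricted to functionally-consistent hierarchical trees, is exactly the set of program tuples conforming to $\grammars$ that satisfy the relaxed (hence, after enforcing consistency, the original) ground specification $\Phi$; and \textsc{FindProgs} systematically backtracks over all functionally-consistent accepting trees, so it finds one iff one exists.

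Concretely, the key steps in order: First, fix a witnessing tuple $P_1^\star, \ldots, P_n^\star$ that satisfies $\Psi$ and conforms to $\grammars$; such a tuple exists by hypothesis, and since each $P_i^\star$ is a finite AST it has some finite depth bound $d$. I would argue that once the per-function Prod-application bound used in \textsc{BuildFTA} is at least $d$ (the bound can be incremented across iterations, or taken large enough from the outset in the idealized statement), the HFTA built in any iteration contains $P_1^\star, \ldots, P_n^\star$ among its functionally-consistent accepting trees — this is precisely Theorem~\ref{thm:complete-relaxed} applied after the relaxation, combined with the observation that enforcing functional consistency (the \textsc{Propagate}/\textsc{FindProgs} machinery) does not remove the tuple $P^\star$ since $P^\star$ \emph{is} functionally consistent. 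Hence in every iteration \textsc{FindProgs} returns some non-null candidate $\programs$ that satisfies the current $\Phi$ (so line 4 never fires and the loop never returns \emph{null}). Second, I would show termination: each iteration that reaches line 7 adds a fresh ground counterexample to $\Phi$ that the previous candidate $\programs$ violated; since $\Psi = \forall \vec x.\,\phi(\vec x)$ and the space of candidate program tuples of bounded AST depth is finite, each such counterexample permanently eliminates at least one program tuple (namely the current $\programs$, and everything with the same input-output behavior on the accumulated examples) from consideration, so the loop can reach line 7 only finitely many times before \textsf{Verify} succeeds. Combining the two, the loop terminates, and it can only terminate at line 6, i.e.\ with $\programs$ satisfying $\Psi$ — by soundness of \textsf{Verify} (and Theorem~\ref{thm:sound}), this $\programs$ is a genuine solution.

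The main obstacle I anticipate is the interaction between the \emph{finite AST-depth bound} and completeness: the HFTA construction is only complete relative to a fixed bound on Prod applications, so a fully honest completeness proof must either (a) assume the bound is chosen large enough to accommodate the witness — which is somewhat unsatisfying since the witness is unknown a priori — or (b) wrap the whole CEGIS loop in an outer iterative-deepening loop that increments the depth bound, and argue that for the (finite) depth of the witness tuple the inner loop succeeds. I would go with (b) in the proof, noting that for each fixed bound the inner CEGIS loop terminates (by the finiteness argument above), and that eventually the bound reaches the witness depth, at which point the inner loop is guaranteed to return a solution; this keeps the statement honest while matching what the algorithm actually does. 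A secondary, more routine obstacle is formalizing ``each counterexample eliminates the current candidate permanently'' — one must observe that \textsf{Verify} returns a concrete input $\vec c$ with $\neg\phi(\vec c)$ under the current $\programs$, that the ground formula $\phi(\vec c)$ added to $\Phi$ is then violated by $\programs$, and hence any future candidate accepted by the HFTA for the new $\Phi$ differs from $\programs$ at least on $\vec c$; since there are only finitely many input-output behaviors over the bounded program space, this can happen only finitely often.
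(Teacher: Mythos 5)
Your proposal is correct and shares the paper's core skeleton: fix a witness tuple, observe that every ground specification $\Phi_i$ accumulated by the CEGIS loop is a conjunction of instantiations of $\Psi$ (so the witness satisfies each $\Phi_i$), and then invoke the completeness of the inductive phase --- in the paper this is packaged as a dedicated lemma about \textsc{FindProgs} (Lemma~\ref{lem:find-complete}), which you essentially re-derive from Theorem~\ref{thm:complete-relaxed} plus the observation that \textsc{Propagate} cannot eliminate a functionally consistent witness --- to conclude that the synthesizer never returns \emph{null}. Where you genuinely diverge is in what you do \emph{after} that point. The paper's proof stops there: having shown $\Psi \Rightarrow \Phi_i$ by induction and that \textsc{FindProgs} never fails, it simply asserts that the loop ``will eventually terminate'' with a solution, without any progress measure; strictly speaking, never failing does not by itself preclude the loop from cycling forever through incorrect candidates. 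Your claim~(i) supplies exactly the missing argument: each counterexample permanently excludes the refuted tuple (by HFTA soundness, Theorem~\ref{thm:sound-relaxed}, no later candidate can violate an accumulated ground formula), and the space of program tuples under the bounded FTA construction is finite, so only finitely many refutations can occur. You also make explicit a second point the paper leaves implicit, namely that completeness only holds relative to the Prod-application depth bound accommodating the witness; your iterative-deepening wrapper is a reasonable repair, though note it goes beyond what Algorithm~\ref{algo:cegis} as written actually does (the paper fixes the bound a priori), so as a proof of the theorem \emph{as stated for the given algorithm} you would have to fall back on your option~(a), i.e., read the theorem as implicitly quantified over a sufficiently large bound. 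In short, your argument is, if anything, more rigorous than the paper's on termination, at the cost of an explicit finiteness/bound hypothesis that the paper silently assumes.
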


\begin{proof}
See Appendix A.
\end{proof}

\section{Implementation} \label{sec:impl}

We have implemented the proposed relational program synthesis approach in a framework called \toolname. To demonstrate the capabilities of the \toolname framework, we instantiate it on the first two domains from Section~\ref{sec:motivating-ex}, namely string encoders/decoders, and comparators. In this section, we describe our implementation of \toolname and its instantiations.

\subsection{Implementation of \toolname Framework}

While the implementation of the \toolname framework closely follows the algorithm described in Section~\ref{sec:synth}, it performs several important optimizations that we discuss next.

\vspace{-0.05in}
\paragraph{\textbf{Lazy enumeration.}} Recall that Algorithm~\ref{algo:find}  needs to (lazily) enumerate all hierarchical trees accepted by a given HFTA. Furthermore, since we want to find a program with the greatest generalization power, our algorithm should enumerate more promising programs first. Based on these criteria, we need a mechanism for predicting the generalization power of a hierarchical tree using some heuristic cost metric. In our implementation, we associate a non-negative cost with every DSL construct and compute the cost of a given hierarchical tree by summing up the costs of all its nodes. Because hierarchical trees with lower cost are likely to have better generalization power, our algorithm lazily enumerates hierarchical trees according to their cost.

In our implementation, we reduce the problem of enumerating hierarchical trees accepted by an HFTA to the task of enumerating  B-paths in a hypergraph~\cite{min-bpath}. In particular, we first flatten the HFTA into a standard FTA by combining the individual tree automata at each node via the inter-FTA transitions. We then represent the resulting flattened FTA as a hypergraph where the FTA states correspond to nodes and a transition $\sigma(q_1, \ldots, q_n) \to q$ corresponds to a B-edge $(\set{q_1, \ldots, q_n}, q)$ with weight $\emph{cost}(\sigma)$.
Given this representation, the problem of finding the lowest-cost hierarchical tree accepted by an HFTA becomes equivalent to the task of finding a minimum weighted B-path in a hypergraph, and our implementation leverages known algorithms for solving this problem~\cite{min-bpath}.

\vspace{-0.05in}
\paragraph{\textbf{Verification.}} Because our overall approach is based on the CEGIS paradigm,  we need a separate verification step to both check the correctness of the programs returned by the inductive synthesizer and  find counterexamples if necessary.
However, because heavy-weight verification can add considerable overhead to the CEGIS loop,  we \emph{test} the program against a large set of inputs rather than performing full-fledged verification in each iteration.  Specifically, we generate a set of validation inputs by computing all possible permutations of a finite set up to a bounded length $k$ and check the correctness of the candidate program against this validation set.  In our experiments, we use over a million test cases in each iteration, and resort to full verification only when the synthesized program passes all of these test cases.

\subsection{Instantiation for String Encoders and Decoders}\label{sec:encoder}

\begin{figure}[!t]
\centering

\begin{subfigure}{\textwidth}
\centering
\small
\[
\begin{array}{l l c l}
    \emph{EncodedText} & E &:=& M ~|~ \emph{padToMultiple}(E, num, char) ~|~ \emph{header}(E) \\
    \emph{Mapper} & M &:=& \emph{enc16}(B) ~|~ \emph{enc32}(B) ~|~ \emph{enc32Hex}(B) ~|~ \emph{enc64}(B) ~|~ \emph{enc64XML}(B) ~|~ \emph{encUU}(B) \\
    \emph{ByteArray} & B &:=& x ~|~ \emph{reshape}(B, num) ~|~ \emph{encUTF8}(I) ~|~ \emph{encUTF16}(I) ~|~ \emph{encUTF32}(I) \\
    \emph{IntArray} & I &:=& \emph{codePoint}(x) \\
\end{array}
\]
\[
x \in \emph{Variable} \qquad \qquad num \in \set{1, 2, \ldots, 8} \qquad \qquad char \in \set{\text{`=', `\'{ }',}\ldots}
\]

\vspace{-0.05in}
\caption{Context-free grammar for encoders.}
\label{fig:cfg-enc}
\end{subfigure}
\begin{subfigure}{\textwidth}
\centering
\small
\[
\begin{array}{l l c l}
    \emph{DecodedData} & D &:=& B ~|~ \emph{asUnicode}(I) \\
    \emph{IntArray} & I &:=& \emph{decUTF8}(B) ~|~ \emph{decUTF16}(B) ~|~ \emph{decUTF32}(B) \\
    \emph{ByteArray} & B &:=& M ~|~ \emph{invReshape}(B, num) \\
    \emph{Mapper} & M &:=& \emph{dec16}(C) ~|~ \emph{dec32}(C) ~|~ \emph{dec32Hex}(C) ~|~ \emph{dec64}(C) ~|~ \emph{dec64XML}(C) ~|~ \emph{decUU}(C) \\
    \emph{CharArray} & C &:=& x ~|~ \emph{removePad}(C, char) ~|~ \emph{substr}(C, num) \\
\end{array}
\]
\[
x \in \emph{Variable} \qquad \qquad num \in \set{1, 2, \ldots, 8} \qquad \qquad char \in \set{\text{`=', `\'{ }',}\ldots}
\]

\vspace{-0.05in}
\caption{Context-free grammar for decoders.}
\label{fig:cfg-dec}
\end{subfigure}

\vspace{-0.1in}
\caption{Context-free grammars for Unicode string encoders and decoders.}
\label{fig:cfg-codec}
\vspace{-0.2in}
\end{figure}

While \toolname is a generic framework that can be used in various application domains, one needs to construct suitable DSLs  and write relational specifications for each different domain. In this section, we discuss our instantiation of \toolname for synthesizing string encoders and decoders. Since we have already explained the relational property of interest in Section~\ref{sec:motivating-ex}, we discuss two simple DSLs, presented in Fig.~\ref{fig:cfg-codec}, for implementing encoders and decoders respectively.

\vspace{-0.05in}
\paragraph{\textbf{DSL for encoders}} We designed a DSL for string encoders by reviewing several different encoding mechanisms  and identifying their common building blocks. Specifically, this DSL allows transforming a Unicode string (or binary data) to a sequence of restricted ASCII characters (e.g., to fulfill various requirements of text-based network transmission protocols). At a high-level, programs in this DSL  first transform the input string  to an integer array and then to a byte array. The encoded text is obtained by applying various kinds of mappers to the byte array, padding it, and attaching length information. In what follows, we informally describe the semantics of the  constructs used in the encoder DSL.

\begin{itemize}[leftmargin=*]
\item {\bf \emph{Code point representation:}} The \emph{codePoint} function converts  string $s$ to an integer array $I$ where $I[j]$ corresponds to the Unicode code point for $s[j]$.
\item   {\bf \emph{Mappers:}} The \emph{encUTF8/16/32} functions transform the code point array to a byte array according to the corresponding standards of UTF-8, UTF-16, UTF-32, respectively. The other mappers \emph{encX} can further transform the byte array into a sequence of restricted ASCII characters based on different criteria. For example, the \emph{enc16} function is a simple hexadecimal mapper that can convert binary data \texttt{0x6E} to the string ``6E''.
\item  {\bf \emph{Padding:}} The \emph{padToMultiple} function takes an existing character array $E$ and pads it with a sequence of extra \emph{char} characters  to ensure that  the length of the padded sequence is evenly divisible by \emph{num}. 
\item  {\bf \emph{Header:}}  The \emph{header(E)} function prepends the ASCII representation of the length of the text to $E$.
\item  {\bf \emph{Reshaping:}}
The function \emph{reshape(B, num)} first concatenates all bytes in  $B$, then regroups them such that each group only contains \emph{num} bits (instead of 8), and finally generates a new byte array where each byte is equal to the value of corresponding group. For example, \emph{reshape}\texttt{([0xFF],4) = [0x0F,0x0F]} and \emph{reshape}\texttt{([0xFE],2) = [0x03,0x03,0x03,0x02]}.
\end{itemize}

\vspace{-0.05in}
\paragraph{\textbf{DSL for decoders}} The decoder DSL is quite similar to the encoder one and is structured as follows: Given an input string $x$, programs in this DSL first process $x$ by removing the header and/or padding characters and then transform it to a byte array using a set of pre-defined mappers. The decoded data can be either the resulting byte array or a Unicode string  obtained by converting the byte array to an integer Unicode code point array.
In more detail, the decoder DSL supports the following built-in operators:

\begin{itemize}[leftmargin=*]
\item \emph{\textbf{Unicode conversion:}} The \emph{asUnicode} function converts an integer array $I$ to a string $s$, where $s[j]$ corresponds to the Unicode symbol of code point $I[j]$.
\item \emph{\textbf{Mappers:}} The \emph{decUTF8/16/32} functions transform a byte array to a code point array based on standards of UTF-8, UTF-16, UTF-32, respectively. The other mappers \emph{decX} transform a sequence of ASCII characters to byte arrays according to their standard transformation rules.
\item \emph{\textbf{Character removal:}} The \emph{removePad} function removes all trailing characters \emph{char} from a given string $C$. The \emph{substr} function takes a string $C$ and an index \emph{num}, and returns the sub-string of $C$ from index \emph{num} to the end.
\item \emph{\textbf{Reshaping:}} The \emph{invReshape} function takes a byte array $B$, collects \emph{num} bits from the least significant end of each byte, concatenates the bit sequence and regroups  every eight bits to generate a new byte array. For instance, \emph{invReshape}\texttt{([0x0E,0x0F],4) = [0xEF]}.
\end{itemize}

\begin{example}
The desired {\tt encode/decode} functions from Example~\ref{ex:codec} can be implemented in our DSL as follows:
\[
\small 
\begin{BVerbatim}
encode(x) : padToMultiple (enc64 (reshape (encUTF8 (codePoint (x)), 6)), 4, '=')
decode(x) : asUnicode (decUTF8 (invReshape (dec64 (removePad (x, '=')), 6)))
\end{BVerbatim}
\]
\end{example}

\subsection{Instantiation for Comparators}

\begin{figure}[!t]
\centering
\small
\[
\begin{array}{llcl}
\emph{Comparator} & C & := & B \ | \ \emph{chain}(B, C) \\ 
\emph{Basic} & B & := & \emph{intCompare}(I_1, I_2) \ | \ \emph{strCompare}(S_1, S_2) \\ 
\emph{Integer} & I & := & \emph{countChar}(S, c) \ | \ \emph{length}(S) \ | \ \emph{toInt}(S) \\ 
\emph{String} & S & := & \emph{substr}(v, P_1, P_2) \\ 
\emph{Position} & P & := & \emph{pos}(v, t, k, d) \ | \ \emph{constPos}(k) \\ 
\end{array}
\]
\[
v \in \{ \emph{ x, y } \} \qquad  c \in \emph{Characters} \qquad   t \in \emph{Tokens}  \qquad  k \in \emph{Integers} \qquad  d \in \{ \emph{ Start, End } \}
\]

\vspace{-0.1in}
\caption{Context-free grammar for comparators.}
\label{fig:cfg-comparators}
\vspace{-0.2in}
\end{figure}

We have also instantiated the \toolname framework to enable automatic generation of custom string comparators. As described in Section~\ref{sec:motivating-ex}, this domain is an interesting ground for relational program synthesis because comparators must satisfy three different relational properties (i.e., anti-symmetry, transitivity, and totality). In what follows, we describe the DSL from Fig.~\ref{fig:cfg-comparators} that \toolname uses to synthesize these comparators. 

In more detail, programs in our comparator DSL take as  input a pair of strings \emph{x, y}, and return -1, 0, or 1 indicating that \emph{x}  precedes, is equal to, or succeeds  \emph{y} respectively. Specifically, a program is either a \emph{basic comparator} $B$ or a \emph{comparator chain} of the form $\emph{chain}(B_1, \ldots, B_n)$ which returns the result of the \emph{first} comparator that does not evaluate to zero. The DSL allows two basic comparators, namely  \emph{intCompare} and \emph{strCompare}. The integer or string inputs to these basic comparators can be obtained using the following functions:

\begin{itemize}[leftmargin=*]
\item  \emph{\textbf{Substring extraction:}} The \emph{substring} function is used to extract substrings of the input string. In particular, for a string $v$ and positions $P_1, P_2$, it returns  the substring of $v$ that starts at index $P_1$ and ends at index $P_2$.
\item  \emph{\textbf{Position identifiers:}} A position $P$ can either be a constant index ($\emph{constPos}(k)$) or the (start or end) index of the $k$'th occurrence of the match of token $t$ in the input string ($\emph{pos}(v, t, k, d)$).~\footnote{Tokens are chosen from a predefined universe of regular expressions.} For example, we have $\emph{pos}(\text{``} 12ab \text{''}, \texttt{Number}, 1, \texttt{Start}) = 0 $ and
$\emph{pos}(\text{``} 12ab \text{''}, \texttt{Number}, 1, \texttt{End}) = 2$ where \texttt{Number} is a token indicating a sequence of digits. 
\item  \emph{\textbf{Numeric string features:}} The DSL allows extracting various numeric features of a given string $S$. In particular, \emph{countChar} yields the number of occurrences of a given character $c$ in string $S$, \emph{length} yields string length, and \emph{toInt} converts a string representing an integer to an actual integer (i.e., $\emph{toInt}(``123") = 123$ but \emph{toInt}(``abc") throws an exception).
\end{itemize}

\begin{example}
Consider Example~\ref{ex:comparator} where the user wants to sort integers based on the number of occurrences of the number \texttt{5}, and, in the case of a tie,  sort them based on the actual integer values. 
This functionality can be implemented by the following simple program in our DSL:
\[\small 
\begin{BVerbatim}
chain (intCompare (countChar (x, '5'), countChar (y, '5')),
       intCompare (toInt (x), toInt (y)) )
\end{BVerbatim}
\]
\end{example}

\section{Evaluation}

We evaluate \toolname by using it to automatically synthesize (1) string encoders and decoders for program inversion tasks collected from prior work~\cite{inversion-loris} and  (2) string comparators to solve  sorting problems posted on StackOverflow.
The goal of our evaluation is to answer the following questions:
\begin{itemize}
\item How does \toolname perform on various relational synthesis tasks from two  domains?
\item What is the benefit of using HFTAs for relational program synthesis?
\end{itemize}

\paragraph{\textbf{Experimental setup.}} To evaluate the benefit of our approach over a base-line,
we compare our method against  \eusolver~\cite{eusolver}, an enumeration-based synthesizer that  won the General Track of the most recent SyGuS competition~\cite{sygus}. Since \eusolver only supports synthesis tasks in linear integer arithmetic, bitvectors, and basic string manipulations by default, we extend it to encoder/decoders and comparators by implementing the same DSLs  described in Section~\ref{sec:impl}. Additionally,  we implement the same CEGIS loop used in \toolname for \eusolver and use the same bounded verifier in our evaluation. 
All experiments are conducted on a machine with Intel Xeon(R) E5-1620 v3 CPU and 32GB of physical memory, running the Ubuntu 14.04 operating system. Due to finite computational resources, we set a time limit of 24 hours for each benchmark.

\subsection{Results for String Encoders and Decoders}

In our first experiment, we evaluate \toolname by using it to simultaneously synthesize Unicode string encoders and decoders, which are required to be inverses of each other.

\paragraph{\textbf{Benchmarks.}} We collect a total of ten  encoder/decoder benchmarks, seven of which are taken from a prior paper on program inversion~\cite{inversion-loris}.
Since the 14 benchmarks from prior paper~\cite{inversion-loris} are essentially seven pairs of encoders and decoders, we have covered all their encoder/decoder benchmarks.
The remaining three benchmarks, namely, \emph{Base32hex, UTF-32, and UTF-7}, are also well-known encodings. Unlike previous work on program inversion, our goal is to solve the considerably more difficult problem of simultaneously synthesizing the encoder \emph{and}  decoder from input-output examples rather than inverting an existing function. For each benchmark, we use 2-3 input-output examples taken from the documentation of the corresponding encoders. We also specify the relational property $\forall x.$ \texttt{decode(encode(x))=x} and use the encoder/decoder DSLs presented in Section~\ref{sec:encoder}.

\paragraph{\textbf{Main results.}}
Our main experimental results are summarized in Table~\ref{tab:result-codec}, where the first two columns (namely, ``Enc Size'' and ``Dec Size'') describe the size of the target program in terms of the number of AST nodes.~\footnote{We obtain this information by manually writing a simplest DSL program for achieving the desired task.} The next three columns under {\bf \toolname} summarize the results obtained by running \toolname on each of these benchmarks, and the three columns under {\sc \bf \eusolver} report the same results for \eusolver. Specifically, the column labeled ``Iters'' shows the total number of iterations inside the CEGIS loop,  ``Total'' shows the total synthesis time in seconds, and ``Synth'' indicates the time (in seconds) taken by the inductive synthesizer (i.e., excluding verification). If a tool fails to solve the desired task within the 24 hour time limit, we write T/O to indicate a time-out. Finally, the last column labeled ``Speed-up'' shows the speed-up of \toolname over \eusolver for those benchmarks where neither tool times out. 

\begin{table}[t]
\centering
\footnotesize
\caption{Experimental results on Unicode string encoders and decoders.}
\label{tab:result-codec}
\vspace{-0.1in}
\begin{tabular}{|c|c|c|c|c|c|c|c|c|c|c|}
\hline
\multirow{2}{*}{\textbf{Benchmark}} & \textbf{Enc} & \textbf{Dec} & \multicolumn{4}{c|}{\textbf{\toolname}} & \multicolumn{3}{c|}{\textbf{\eusolver}} & \multirow{2}{*}{\textbf{Speedup}} \\
\cline{4-10}
& \textbf{Size} & \textbf{Size} & \textbf{Iters} & \textbf{Total(s)} & \textbf{Synth(s)} & \!\textbf{Mem(MB)}\! & \textbf{Iters} & \textbf{Total(s)} & \textbf{Synth(s)} & \\
\hline
Base16 & 6 & 6 & 3 & 16.2 & 10.4 & 551 & 3 & 494.2 & 489.3 & 30.4x \\
\hline
Base32 & 9 & 8 & 5 & 21.0 & 14.6 & 458 & - & T/O & T/O & - \\
\hline
Base32hex & 9 & 8 & 5 & 22.9 & 15.6 & 468 & - & T/O & T/O & - \\
\hline
Base64 & 9 & 8 & 5 & 16.4 & 8.8 & 916 & - & T/O & T/O & - \\
\hline
Base64xml & 9 & 8 & 6 & 23.4 & 15.5 & 1843 & - & T/O & T/O & - \\
\hline
UU & 10 & 10 & 5 & 23.3 & 15.2 & 843 & - & T/O & T/O & - \\
\hline
UTF-8 & 6 & 6 & 4 & 17.7 & 11.9 & 916 & 4 & 536.4 & 531.8 & 30.2x \\
\hline
UTF-16 & 6 & 6 & 2 & 11.5 & 5.4 & 285 & 4 & 1265.2 & 1259.9 & 110.4x \\
\hline
UTF-32 & 6 & 6 & 2 & 11.6 & 4.8 & 284 & 3 & 771.1 & 765.1 & 66.4x \\
\hline
UTF-7 & 6 & 6 & 3 & 14.8 & 9.2 & 285 & 3 & 475.2 & 470.1 & 32.1x \\
\hline
\hline
\textbf{Average} & 7.6 & 7.2 & 4.0 & 17.9 & 11.1 & 685 & 3.4 & 708.4 & 703.2 & 46.5x \\
\hline
\end{tabular}
\vspace{-0.2in}
\end{table}

As shown in Table~\ref{tab:result-codec}, \toolname can  correctly solve all of these benchmarks~\footnote{We manually inspected the synthesized solutions and  confirmed their correctness.} and takes an average of 17.9 seconds per benchmark. In contrast, \eusolver solves half of the benchmarks within the 24 hour time limit and takes an average of approximately 12 \emph{minutes} per benchmark that it is able to solve. For the five benchmarks that can be solved by  both tools, the average speed-up of \toolname over \eusolver is 46.5x.~\footnote{Here, we use geometric mean to compute the average since the arithmetic mean is not meaningful for ratios.} These statistics clearly demonstrate the advantages of our HFTA-based approach compared to enumerative search: Even though both tools use the same DSLs, verifier, and CEGIS architecture, \toolname unequivocally outperforms \eusolver across all benchmarks.

Next, we compare \toolname and \eusolver in terms of the number of CEGIS iterations for the five benchmarks that can be solved by both tools. As we can see from Table~\ref{tab:result-codec},  \toolname  takes 2.8 CEGIS iterations on average, whereas \eusolver needs an average of 3.4 attempts to find the correct program. This discrepancy suggests that our HFTA-based method might have better generalization power compared to enumerative search. In particular, our method first generates a version space that contains \emph{all} tuples of programs that satisfy the relational specification and then searches for the best program in this version space. In contrast, \eusolver returns the \emph{first} program that satisfies the current set of counterexamples; however, this program may not be the best (i.e., lowest-cost) one in \toolname's version space.

Finally, we compare \toolname against \eusolver in terms of synthesis time per CEGIS iteration: \toolname takes an average of 4.5 seconds per iteration, whereas \eusolver takes 208.4 seconds on average across the five benchmarks that it is able to solve.  To summarize, these results clearly indicate the advantages of our approach compared to enumerative search when synthesizing string encoders and decoders.

\paragraph{\textbf{Memory usage.}}
We now investigate the memory usage of \toolname on the encoder/decoder benchmarks (see column labeled Mem in Table~\ref{tab:result-codec}).  Here, the memory usage varies between 284MB and 1843MB, with the average being 685MB. As we can see from Table~\ref{tab:result-codec}, 
the memory usage mainly depends on (1) the number of CEGIS iterations and (2) the size of programs to be synthesized.
Specifically, as the CEGIS loop progresses, the number of function occurrences in the ground relational specification increases, which results in larger HFTAs. For example, the impact of the number of CEGIS iterations on memory usage becomes apparent by comparing the  ``Base64xml'' and ``UTF-32'' benchmarks, which take 6 and 2 iterations and consume 1843 MB and 284 MB respectively. In addition to the number of CEGIS iterations, the size of the target program also has an impact on memory usage. Intuitively, the larger the synthesized programs, the larger the size of the individual FTAs; thus, memory usage tends to increase with program size.  For example, the impact of program size on memory usage is illustrated by the difference between the ``UU'' and ``Base32'' benchmarks.

\subsection{Results for String Comparators}

In our second experiment, we evaluate \toolname by using it to synthesize string comparators for  sorting problems obtained from StackOverflow.  Even though our goal is to synthesize a single {\tt compare} function, this problem is still a relational synthesis task because the generated program must obey two 2-safety properties (i.e., anti-symmetry and totality) and one 3-safety property (i.e., transitivity). Thus, we believe that comparator synthesis is also an interesting and relevant test-bed for evaluating relational synthesizers.

\paragraph{\textbf{Benchmarks.}}

To perform our evaluation, we collected 20 benchmarks from StackOverflow using the following methodology: 
First, we searched StackOverflow for the keywords ``\emph{Java string comparator}''.
Then, we manually inspected each of these posts and retained exactly those that satisfy the following criteria: 
\begin{itemize}
\item 
The question in the post should involve writing a \texttt{compare} function for sorting strings. 
\item 
The post should contain a list of sample strings that are sorted in the desired way. 
\item 
The post should contain a natural language description of the desired sorting task. 
\end{itemize}

The relational specification $\Psi$ for each benchmark consists of the following three parts: 
\begin{itemize}
\item 
Universally-quantified formulas reflecting the three relational properties that \texttt{compare} has to satisfy (i.e., anti-symmetry, transitivity, and totality). 
\item 
Another quantified formula  that stipulates  reflexivity (i.e., $\forall x. \ ${\tt compare}($x,x$) = 0)
\item 
Quantifier-free formulas that correspond to the input-output examples from the StackOverflow post. In particular, given a sorted  list $l$, if string $x$ appears before string $y$ in $l$, we add the examples \texttt{compare(x,y)} = \texttt{-1} and \texttt{compare(y,x)} = \texttt{1}.
\end{itemize}

Among these benchmarks, the number of examples range from 2 to 30, with an average of 16.

\paragraph{\textbf{Main results.}}

\begin{table}[t]
\centering
\footnotesize 
\caption{Experimental results on comparators.}
\label{tab:result-comp}
\vspace{-0.1in}
\begin{tabular}{|c|c|c|c|c|c|c|c|c|c|}
\hline
\multirow{2}{*}{\textbf{Benchmark}} & \multirow{2}{*}{\textbf{Size}} & \multicolumn{4}{c|}{\textbf{\toolname}} & \multicolumn{3}{c|}{\textbf{\eusolver}} & \multirow{2}{*}{\textbf{Speedup}} \\
\cline{3-9}
& & \textbf{Iters} & \textbf{Total(s)} & \textbf{Synth(s)} & \textbf{Mem(MB)} & \textbf{Iters} & \textbf{Total(s)} & \textbf{Synth(s)} & \\
\hline
comparator-1 & 5 & 2 & 2.0 & 0.1  &  12 &     2 & 2.6 & 0.7 & 1.3x \\
\hline
comparator-2 & 13 & 5 & 3.4 & 0.2 &   1375 &        5 & 5.5 & 2.5 & 1.6x \\
\hline
comparator-3 & 13 & 5 & 4.1 & 0.5 &  385 &         6 & 21.5 & 18.5 & 5.3x \\
\hline
comparator-4 & 23 & 6 & 7.7 & 1.4 &  1401 &         10 & 299.7 & 291.9 & 38.8x \\
\hline
comparator-5 & 21 & 7 & 8.8 & 1.2 &  548 &           10 & 650.9 & 643.3 & 74.1x \\
\hline
comparator-6 & 23 & 4 & 9.1 & 0.2 &   30  &         11 & 57.3 & 52.5 & 6.3x \\
\hline
comparator-7 & 25 & 7 & 9.8 & 0.6 &   1386 &            9 & 640.8 & 632.5 & 65.1x \\
\hline
comparator-8 & 23 & 6 & 9.9 & 1.6 &    117 &         7 & 50.7 & 45.2 & 5.1x \\
\hline
comparator-9 & 41 & 10 & 25.5 & 12.7  &    2146             & - & T/O & T/O & - \\
\hline
comparator-10 & 21 & 11 & 40.7 & 32.4  &  1454             & 12 & 1301.9 & 1295.3 & 32.0x \\
\hline
comparator-11 & 41 & 9 & 42.9 & 7.4  & 4507            & 13 & 13966.0 & 13952.7 & 325.2x \\
\hline
comparator-12 & 17 & 10 & 75.2 & 66.7  & 1045              & 15 & 9171.0 & 9161.4 & 122.0x \\
\hline
comparator-13 & 27 & 10 & 85.0 & 60.1    & 3834              & 13 & 10742.8 & 10736.3 & 126.3x \\
\hline
comparator-14 & 25 & 8 & 102.5 & 93.2   & 2022            & - & T/O & T/O & - \\
\hline
comparator-15 & 17 & 9 & 116.7 & 112.3   &  1622                & 16 & 26443.4 & 26434.6 & 226.5x \\
\hline
comparator-16 & 19 & 7 & 130.1 & 124.2    &   1383                   & - & T/O & T/O & - \\
\hline
comparator-17 & 43 & 11 & 196.2 & 183.4   &  4460                 & - & T/O & T/O & - \\
\hline
comparator-18 & 41 & 10 & 406.5 & 351.0   & 18184            & - & T/O & T/O & - \\
\hline
comparator-19 & 65 & 9 & 523.9 & 485.7  & 18367                 & - & T/O & T/O & - \\
\hline
comparator-20 & - & - & T/O & T/O        & -             & - & T/O & T/O & - \\
\hline
\hline
\textbf{Average} & 26.5 & 7.7 & 94.7 & 80.8 &  3382   & 9.9 & 4873.4 & 4866.7 & 26.0x \\
\hline
\end{tabular}
\vspace{-0.2in}
\end{table}

Our main results are summarized in Table \ref{tab:result-comp}, which is structured in the same way as Table~\ref{tab:result-codec}.
The main take-away message from this experiment is that \toolname can successfully solve 95\% of the benchmarks within the 24 hour time limit. Among these 19 benchmarks,  \toolname takes an average of 94.7 seconds per benchmark, and it solves  55\% of the benchmarks within 1 minute and 75\% of the benchmarks within 2 minutes.

In contrast to \toolname, \eusolver solves considerably fewer benchmarks within the 24 hour time-limit. In particular, \toolname solves 46\% more benchmarks than \eusolver (19 vs. 13) and outperforms \eusolver by 26x (in terms of running time) on the common benchmarks that can be solved by both techniques. Furthermore, similar to the previous experiment, we also observe that \toolname requires fewer CEGIS iterations (7.0 vs. 9.9 on average), again confirming the hypothesis that the HFTA-based approach might have better generalization power. Finally, we note that \toolname is also more efficient than \eusolver per CEGIS iteration (12 seconds vs. 492 seconds).

\paragraph{\textbf{Memory usage.}}
We also investigate the memory usage of \toolname on the comparator benchmarks. As shown in the Mem column of Table~\ref{tab:result-comp}, memory usage varies between 12 MB and 18367MB, with  an average memory consumption of  3382MB. Comparing these statistics with Table~\ref{tab:result-codec}, we see that memory usage is higher for comparators than the encoder/decoder benchmarks. We believe this difference can be attributed to the following three factors:  First, as shown in Table~\ref{tab:result-comp}, the size of the synthesized programs is larger for the comparator domain. Second, the relational specification for comparators is more complex and involves multiple properties such as reflexivity, anti-symmetry, totality, and transitivity. Finally, most benchmarks in the comparator domain require more CEGIS iterations to solve and therefore result in larger HFTAs.

\paragraph{\textbf{Analysis of failed benchmarks.}} We manually inspected the benchmark ``comparator-20'' that \toolname failed to synthesize within the 24 hour time limit. In particular, we found this benchmark is not expressible in our current DSL because it requires comparing integers that are obtained by concatenating all substrings that represent integers in the input strings.

\paragraph{\textbf{Summary.}} In summary, this experiment demonstrates that \toolname can successfully synthesize non-trivial string comparators that arise in real-world scenarios. This experiment also demonstrates the advantages of our new relational synthesis approach compared to an existing state-of-the-art solver. While the comparator synthesis task involves synthesizing a \emph{single} function, the enumeration-based approach performs considerably worse than \toolname because it does not use the relational (i.e., $k$-safety) specification to prune its search space.

\section{Related Work}

In this section, we survey prior work that is most closely related to relational program synthesis.

\vspace{-0.05in}
\paragraph{\textbf{Relational Program Verification.}} There is a large body of work on \emph{verifying} relational properties about programs~\cite{rhl,chl,relsep,product1,mediator,merge}. Existing work on relational verification can be generally categorized into three classes, namely \emph{relational program logics}, \emph{product programs}, and \emph{Constrained Horn Clause (CHC) solving}. The first category includes Benton's Relational Hoare Logic~\cite{rhl} and its variants~\cite{prhl,prhl2,relsep} as well as Cartesian Hoare Logic~\cite{chl,qchl} for verifying $k$-safety properties. In contrast to these approaches that provide a dedicated program logic for reasoning about relational properties, an alternative approach is to build a so-called \emph{product program} that captures the simultaneous execution of multiple programs or different runs of the same program~\cite{product1,barthe2004secure,product-asymmetric}. In the simplest case, this approach sequentially composes different programs (or copies of the same program)~\cite{barthe2004secure}, but more sophisticated product construction methods perform various transformations such as loop fusion and unrolling to make invariant generation easier~\cite{product1, product-asymmetric,program-consolidation,covac}. A common theme underlying all these product construction techniques is to reduce the relational verification problem to a standard safety checking problem.  Another alternative approach that has been explored in prior work is to reduce the relational verification problem to solving a (recursive) set of Constrained Horn Clauses (CHC) and apply transformations that make the problem easier to solve~\cite{chc1,chc2}. To the best of our knowledge, this paper is the first one to address the dual \emph{synthesis} variant of the relational verification problem.

\vspace{-0.05in}
\paragraph{\textbf{Program Synthesis.}} This paper is related to a long line of work on program synthesis dating back to the 1960s~\cite{green}.  Generally speaking, program synthesis techniques can be classified into two classes depending on whether they perform \emph{deductive} or \emph{inductive} reasoning. In particular, deductive synthesizers generate correct-by-construction programs by applying refinement and transformation rules~\cite{refinement,manna1986deductive,fiat,leon,synquid}.
In contrast, inductive synthesizers learn programs from input-output examples using techniques such as constraint solving~\cite{sketch1,simpl}, enumerative search~\cite{eusolver,lambda2}, version space learning~\cite{prose}, stochastic search~\cite{stratified-synth,stoke}, and statistical models and machine learning~\cite{deepsyn,jsnice}.
Similar to most recent work in this area~\cite{sketch1,prose,syngar,gulwani2012synthesis}, our proposed method also uses inductive synthesis. However, a key difference is that we use relational examples in the form of ground formulas rather than the more standard input-output examples.

\vspace{-0.05in}
\paragraph{\textbf{Counterexample-guided Inductive Synthesis.}} The method proposed in this paper follows the popular counterexample-guided inductive synthesis (CEGIS) methodology~\cite{sketch1,sketch2,sygus}. In the CEGIS paradigm, an inductive synthesizer generates a candidate program $P$ that is consistent with a set of examples, and the verifier checks the correctness of $P$ and provides  counterexamples when $P$ does not meet the user-provided specification. Compared to other synthesis algorithms that follow the CEGIS paradigm, the key differences of our method are (a) the use of relational counterexamples and (b) a new inductive synthesis algorithm that utilizes relational specifications.

\vspace{-0.05in}
\paragraph{\textbf{Version Space Learning.}} As mentioned earlier, the inductive synthesizer used in this work can be viewed as a generalization of \emph{version space learning} to the relational setting. The notion of \emph{version space} 
was originally introduced in the 1980s as a supervised learning framework~\cite{vs} and has found numerous applications within the field of program synthesis~\cite{vsa,prose,flashfill,dace,fidex,mitra}. Generally speaking, synthesis algorithms based on version space learning construct some sort of data structure that represents all programs that are consistent with the examples. While existing version space learning algorithms only work with input-output examples, the method proposed here works with arbitrary ground formulas representing relational counterexamples and uses hierarchical finite tree automata to compose the version spaces of individual functions.

\vspace{-0.05in}
\paragraph{\textbf{Program Inversion.}} Prior work has addressed the \emph{program inversion} problem, where the goal is to automatically generate the inverse of a given program~\cite{inv1,inv2,inv3,inversion-swarat,inversion-loris}. Among these, the PINS tool uses inductive synthesis to semi-automate the inversion process based on templates provided by the user~\cite{inversion-swarat}. More recent work describes a fully automated technique, based on symbolic transducers, to generate the inverse of a given program~\cite{inversion-loris}. While program inversion is one of the applications that we consider in this paper, relational synthesis is applicable to many problems beyond program inversion. Furthermore, our approach can be used to  synthesize the program and its inverse \emph{simultaneously} rather than requiring the user to provide one of these functions.

\section{Limitations}

While we have successfully used the proposed relational synthesis method to synthesize encoder-decoder pairs and comparators, our current approach has some limitations that we plan to address in future work. First, our method only works with simple DSLs without recursion, loops, or let bindings.  That is, the programs that can be currently synthesized by \toolname are compositions of built-in functions provided by the DSL. Second, we only allow relational specifications of the form $\forall \vec{x}. \phi(\vec{x})$ where $\phi$ is quantifier-free. Thus, our method does not handle more complex relational specifications with quantifier alternation (e.g., $\forall x. \exists y. f(x,y) = g(y, x)$).

\section{Conclusion}

In this paper, we have introduced \emph{relational program synthesis} ---the problem of synthesizing one or more programs that collectively satisfy a relational specification--- and described its numerous applications in real-world programming scenarios. We have also taken a first step towards solving this problem by presenting a CEGIS-based approach with a novel inductive synthesis component. In particular, the key idea is to construct a relational version space (in the form of a hierarchical finite tree automaton) that encodes all tuples of programs that satisfy  the original specification.

We have implemented this technique in a relational synthesis framework called \toolname which can be instantiated in different application domains by providing a suitable domain-specific language as well as the relevant relational specifications. Our evaluation in two different application domains (namely, encoders/decoders and comparators) demonstrate that \toolname can effectively synthesize  pairs of closely related programs (i.e., inverses) as well as individual programs  that must obey non-trivial $k$-safety specifications (e.g., transitivity). Our evaluation also shows that \toolname significantly outperforms \eusolver, both in terms of the efficiency of inductive synthesis as well as its generalization power per CEGIS iteration.

\section*{Acknowledgments}\label{sec:ack}
We would like to thank Yu Feng, Kostas Ferles, Jiayi Wei, Greg Anderson, and the anonymous OOPSLA'18 reviewers for their thorough and helpful comments on an earlier version of this paper. This material is based on research sponsored by NSF Awards \#1712067, \#1811865, \#1646522, and AFRL Award \#8750-14-2-0270. The views and conclusions contained herein are those of the authors and should not be interpreted as necessarily representing the official policies or endorsements, either expressed or implied, of the U.S. Government.

\bibliography{main}


\begin{thebibliography}{51}


\ifx \showCODEN    \undefined \def \showCODEN     #1{\unskip}     \fi
\ifx \showDOI      \undefined \def \showDOI       #1{#1}\fi
\ifx \showISBNx    \undefined \def \showISBNx     #1{\unskip}     \fi
\ifx \showISBNxiii \undefined \def \showISBNxiii  #1{\unskip}     \fi
\ifx \showISSN     \undefined \def \showISSN      #1{\unskip}     \fi
\ifx \showLCCN     \undefined \def \showLCCN      #1{\unskip}     \fi
\ifx \shownote     \undefined \def \shownote      #1{#1}          \fi
\ifx \showarticletitle \undefined \def \showarticletitle #1{#1}   \fi
\ifx \showURL      \undefined \def \showURL       {\relax}        \fi
\providecommand\bibfield[2]{#2}
\providecommand\bibinfo[2]{#2}
\providecommand\natexlab[1]{#1}
\providecommand\showeprint[2][]{arXiv:#2}

\bibitem[\protect\citeauthoryear{Alur, Bodik, Juniwal, Martin, Raghothaman,
  Seshia, Singh, Solar-Lezama, Torlak, and Udupa}{Alur et~al\mbox{.}}{2013}]%
        {sygus}
\bibfield{author}{\bibinfo{person}{Rajeev Alur}, \bibinfo{person}{Rastislav
  Bodik}, \bibinfo{person}{Garvit Juniwal}, \bibinfo{person}{Milo~MK Martin},
  \bibinfo{person}{Mukund Raghothaman}, \bibinfo{person}{Sanjit~A Seshia},
  \bibinfo{person}{Rishabh Singh}, \bibinfo{person}{Armando Solar-Lezama},
  \bibinfo{person}{Emina Torlak}, {and} \bibinfo{person}{Abhishek Udupa}.}
  \bibinfo{year}{2013}\natexlab{}.
\newblock \showarticletitle{Syntax-guided synthesis}. In
  \bibinfo{booktitle}{\emph{Proc. of {FMCAD}}}. \bibinfo{pages}{1--8}.
\newblock


\bibitem[\protect\citeauthoryear{Alur, Radhakrishna, and Udupa}{Alur
  et~al\mbox{.}}{2017}]%
        {eusolver}
\bibfield{author}{\bibinfo{person}{Rajeev Alur}, \bibinfo{person}{Arjun
  Radhakrishna}, {and} \bibinfo{person}{Abhishek Udupa}.}
  \bibinfo{year}{2017}\natexlab{}.
\newblock \showarticletitle{Scaling Enumerative Program Synthesis via Divide
  and Conquer}. In \bibinfo{booktitle}{\emph{Proc. of {TACAS}}}.
  \bibinfo{pages}{319--336}.
\newblock


\bibitem[\protect\citeauthoryear{Barthe, Crespo, and Kunz}{Barthe
  et~al\mbox{.}}{2011}]%
        {product1}
\bibfield{author}{\bibinfo{person}{Gilles Barthe}, \bibinfo{person}{Juan~Manuel
  Crespo}, {and} \bibinfo{person}{C{\'e}sar Kunz}.}
  \bibinfo{year}{2011}\natexlab{}.
\newblock \showarticletitle{Relational verification using product programs}. In
  \bibinfo{booktitle}{\emph{Proc. of {FM}}}. Springer,
  \bibinfo{pages}{200--214}.
\newblock


\bibitem[\protect\citeauthoryear{Barthe, Crespo, and Kunz}{Barthe
  et~al\mbox{.}}{2013}]%
        {product-asymmetric}
\bibfield{author}{\bibinfo{person}{Gilles Barthe}, \bibinfo{person}{Juan~Manuel
  Crespo}, {and} \bibinfo{person}{C{\'e}sar Kunz}.}
  \bibinfo{year}{2013}\natexlab{}.
\newblock \showarticletitle{Beyond 2-safety: Asymmetric product programs for
  relational program verification}. In \bibinfo{booktitle}{\emph{Proc. of
  {LFCS}}}. Springer, \bibinfo{pages}{29--43}.
\newblock


\bibitem[\protect\citeauthoryear{Barthe, Crespo, and Kunz}{Barthe
  et~al\mbox{.}}{2016}]%
        {product3}
\bibfield{author}{\bibinfo{person}{Gilles Barthe}, \bibinfo{person}{Juan~Manuel
  Crespo}, {and} \bibinfo{person}{C{\'e}sar Kunz}.}
  \bibinfo{year}{2016}\natexlab{}.
\newblock \showarticletitle{Product programs and relational program logics}.
\newblock \bibinfo{journal}{\emph{Journal of Logical and Algebraic Methods in
  Programming}} \bibinfo{volume}{85}, \bibinfo{number}{5}
  (\bibinfo{year}{2016}), \bibinfo{pages}{847--859}.
\newblock


\bibitem[\protect\citeauthoryear{Barthe, D'Argenio, and Rezk}{Barthe
  et~al\mbox{.}}{2004}]%
        {barthe2004secure}
\bibfield{author}{\bibinfo{person}{Gilles Barthe}, \bibinfo{person}{Pedro~R
  D'Argenio}, {and} \bibinfo{person}{Tamara Rezk}.}
  \bibinfo{year}{2004}\natexlab{}.
\newblock \showarticletitle{Secure information flow by self-composition}. In
  \bibinfo{booktitle}{\emph{Proc. of {CSF}}}. \bibinfo{pages}{100--114}.
\newblock


\bibitem[\protect\citeauthoryear{Barthe, Gr{\'e}goire, and B{\'e}guelin}{Barthe
  et~al\mbox{.}}{2012a}]%
        {prhl2}
\bibfield{author}{\bibinfo{person}{Gilles Barthe}, \bibinfo{person}{Benjamin
  Gr{\'e}goire}, {and} \bibinfo{person}{Santiago~Zanella B{\'e}guelin}.}
  \bibinfo{year}{2012}\natexlab{a}.
\newblock \showarticletitle{Probabilistic relational Hoare logics for
  computer-aided security proofs}. In \bibinfo{booktitle}{\emph{International
  Conference on Mathematics of Program Construction}}. Springer,
  \bibinfo{pages}{1--6}.
\newblock


\bibitem[\protect\citeauthoryear{Barthe, K{\"o}pf, Olmedo, and
  Zanella~B{\'e}guelin}{Barthe et~al\mbox{.}}{2012b}]%
        {prhl}
\bibfield{author}{\bibinfo{person}{Gilles Barthe}, \bibinfo{person}{Boris
  K{\"o}pf}, \bibinfo{person}{Federico Olmedo}, {and} \bibinfo{person}{Santiago
  Zanella~B{\'e}guelin}.} \bibinfo{year}{2012}\natexlab{b}.
\newblock \showarticletitle{Probabilistic relational reasoning for differential
  privacy}. In \bibinfo{booktitle}{\emph{Proc. of {POPL}}},
  Vol.~\bibinfo{volume}{47}. \bibinfo{pages}{97--110}.
\newblock


\bibitem[\protect\citeauthoryear{Benton}{Benton}{2004}]%
        {rhl}
\bibfield{author}{\bibinfo{person}{Nick Benton}.}
  \bibinfo{year}{2004}\natexlab{}.
\newblock \showarticletitle{Simple relational correctness proofs for static
  analyses and program transformations}. In \bibinfo{booktitle}{\emph{Proc. of
  {POPL}}}, Vol.~\bibinfo{volume}{39}. \bibinfo{pages}{14--25}.
\newblock


\bibitem[\protect\citeauthoryear{Chen, Feng, and Dillig}{Chen
  et~al\mbox{.}}{2017}]%
        {qchl}
\bibfield{author}{\bibinfo{person}{Jia Chen}, \bibinfo{person}{Yu Feng}, {and}
  \bibinfo{person}{Isil Dillig}.} \bibinfo{year}{2017}\natexlab{}.
\newblock \showarticletitle{Precise Detection of Side-Channel Vulnerabilities
  using Quantitative Cartesian Hoare Logic}. In \bibinfo{booktitle}{\emph{Proc.
  of {CCS}}}. \bibinfo{pages}{875--890}.
\newblock


\bibitem[\protect\citeauthoryear{Chen and Udding}{Chen and Udding}{1990}]%
        {inv2}
\bibfield{author}{\bibinfo{person}{Wei Chen} {and} \bibinfo{person}{Jan~Tijmen
  Udding}.} \bibinfo{year}{1990}\natexlab{}.
\newblock \showarticletitle{Program inversion: More than fun!}
\newblock \bibinfo{journal}{\emph{Science of Comp. Prog.}}
  \bibinfo{volume}{15}, \bibinfo{number}{1} (\bibinfo{year}{1990}),
  \bibinfo{pages}{1--13}.
\newblock


\bibitem[\protect\citeauthoryear{Comon}{Comon}{1997}]%
        {tata}
\bibfield{author}{\bibinfo{person}{Hubert Comon}.}
  \bibinfo{year}{1997}\natexlab{}.
\newblock \showarticletitle{Tree automata techniques and applications}.
\newblock \bibinfo{journal}{\emph{http://www. grappa. univ-lille3. fr/tata}}
  (\bibinfo{year}{1997}).
\newblock


\bibitem[\protect\citeauthoryear{De~Angelis, Fioravanti, Pettorossi, and
  Proietti}{De~Angelis et~al\mbox{.}}{2016}]%
        {chc1}
\bibfield{author}{\bibinfo{person}{Emanuele De~Angelis}, \bibinfo{person}{Fabio
  Fioravanti}, \bibinfo{person}{Alberto Pettorossi}, {and}
  \bibinfo{person}{Maurizio Proietti}.} \bibinfo{year}{2016}\natexlab{}.
\newblock \showarticletitle{Relational verification through horn clause
  transformation}. In \bibinfo{booktitle}{\emph{Proc. of {SAS}}}. Springer,
  \bibinfo{pages}{147--169}.
\newblock


\bibitem[\protect\citeauthoryear{Delaware, Pit-Claudel, Gross, and
  Chlipala}{Delaware et~al\mbox{.}}{2015}]%
        {fiat}
\bibfield{author}{\bibinfo{person}{Benjamin Delaware},
  \bibinfo{person}{Cl{\'e}ment Pit-Claudel}, \bibinfo{person}{Jason Gross},
  {and} \bibinfo{person}{Adam Chlipala}.} \bibinfo{year}{2015}\natexlab{}.
\newblock \showarticletitle{Fiat: Deductive synthesis of abstract data types in
  a proof assistant}. In \bibinfo{booktitle}{\emph{Proc. of {POPL}}},
  Vol.~\bibinfo{volume}{50}. \bibinfo{pages}{689--700}.
\newblock


\bibitem[\protect\citeauthoryear{Dijkstra}{Dijkstra}{1982}]%
        {inv1}
\bibfield{author}{\bibinfo{person}{Edsger~W Dijkstra}.}
  \bibinfo{year}{1982}\natexlab{}.
\newblock \showarticletitle{Program inversion}.
\newblock In \bibinfo{booktitle}{\emph{Selected Writings on Computing: A
  Personal Perspective}}. \bibinfo{publisher}{Springer},
  \bibinfo{pages}{351--354}.
\newblock


\bibitem[\protect\citeauthoryear{Feser, Chaudhuri, and Dillig}{Feser
  et~al\mbox{.}}{2015}]%
        {lambda2}
\bibfield{author}{\bibinfo{person}{John~K. Feser}, \bibinfo{person}{Swarat
  Chaudhuri}, {and} \bibinfo{person}{Isil Dillig}.}
  \bibinfo{year}{2015}\natexlab{}.
\newblock \showarticletitle{Synthesizing data structure transformations from
  input-output examples}. In \bibinfo{booktitle}{\emph{Proc. of {PLDI}}}.
  \bibinfo{pages}{229--239}.
\newblock


\bibitem[\protect\citeauthoryear{Gallo, Longo, Pallottino, and Nguyen}{Gallo
  et~al\mbox{.}}{1993}]%
        {min-bpath}
\bibfield{author}{\bibinfo{person}{Giorgio Gallo}, \bibinfo{person}{Giustino
  Longo}, \bibinfo{person}{Stefano Pallottino}, {and} \bibinfo{person}{Sang
  Nguyen}.} \bibinfo{year}{1993}\natexlab{}.
\newblock \showarticletitle{Directed hypergraphs and applications}.
\newblock \bibinfo{journal}{\emph{Discrete applied mathematics}}
  \bibinfo{volume}{42}, \bibinfo{number}{2-3} (\bibinfo{year}{1993}),
  \bibinfo{pages}{177--201}.
\newblock


\bibitem[\protect\citeauthoryear{Green}{Green}{1969}]%
        {green}
\bibfield{author}{\bibinfo{person}{Cordell Green}.}
  \bibinfo{year}{1969}\natexlab{}.
\newblock \showarticletitle{Application of theorem proving to problem solving}.
\newblock In \bibinfo{booktitle}{\emph{Proc. of {IJCAI}}}.
  \bibinfo{pages}{219--239}.
\newblock


\bibitem[\protect\citeauthoryear{Gulwani}{Gulwani}{2011}]%
        {flashfill}
\bibfield{author}{\bibinfo{person}{Sumit Gulwani}.}
  \bibinfo{year}{2011}\natexlab{}.
\newblock \showarticletitle{Automating string processing in spreadsheets using
  input-output examples}. In \bibinfo{booktitle}{\emph{Proc. of {POPL}}}.
  \bibinfo{pages}{317--330}.
\newblock


\bibitem[\protect\citeauthoryear{Gulwani}{Gulwani}{2012}]%
        {gulwani2012synthesis}
\bibfield{author}{\bibinfo{person}{Sumit Gulwani}.}
  \bibinfo{year}{2012}\natexlab{}.
\newblock \showarticletitle{Synthesis from examples: Interaction models and
  algorithms}. In \bibinfo{booktitle}{\emph{Symbolic and Numeric Algorithms for
  Scientific Computing (SYNASC)}}. \bibinfo{pages}{8--14}.
\newblock


\bibitem[\protect\citeauthoryear{Heule, Schkufza, Sharma, and Aiken}{Heule
  et~al\mbox{.}}{2016}]%
        {stratified-synth}
\bibfield{author}{\bibinfo{person}{Stefan Heule}, \bibinfo{person}{Eric
  Schkufza}, \bibinfo{person}{Rahul Sharma}, {and} \bibinfo{person}{Alex
  Aiken}.} \bibinfo{year}{2016}\natexlab{}.
\newblock \showarticletitle{Stratified synthesis: automatically learning the
  x86-64 instruction set}. In \bibinfo{booktitle}{\emph{Proc. of {PLDI}}}.
  \bibinfo{pages}{237--250}.
\newblock


\bibitem[\protect\citeauthoryear{Hu and D'Antoni}{Hu and D'Antoni}{2017}]%
        {inversion-loris}
\bibfield{author}{\bibinfo{person}{Qinheping Hu} {and} \bibinfo{person}{Loris
  D'Antoni}.} \bibinfo{year}{2017}\natexlab{}.
\newblock \showarticletitle{Automatic program inversion using symbolic
  transducers}. In \bibinfo{booktitle}{\emph{Proc. of {PLDI}}}.
  \bibinfo{pages}{376--389}.
\newblock


\bibitem[\protect\citeauthoryear{Jobstmann, Griesmayer, and Bloem}{Jobstmann
  et~al\mbox{.}}{2005}]%
        {repair2}
\bibfield{author}{\bibinfo{person}{Barbara Jobstmann}, \bibinfo{person}{Andreas
  Griesmayer}, {and} \bibinfo{person}{Roderick Bloem}.}
  \bibinfo{year}{2005}\natexlab{}.
\newblock \showarticletitle{Program repair as a game}. In
  \bibinfo{booktitle}{\emph{Proc. of {CAV}}}. \bibinfo{pages}{226--238}.
\newblock


\bibitem[\protect\citeauthoryear{Kneuss, Kuraj, Kuncak, and Suter}{Kneuss
  et~al\mbox{.}}{2013}]%
        {leon}
\bibfield{author}{\bibinfo{person}{Etienne Kneuss}, \bibinfo{person}{Ivan
  Kuraj}, \bibinfo{person}{Viktor Kuncak}, {and} \bibinfo{person}{Philippe
  Suter}.} \bibinfo{year}{2013}\natexlab{}.
\newblock \showarticletitle{Synthesis modulo recursive functions}.
\newblock \bibinfo{journal}{\emph{Proc. of {OOPSLA}}} \bibinfo{volume}{48},
  \bibinfo{number}{10} (\bibinfo{year}{2013}), \bibinfo{pages}{407--426}.
\newblock


\bibitem[\protect\citeauthoryear{Lau, Wolfman, Domingos, and Weld}{Lau
  et~al\mbox{.}}{2003}]%
        {vsa}
\bibfield{author}{\bibinfo{person}{Tessa Lau}, \bibinfo{person}{Steven~A.
  Wolfman}, \bibinfo{person}{Pedro Domingos}, {and} \bibinfo{person}{Daniel~S.
  Weld}.} \bibinfo{year}{2003}\natexlab{}.
\newblock \showarticletitle{{Programming by Demonstration Using Version Space
  Algebra}}.
\newblock \bibinfo{journal}{\emph{Machine Learning}} \bibinfo{volume}{53},
  \bibinfo{number}{1-2} (\bibinfo{year}{2003}), \bibinfo{pages}{111--156}.
\newblock


\bibitem[\protect\citeauthoryear{Manna and Waldinger}{Manna and
  Waldinger}{1986}]%
        {manna1986deductive}
\bibfield{author}{\bibinfo{person}{Zohar Manna} {and} \bibinfo{person}{Richard
  Waldinger}.} \bibinfo{year}{1986}\natexlab{}.
\newblock \showarticletitle{A deductive approach to program synthesis}.
\newblock In \bibinfo{booktitle}{\emph{Readings in artificial intelligence and
  software engineering}}. \bibinfo{publisher}{Elsevier},
  \bibinfo{pages}{3--34}.
\newblock


\bibitem[\protect\citeauthoryear{Mitchell}{Mitchell}{1982}]%
        {vs}
\bibfield{author}{\bibinfo{person}{Tom~M Mitchell}.}
  \bibinfo{year}{1982}\natexlab{}.
\newblock \showarticletitle{{Generalization as search}}.
\newblock \bibinfo{journal}{\emph{Artificial intelligence}}
  \bibinfo{volume}{18}, \bibinfo{number}{2} (\bibinfo{year}{1982}),
  \bibinfo{pages}{203--226}.
\newblock


\bibitem[\protect\citeauthoryear{Mordvinov and Fedyukovich}{Mordvinov and
  Fedyukovich}{2017}]%
        {chc2}
\bibfield{author}{\bibinfo{person}{Dmitry Mordvinov} {and}
  \bibinfo{person}{Grigory Fedyukovich}.} \bibinfo{year}{2017}\natexlab{}.
\newblock \showarticletitle{Synchronizing constrained Horn clauses}.
\newblock \bibinfo{journal}{\emph{LPAR, EPiC Series in Computing}}
  (\bibinfo{year}{2017}).
\newblock


\bibitem[\protect\citeauthoryear{Morgan}{Morgan}{1994}]%
        {refinement}
\bibfield{author}{\bibinfo{person}{Carroll Morgan}.}
  \bibinfo{year}{1994}\natexlab{}.
\newblock \bibinfo{booktitle}{\emph{Programming from specifications}}.
\newblock \bibinfo{publisher}{Prentice Hall}.
\newblock


\bibitem[\protect\citeauthoryear{Nguyen, Qi, Roychoudhury, and Chandra}{Nguyen
  et~al\mbox{.}}{2013}]%
        {repair1}
\bibfield{author}{\bibinfo{person}{Hoang Duong~Thien Nguyen},
  \bibinfo{person}{Dawei Qi}, \bibinfo{person}{Abhik Roychoudhury}, {and}
  \bibinfo{person}{Satish Chandra}.} \bibinfo{year}{2013}\natexlab{}.
\newblock \showarticletitle{Semfix: Program repair via semantic analysis}. In
  \bibinfo{booktitle}{\emph{Proc. of {ICSE}}}. \bibinfo{pages}{772--781}.
\newblock


\bibitem[\protect\citeauthoryear{Polikarpova, Kuraj, and
  Solar{-}Lezama}{Polikarpova et~al\mbox{.}}{2016}]%
        {synquid}
\bibfield{author}{\bibinfo{person}{Nadia Polikarpova}, \bibinfo{person}{Ivan
  Kuraj}, {and} \bibinfo{person}{Armando Solar{-}Lezama}.}
  \bibinfo{year}{2016}\natexlab{}.
\newblock \showarticletitle{Program synthesis from polymorphic refinement
  types}. In \bibinfo{booktitle}{\emph{Proc. of {PLDI}}}.
  \bibinfo{pages}{522--538}.
\newblock


\bibitem[\protect\citeauthoryear{Polozov and Gulwani}{Polozov and
  Gulwani}{2015}]%
        {prose}
\bibfield{author}{\bibinfo{person}{Oleksandr Polozov} {and}
  \bibinfo{person}{Sumit Gulwani}.} \bibinfo{year}{2015}\natexlab{}.
\newblock \showarticletitle{FlashMeta: A framework for inductive program
  synthesis}. In \bibinfo{booktitle}{\emph{Proc. of {OOPSLA}}},
  Vol.~\bibinfo{volume}{50}. \bibinfo{pages}{107--126}.
\newblock


\bibitem[\protect\citeauthoryear{Raychev, Bielik, Vechev, and Krause}{Raychev
  et~al\mbox{.}}{2016}]%
        {deepsyn}
\bibfield{author}{\bibinfo{person}{Veselin Raychev}, \bibinfo{person}{Pavol
  Bielik}, \bibinfo{person}{Martin~T. Vechev}, {and} \bibinfo{person}{Andreas
  Krause}.} \bibinfo{year}{2016}\natexlab{}.
\newblock \showarticletitle{Learning programs from noisy data}. In
  \bibinfo{booktitle}{\emph{Proc. of {POPL}}}. \bibinfo{pages}{761--774}.
\newblock


\bibitem[\protect\citeauthoryear{Raychev, Vechev, and Krause}{Raychev
  et~al\mbox{.}}{2015}]%
        {jsnice}
\bibfield{author}{\bibinfo{person}{Veselin Raychev}, \bibinfo{person}{Martin~T.
  Vechev}, {and} \bibinfo{person}{Andreas Krause}.}
  \bibinfo{year}{2015}\natexlab{}.
\newblock \showarticletitle{Predicting Program Properties from "Big Code"}. In
  \bibinfo{booktitle}{\emph{Proc. of {POPL}}}. \bibinfo{pages}{111--124}.
\newblock


\bibitem[\protect\citeauthoryear{Ross}{Ross}{1997}]%
        {inv3}
\bibfield{author}{\bibinfo{person}{Brian~J Ross}.}
  \bibinfo{year}{1997}\natexlab{}.
\newblock \showarticletitle{Running programs backwards: the logical inversion
  of imperative computation}.
\newblock \bibinfo{journal}{\emph{Formal Aspects of Computing}}
  \bibinfo{volume}{9}, \bibinfo{number}{3} (\bibinfo{year}{1997}),
  \bibinfo{pages}{331--348}.
\newblock


\bibitem[\protect\citeauthoryear{Schkufza, Sharma, and Aiken}{Schkufza
  et~al\mbox{.}}{2013}]%
        {stoke}
\bibfield{author}{\bibinfo{person}{Eric Schkufza}, \bibinfo{person}{Rahul
  Sharma}, {and} \bibinfo{person}{Alex Aiken}.}
  \bibinfo{year}{2013}\natexlab{}.
\newblock \showarticletitle{Stochastic superoptimization}. In
  \bibinfo{booktitle}{\emph{Proc. of {ASPLOS}}}. \bibinfo{pages}{305--316}.
\newblock


\bibitem[\protect\citeauthoryear{So and Oh}{So and Oh}{2017}]%
        {simpl}
\bibfield{author}{\bibinfo{person}{Sunbeom So} {and} \bibinfo{person}{Hakjoo
  Oh}.} \bibinfo{year}{2017}\natexlab{}.
\newblock \showarticletitle{Synthesizing Imperative Programs from Examples
  Guided by Static Analysis}. In \bibinfo{booktitle}{\emph{Proc. of {SAS}}}.
  \bibinfo{pages}{364--381}.
\newblock


\bibitem[\protect\citeauthoryear{Solar-Lezama}{Solar-Lezama}{2008}]%
        {cegis}
\bibfield{author}{\bibinfo{person}{Armando Solar-Lezama}.}
  \bibinfo{year}{2008}\natexlab{}.
\newblock \bibinfo{booktitle}{\emph{Program synthesis by sketching}}.
\newblock \bibinfo{publisher}{University of California, Berkeley}.
\newblock


\bibitem[\protect\citeauthoryear{Solar-Lezama}{Solar-Lezama}{2009}]%
        {sketch1}
\bibfield{author}{\bibinfo{person}{Armando Solar-Lezama}.}
  \bibinfo{year}{2009}\natexlab{}.
\newblock \showarticletitle{The sketching approach to program synthesis}. In
  \bibinfo{booktitle}{\emph{Proc. of {APLAS}}}. Springer,
  \bibinfo{pages}{4--13}.
\newblock


\bibitem[\protect\citeauthoryear{Solar-Lezama, Tancau, Bodik, Seshia, and
  Saraswat}{Solar-Lezama et~al\mbox{.}}{2006}]%
        {sketch2}
\bibfield{author}{\bibinfo{person}{Armando Solar-Lezama},
  \bibinfo{person}{Liviu Tancau}, \bibinfo{person}{Rastislav Bodik},
  \bibinfo{person}{Sanjit Seshia}, {and} \bibinfo{person}{Vijay Saraswat}.}
  \bibinfo{year}{2006}\natexlab{}.
\newblock \showarticletitle{Combinatorial sketching for finite programs}.
\newblock \bibinfo{journal}{\emph{Proc. of {ASPLOS}}} \bibinfo{volume}{41},
  \bibinfo{number}{11} (\bibinfo{year}{2006}), \bibinfo{pages}{404--415}.
\newblock


\bibitem[\protect\citeauthoryear{Sousa and Dillig}{Sousa and Dillig}{2016}]%
        {chl}
\bibfield{author}{\bibinfo{person}{Marcelo Sousa} {and} \bibinfo{person}{Isil
  Dillig}.} \bibinfo{year}{2016}\natexlab{}.
\newblock \showarticletitle{Cartesian hoare logic for verifying k-safety
  properties}. In \bibinfo{booktitle}{\emph{Proc. of {PLDI}}}.
  \bibinfo{pages}{57--69}.
\newblock


\bibitem[\protect\citeauthoryear{Sousa, Dillig, and Lahiri}{Sousa
  et~al\mbox{.}}{2018}]%
        {merge}
\bibfield{author}{\bibinfo{person}{Marcelo Sousa}, \bibinfo{person}{Isil
  Dillig}, {and} \bibinfo{person}{Shuvendu~K. Lahiri}.}
  \bibinfo{year}{2018}\natexlab{}.
\newblock \showarticletitle{{Verified Three-Way Program Merge}}.
\newblock \bibinfo{journal}{\emph{{PACMPL}}} \bibinfo{number}{{OOPSLA}}
  (\bibinfo{year}{2018}).
\newblock


\bibitem[\protect\citeauthoryear{Sousa, Dillig, Vytiniotis, Dillig, and
  Gkantsidis}{Sousa et~al\mbox{.}}{2014}]%
        {program-consolidation}
\bibfield{author}{\bibinfo{person}{Marcelo Sousa}, \bibinfo{person}{Isil
  Dillig}, \bibinfo{person}{Dimitrios Vytiniotis}, \bibinfo{person}{Thomas
  Dillig}, {and} \bibinfo{person}{Christos Gkantsidis}.}
  \bibinfo{year}{2014}\natexlab{}.
\newblock \showarticletitle{Consolidation of queries with user-defined
  functions}. In \bibinfo{booktitle}{\emph{Proc. of {PLDI}}},
  Vol.~\bibinfo{volume}{49}. \bibinfo{pages}{554--564}.
\newblock


\bibitem[\protect\citeauthoryear{Srivastava, Gulwani, Chaudhuri, and
  Foster}{Srivastava et~al\mbox{.}}{2011}]%
        {inversion-swarat}
\bibfield{author}{\bibinfo{person}{Saurabh Srivastava}, \bibinfo{person}{Sumit
  Gulwani}, \bibinfo{person}{Swarat Chaudhuri}, {and}
  \bibinfo{person}{Jeffrey~S Foster}.} \bibinfo{year}{2011}\natexlab{}.
\newblock \showarticletitle{Path-based inductive synthesis for program
  inversion}. In \bibinfo{booktitle}{\emph{Proc. of {PLDI}}},
  Vol.~\bibinfo{volume}{46}. \bibinfo{pages}{492--503}.
\newblock


\bibitem[\protect\citeauthoryear{Wang, Dillig, and Singh}{Wang
  et~al\mbox{.}}{2017}]%
        {dace}
\bibfield{author}{\bibinfo{person}{Xinyu Wang}, \bibinfo{person}{Isil Dillig},
  {and} \bibinfo{person}{Rishabh Singh}.} \bibinfo{year}{2017}\natexlab{}.
\newblock \showarticletitle{Synthesis of data completion scripts using finite
  tree automata}.
\newblock \bibinfo{journal}{\emph{{PACMPL}}} \bibinfo{volume}{1},
  \bibinfo{number}{{OOPSLA}} (\bibinfo{year}{2017}),
  \bibinfo{pages}{62:1--62:26}.
\newblock


\bibitem[\protect\citeauthoryear{Wang, Dillig, and Singh}{Wang
  et~al\mbox{.}}{2018a}]%
        {syngar}
\bibfield{author}{\bibinfo{person}{Xinyu Wang}, \bibinfo{person}{Isil Dillig},
  {and} \bibinfo{person}{Rishabh Singh}.} \bibinfo{year}{2018}\natexlab{a}.
\newblock \showarticletitle{Program synthesis using abstraction refinement}.
\newblock \bibinfo{journal}{\emph{{PACMPL}}} \bibinfo{volume}{2},
  \bibinfo{number}{{POPL}} (\bibinfo{year}{2018}),
  \bibinfo{pages}{63:1--63:30}.
\newblock


\bibitem[\protect\citeauthoryear{Wang, Gulwani, and Singh}{Wang
  et~al\mbox{.}}{2016}]%
        {fidex}
\bibfield{author}{\bibinfo{person}{Xinyu Wang}, \bibinfo{person}{Sumit
  Gulwani}, {and} \bibinfo{person}{Rishabh Singh}.}
  \bibinfo{year}{2016}\natexlab{}.
\newblock \showarticletitle{{FIDEX:} filtering spreadsheet data using
  examples}. In \bibinfo{booktitle}{\emph{Proc. of {OOPSLA}}}.
  \bibinfo{pages}{195--213}.
\newblock


\bibitem[\protect\citeauthoryear{Wang, Dillig, Lahiri, and Cook}{Wang
  et~al\mbox{.}}{2018b}]%
        {mediator}
\bibfield{author}{\bibinfo{person}{Yuepeng Wang}, \bibinfo{person}{Isil
  Dillig}, \bibinfo{person}{Shuvendu~K. Lahiri}, {and}
  \bibinfo{person}{William~R. Cook}.} \bibinfo{year}{2018}\natexlab{b}.
\newblock \showarticletitle{Verifying equivalence of database-driven
  applications}.
\newblock \bibinfo{journal}{\emph{{PACMPL}}} \bibinfo{volume}{2},
  \bibinfo{number}{{POPL}} (\bibinfo{year}{2018}),
  \bibinfo{pages}{56:1--56:29}.
\newblock


\bibitem[\protect\citeauthoryear{Yaghmazadeh, Wang, and Dillig}{Yaghmazadeh
  et~al\mbox{.}}{2018}]%
        {mitra}
\bibfield{author}{\bibinfo{person}{Navid Yaghmazadeh}, \bibinfo{person}{Xinyu
  Wang}, {and} \bibinfo{person}{Isil Dillig}.} \bibinfo{year}{2018}\natexlab{}.
\newblock \showarticletitle{{Automated Migration of Hierarchical Data to
  Relational Tables using Programming-by-Example}}.
\newblock \bibinfo{journal}{\emph{Proc. of {VLDB}}} \bibinfo{volume}{11},
  \bibinfo{number}{5} (\bibinfo{year}{2018}), \bibinfo{pages}{580--593}.
\newblock


\bibitem[\protect\citeauthoryear{Yang}{Yang}{2007}]%
        {relsep}
\bibfield{author}{\bibinfo{person}{Hongseok Yang}.}
  \bibinfo{year}{2007}\natexlab{}.
\newblock \showarticletitle{Relational separation logic}.
\newblock \bibinfo{journal}{\emph{Theoretical Computer Science}}
  \bibinfo{volume}{375}, \bibinfo{number}{1-3} (\bibinfo{year}{2007}),
  \bibinfo{pages}{308--334}.
\newblock


\bibitem[\protect\citeauthoryear{Zaks and Pnueli}{Zaks and Pnueli}{2008}]%
        {covac}
\bibfield{author}{\bibinfo{person}{Anna Zaks} {and} \bibinfo{person}{Amir
  Pnueli}.} \bibinfo{year}{2008}\natexlab{}.
\newblock \showarticletitle{Covac: Compiler validation by program analysis of
  the cross-product}.
\newblock In \bibinfo{booktitle}{\emph{Proc. of {FM}}}.
  \bibinfo{publisher}{Springer}, \bibinfo{pages}{35--51}.
\newblock


\end{thebibliography}

\newcommand*{\extended}{} 
\ifdefined\extended
\appendix
\section{Proof of Theorems}


\begin{lemma}\label{lem:sound-fta}
Suppose FTA $\fta = (\ftaStates, \ftaAlphabet, \ftaFinal, \ftaTrans)$ is built with grammar $G$ and initial states $[\ftaStates_1, \ldots, \ftaStates_m]$ using rules from Figure~\ref{fig:buildFTA}. Given a tree $t(x_1, \ldots, x_m)$ that is accepted by $\fta$, we have
\begin{enumerate}
\item $t$ conforms to grammar $G$.
\item The accepting run of $t$ with $x_i = c_i$, where $q^{c_i}_{x_i} \in \ftaStates_i$, is rooted with final state $q^c_s$ and $c = \denot{t(c_1, \ldots, c_m)}$.
\end{enumerate}
\end{lemma}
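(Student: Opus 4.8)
The plan is to prove a slightly strengthened statement by structural induction on the tree $t$ (equivalently, by induction on an accepting run $\ftaRun$ of $\fta$ on $t$): part~(2) as stated only constrains the root state, but the induction naturally needs to reason about arbitrary states. Concretely, I would fix values $c_1, \ldots, c_m$ such that each $q^{c_i}_{x_i}$ is an initial state introduced by the Input rule (i.e.\ $q^{c_i}_{s} \in \ftaStates_i$ for some nonterminal $s$), and prove: for every subtree $t'$ of $t$ with root node $v$, if $\ftaRun$ maps every leaf of $t'$ labeled $x_i$ to $q^{c_i}_{x_i}$ and $\ftaRun(v)$ is a state carrying value $c$ for grammar symbol $s$ (a nonterminal state $q^c_s$, or an input state $q^c_{x_i}$), then (i) $t'$ is derivable from $s$ in $G$, and (ii) $c = \denot{t'(c_1, \ldots, c_m)}$. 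The lemma then follows: since $t$ is accepted, $\ftaRun$ maps the root of $t$ to a final state, which by the Output rule has the form $q^c_{s_0}$ for the start symbol $s_0$ of $G$; instantiating the strengthened claim with $t' = t$ and $s = s_0$ gives both conclusions (so the ``$q^c_s$'' of the statement is really $q^c_{s_0}$).

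For the base cases I would consider the two kinds of leaves. A leaf labeled by a nullary DSL symbol (constant) $\sigma$ can receive a nonterminal state only via the nullary instance of the Prod rule applied to some production $s \to \sigma$, which records value $\denot{\sigma}$; since the subtree has no variables this equals $\denot{t'(c_1, \ldots, c_m)}$, and conformance holds because $s \to \sigma \in P$. A leaf labeled $x_i$ is mapped by $\ftaRun$ to the initial state $q^{c_i}_{x_i}$ by the Input rule and contributes exactly the value $c_i = \denot{x_i}$ under the assignment $x_i \mapsto c_i$; depending on how variable occurrences are encoded in the AST (e.g.\ via the identity-production shorthand of Section~\ref{sec:fta-construct}), this state is subsequently lifted to a nonterminal state carrying the same value $c_i$, so the claim is maintained.

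For the inductive step, write $t' = \sigma(t'_1, \ldots, t'_n)$ with root $v$ and children $v_1, \ldots, v_n$ (the roots of $t'_1, \ldots, t'_n$). Because $\ftaRun$ is a run, $\ftaRun(v) = q^c_s$ forces a transition $\sigma(q^{c'_1}_{s_1}, \ldots, q^{c'_n}_{s_n}) \to q^c_s \in \ftaTrans$ with $q^{c'_j}_{s_j} = \ftaRun(v_j)$; inspecting the Prod rule, such a transition is created only because (a) $s \to \sigma(s_1, \ldots, s_n) \in P$ and (b) $c = \denot{\sigma(c'_1, \ldots, c'_n)}$. Applying the induction hypothesis to each $t'_j$ gives that $t'_j$ is derivable from $s_j$ and $c'_j = \denot{t'_j(c_1, \ldots, c_m)}$. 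Combining (a) with the former shows $t'$ is derivable from $s$, and substituting the latter into (b) yields $c = \denot{\sigma\bigl(\denot{t'_1(c_1,\ldots,c_m)}, \ldots, \denot{t'_n(c_1,\ldots,c_m)}\bigr)} = \denot{t'(c_1, \ldots, c_m)}$ by compositionality of $\denot{\cdot}$.

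The only genuine subtlety — and where I expect to spend the most care — is pinning down the strengthened invariant and handling the variable leaves, since the Input rule decouples the $x_i$-transitions from the grammar productions: one must track that an $x_i$-leaf in the AST really corresponds to applying a production $s \to x_i$ and that the initial state $q^{c_i}_{x_i}$ is promoted to a nonterminal state carrying value $c_i$. Everything else is a routine unfolding of the two rules in Fig.~\ref{fig:buildFTA}; in particular, the bound on the number of Prod applications is irrelevant here, since soundness speaks only about trees that are already accepted.
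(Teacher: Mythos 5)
Your proof is correct and follows essentially the same route as the paper's: both strengthen part~(2) to speak about arbitrary (not necessarily final) states, induct on the structure/height of $t$, read off grammar conformance and the value $c = \denot{t(c_1,\ldots,c_m)}$ from the Input and Prod rules, and only invoke acceptance (the Output rule) at the root in the final step. Your treatment is if anything slightly more careful than the paper's, since you additionally track derivability from the specific nonterminal $s$ and handle constant leaves explicitly.
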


\begin{proof}
To prove by induction on height of $t$, we first strengthen (2) to: the run of $t$ with $x_i = c_i$ is rooted with state $q^c_s$ and $c = \denot{t(c_1, \ldots, c_m)}$.
\begin{itemize}
\item Base case: height is 1.
In this case, $t$ only has one variable $x$, and $q^c_{s}$ is passed as an input.
Consider the \textsf{Input} rule in Figure~\ref{fig:buildFTA}, we know $q^c_{x} \in \ftaStates$ and $x \to q^c_x \in \ftaTrans$.
Given $t$ is accepted by $\fta$, $q^c_x$ is a final state, so $x$ conforms to grammar $G$.
The run has only one node $q^c_x$, and $c = \denot{t(c)}$.
\item Inductive case: Suppose the lemma is correct for term with height no larger than $h$, and the goal is to prove it is correct for term $t$ of height $h+1$.
Without loss of generality, assume $t = \sigma(t_1, \ldots, t_n)$ is of height $h+1$, so the height of terms $t_1, \ldots, t_n$ is at most $h$.
By inductive hypothesis, $t_i$ conforms to grammar $G$ for $i \in [1, n]$.
Furthermore, the run of $t_i$ is rooted with state $q^{a_i}_{s_i}$ where $a_i = \denot{t_i}$ on environment $\set{x_j \mapsto c_j ~|~ j \in [1, m]}$. 
Based on the \textsf{Prod} rule in Figure~\ref{fig:buildFTA}, we know the transition $\sigma(q^{a_1}_{s_1}, \ldots, q^{a_n}_{s_n}) \to q^c_s$ is only added to $\ftaTrans$ if there is production $\sigma(s_1, \ldots, s_n) \to s$ in grammar $G$ and $c = \denot{\sigma(a_1, \ldots, a_n)}$.
Therefore, $t$ conforms to grammar $G$ and the run of $t$ with $x_i = c_i$ is rooted with state $q^c_s$ and $c = \denot{t(c_1, \ldots, c_m)}$.
\end{itemize}
Since term $t$ is accepted by $\fta$, we know $q^c_s$ is a final state, which concludes the proof.
\end{proof}

\begin{lemma}\label{lem:sound-relaxed-term}
Suppose $\grammars,\occurs \vdash t \leadsto \hfta$ according to the rules from Figure~\ref{fig:build-hfta}, where $t$ is a term, and let $f_1, \ldots, f_n$ be the function symbols used in $t$. Given a hierarchical tree $\htree = (\htreeNodes, \htreeAnnot, \htreeRoot, \htreeEdges)$ that is accepted by $\hfta$, we have:
\begin{enumerate}
\item $\htreeAnnot(v_{f_i})$ is a program that conforms to grammar $\grammars(\occurs^{-1}(f_i))$
\item The run of $\htreeAnnot(\htreeRoot)$ on $\hftaAnnot(\hftaRoot)$ is rooted with final state $q^{\denot{t}}_{s}$.
\end{enumerate}
\end{lemma}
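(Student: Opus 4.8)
The plan is to prove the two claims simultaneously by structural induction on the term $t$, according to the only two rules of Fig.~\ref{fig:build-hfta} that can derive a term judgment $\grammars, \occurs \vdash t \leadsto \hfta$, namely \textrm{Const} and \textrm{Func}. In each case I will use the definition of HFTA acceptance (Definition~\ref{def:hfta-accept}) to extract compatible runs of the node FTAs from the accepting run on $\htree$, and in the inductive case I will invoke Lemma~\ref{lem:sound-fta} for the FTA produced by \textsc{BuildFTA}.

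For the base case $t = c$, the \textrm{Const} rule annotates the single node $v_c$ with the trivial FTA $\fta = (\{q^{\denot{c}}_c\}, \set{c}, \{q^{\denot{c}}_c\}, \{c \to q^{\denot{c}}_c\})$. Since $\htree$ is accepted by $\hfta$, the tree $\htreeAnnot(v_c)$ must be accepted by $\fta$, which forces it to be the single-node tree $c$; claim~(1) is then vacuous (no function symbols occur in $t$) and claim~(2) holds because the unique run maps the root to $q^{\denot{c}}_c = q^{\denot{t}}_s$. For the inductive case $t = f(t_1, \ldots, t_m)$, I first apply the induction hypothesis to each subderivation $\grammars, \occurs \vdash t_i \leadsto \hfta_i$: restricting the accepted hierarchical tree to the subtree rooted at $v_{r_i}$ gives a tree accepted by $\hfta_i$, so the IH yields that $\htreeAnnot(v_{f_j})$ conforms to $\grammars(\occurs^{-1}(f_j))$ for every function symbol $f_j$ occurring in $t_i$, and that there is an accepting run $\pi_i$ of $\htreeAnnot(v_{r_i})$ on $\hftaAnnot(v_{r_i})$ rooted with the final state $q^{\denot{t_i}}_{s} \in Q_{f_i}$.

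Next I exploit the second bullet of Definition~\ref{def:hfta-accept} at each edge $(v_f, v_{r_i}) \in \htreeEdges$: it provides a unique leaf $l_i$ of $\htreeAnnot(v_f)$ and an accepting run $\pi_f$ of $\htreeAnnot(v_f)$ on $\hftaAnnot(v_f)$ with $\pi_i(\emph{Root}(\htreeAnnot(v_{r_i}))) \to \pi_f(l_i) \in \hftaTrans$. By the \textrm{Func} rule, the only inter-FTA transitions entering $\hftaAnnot(v_f)$ have the form $q^c_s \to q^c_{x_i}$ with $q^c_s \in Q_{f_i}$; combined with the IH this pins down $\pi_f(l_i) = q^{\denot{t_i}}_{x_i}$, so $l_i$ is the $x_i$-labeled leaf and the argument $x_i$ is assigned $\denot{t_i}$ in $\pi_f$. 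Now I apply Lemma~\ref{lem:sound-fta} to the FTA $\hftaAnnot(v_f)$, which \textsc{BuildFTA} produced from grammar $\grammars(\occurs^{-1}(f))$ with initial state vector $[Q_{f_1}, \ldots, Q_{f_m}]$: since $\htreeAnnot(v_f)$ is accepted (first bullet of Definition~\ref{def:hfta-accept} at $v_f$), part~(1) of the lemma gives that $\htreeAnnot(v_f)$ conforms to $\grammars(\occurs^{-1}(f))$, which together with the IH and the freshness of $v_f$ (so the subterm nodes are pairwise disjoint) establishes claim~(1) for all of $t$; and part~(2), instantiated with $x_i = \denot{t_i}$, says the run is rooted with $q^c_s$ where $c = \denot{\htreeAnnot(v_f)(\denot{t_1}, \ldots, \denot{t_m})} = \denot{f}(\denot{t_1}, \ldots, \denot{t_m}) = \denot{t}$, giving claim~(2).

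I expect the main obstacle to be the bookkeeping around Definition~\ref{def:hfta-accept}: showing that the unique stitching leaf $l_i$ is exactly the $x_i$-labeled leaf and that the value carried across the inter-FTA edge coincides with the value at which the child FTA's accepting run is rooted, so that the single assignment $x_i \mapsto \denot{t_i}$ is simultaneously compatible with $\pi_f$ and with all $\pi_i$ — i.e., that Lemma~\ref{lem:sound-fta} is applicable with this particular choice of initial states. Everything else (the base case, the grammar-conformance bookkeeping, and the semantic identity $\denot{f}(\denot{t_1},\ldots,\denot{t_m}) = \denot{t}$) is routine.
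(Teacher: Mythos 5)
Your proposal is correct and follows essentially the same route as the paper's proof: structural induction on $t$ over the \textrm{Const} and \textrm{Func} rules, restricting the accepted hierarchical tree to the sub-HFTAs for the $t_i$ to apply the induction hypothesis, and then invoking Lemma~\ref{lem:sound-fta} for the \textsc{BuildFTA}-constructed FTA at $v_f$ with initial states $[Q_{f_1},\ldots,Q_{f_m}]$ to obtain grammar conformance and the root state $q^{\denot{t}}_{s}$. Your extra care about identifying the stitching leaf $l_i$ with the $x_i$-labeled leaf via the inter-FTA transitions $q^c_s \to q^c_{x_i}$ only makes explicit a step the paper's proof treats implicitly.
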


\begin{proof}
Prove by structural induction on $t$.
\begin{itemize}
\item Base case: $t = c$.
According to the \textsf{Const} rule in Figure~\ref{fig:build-hfta}, the HFTA $\hfta$ from $t$ contains only a single node $v_c$, so the root node is also $v_c$.
The FTA $\hftaAnnot(v_c)$ is
\[
    \fta = (\bigset{q^{\denot{c}}_c}, \set{c}, \bigset{q^{\denot{c}}_c}, \bigset{c \to q^{\denot{c}}_c})
\]
Given $\htree = (\htreeNodes, \htreeAnnot, \htreeRoot, \htreeEdges)$ accepted by $\hfta$, we know tree $\htreeAnnot(v_c)$ is accepted by FTA $\hftaAnnot(v_c)$ by Definition~\ref{def:hfta-accept}. The run of $\htreeAnnot(v_c)$ is a single node $q^{\denot{c}}_c$.

\item Inductive case: $t = f(t_1, \ldots, t_m)$.
Suppose $\grammars, \occurs \vdash t_i \leadsto \hfta_i = (\hftaNodes_i, \hftaAnnot_i, v_{r_i}, \hftaTrans_i)$ for $i \in [1, m]$.
Given a hierarchical tree $\htree = (\htreeNodes, \htreeAnnot, \htreeRoot, \htreeEdges)$ accepted by $\hfta$, we denote its $t_i$ part as $\htree_i = (\htreeNodes_i, \htreeAnnot_i, v_{r_i}, \htreeEdges_i)$.
On one hand, $\htreeNodes = \cup^{m}_{i=1} \htreeNodes_i \cup \set{v_f}$. $\htreeEdges = \cup^{m}_{i=1} \htreeEdges_i \cup \set{(v_f, v_{r_i}) ~|~ i \in [1, m]}$. $\htreeAnnot_i(v) = \htreeAnnot(v)$ for all $v \in V_i$.
On the other hand, we have $\hftaAnnot_i(v) = \hftaAnnot(v)$ for all $v \in V_i$ based on the \textsf{Func} rule in Figure~\ref{fig:build-hfta}.
Also, $\hftaTrans = \cup^{m}_{i=1} \hftaTrans_i \cup \bigset{q^c_{s_i} \to q^c_{x_i} ~|~ q^c_{s_i} \in Q_{f_i}, i \in [1, m]}$ where $Q_{f_i}$ is the final state set of $\hftaAnnot(v_{r_i})$.

\hspace{0.1in} According to the acceptance condition of Definition~\ref{def:hfta-accept}, we know $\htree_i$ is accepted by $\hfta_i$ for all $i \in [1, m]$.
By inductive hypothesis, we have $\htreeAnnot_i(v_{f_i})$ conforms to grammar $\grammars(\occurs^{-1}(f_i))$ for all $v_{f_i} \in V_i$.
Moreover, we have proved $\htreeAnnot(v_f)$ conforms to grammar $\grammars(\occurs^{-1}(f))$ in Lemma~\ref{lem:sound-fta}.
Therefore, any function symbol $f_i$ in $t$ conforms to its grammar $\grammars(\occurs^{-1}(f_i))$.
Again, by inductive hypothesis, we have the run of $\htreeAnnot_i(v_{r_i})$ on $\hftaAnnot_i(v_{r_i})$ is rooted with final state $q^{\denot{t_i}}_{s_i}$ for $i \in [1, m]$.
Based on Lemma~\ref{lem:sound-fta}, the run of $\htreeAnnot(v_f)$ on $\hftaAnnot(v_f)$ is rooted with final state $q^c_s$ where $c = \denot{t(\denot{t_i}, \ldots, \denot{t_m})} = \denot{t}$.
Hence, the lemma is proved for $t = f(t_1, \ldots, t_m)$.
\end{itemize}
By principle of structural induction, we have proved the lemma.
\end{proof}

\begin{lemma}\label{lem:sound-relaxed-formula}
Suppose $\grammars,\occurs \vdash \Phi \leadsto \hfta$ according to the rules from Figure~\ref{fig:build-hfta}, where $\Phi$ is a ground formula, and let $f_1, \ldots, f_n$ be the function symbols used in $\Phi$. Given a hierarchical tree $\htree = (\htreeNodes, \htreeAnnot, \htreeRoot, \htreeEdges)$ that is accepted by $\hfta$, we have:
\begin{enumerate}
\item $\htreeAnnot(v_{f_i})$ is a program that conforms to grammar $\grammars(\occurs^{-1}(f_i))$
\item If the run of $\htreeAnnot(\htreeRoot)$ on $\hftaAnnot(\hftaRoot)$ is rooted with final state $q^\top_{s_0}$, then $\htreeAnnot(v_{f_1}), \ldots, \htreeAnnot(v_{f_n}) \models \Phi$
\item If the run of $\htreeAnnot(\htreeRoot)$ on $\hftaAnnot(\hftaRoot)$ is rooted with final state $q^\bot_{s_0}$, then $\htreeAnnot(v_{f_1}), \ldots, \htreeAnnot(v_{f_n}) \not\models \Phi$
\end{enumerate}
\end{lemma}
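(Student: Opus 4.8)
The plan is to establish all three parts simultaneously by structural induction on the ground formula $\Phi$, appealing to Lemma~\ref{lem:sound-relaxed-term} wherever the immediate subexpressions of $\Phi$ are terms rather than formulas. Part~(1) will drop out of every case at once: the function symbols occurring in $\Phi$ are precisely those occurring in its subexpressions, so grammar conformance follows from part~(1) of Lemma~\ref{lem:sound-relaxed-term} for an atomic $\Phi$ and from the inductive hypothesis for a compound $\Phi$, together with the fact that the fresh node placed at the root of $\hfta$ carries only a trivial grammar and contributes no symbol to be synthesized. The real work is in parts~(2) and~(3), which I would fold together and strengthen into the single invariant: the accepting run of $\htreeAnnot(\htreeRoot)$ on $\hftaAnnot(\hftaRoot)$ is rooted at the state $q^{\denot{\Phi}}_{s_0}$, where $\denot{\Phi} \in \set{\top, \bot}$ is the truth value of $\Phi$ under the interpretation $f_i \mapsto \htreeAnnot(v_{f_i})$. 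Given this, parts~(2) and~(3) follow by splitting on whether $\denot{\Phi}$ equals $\top$ or $\bot$.

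For the base case $\Phi = \psi_1 \, \emph{op} \, \psi_2$ with $\emph{op}$ a relation constant and $\psi_1, \psi_2$ terms, I would unfold the Logical rule of Fig.~\ref{fig:build-hfta}. The root node $v_{\emph{op}}$ is annotated by an FTA over $\set{x_1, x_2, \emph{op}}$ whose only accepted tree shape is $\emph{op}(x_1, x_2)$ and whose transitions are exactly $\emph{op}(q^{c_1}_{x_1}, q^{c_2}_{x_2}) \to q^{c}_{s_0}$ for $c = \denot{\emph{op}}(c_1, c_2)$. By Lemma~\ref{lem:sound-relaxed-term}(2) on each child HFTA $\hfta_i$, the run of $\htreeAnnot(v_{r_i})$ on $\hftaAnnot(v_{r_i})$ is rooted at $q^{\denot{\psi_i}}_{s_i}$; and by Definition~\ref{def:hfta-accept} the inter-FTA transition linking $v_{\emph{op}}$ to its $i$-th child must have the form $q^{\denot{\psi_i}}_{s_i} \to q^{c_i}_{x_i}$, which the definition of $\hftaTrans$ in the rule supplies only when $c_i = \denot{\psi_i}$. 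Hence the run at $v_{\emph{op}}$ sends $x_i \mapsto q^{\denot{\psi_i}}_{x_i}$ and the root to $q^{c}_{s_0}$ with $c = \denot{\emph{op}}(\denot{\psi_1}, \denot{\psi_2}) = \denot{\Phi}$, which is the invariant.

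The inductive cases are reruns of the base case. For $\Phi = \psi_1 \, \emph{op} \, \psi_2$ with $\emph{op}$ a binary logical connective, I would apply the inductive hypothesis (instead of Lemma~\ref{lem:sound-relaxed-term}) to the subformulas, obtaining child root states $q^{\denot{\psi_i}}_{s_0}$ with $\denot{\psi_i} \in \set{\top, \bot}$, and then conclude via the semantics of $\emph{op}$ encoded in the FTA transitions. For $\Phi = \neg \Phi'$, the Neg rule gives an FTA with transitions $\neg(q^{\bot}_{x_1}) \to q^{\top}_{s_0}$ and $\neg(q^{\top}_{x_1}) \to q^{\bot}_{s_0}$, and the inter-FTA transition forces the argument state to $q^{\denot{\Phi'}}_{x_1}$, so the root lands at $q^{\neg \denot{\Phi'}}_{s_0} = q^{\denot{\Phi}}_{s_0}$. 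One bookkeeping point recurs throughout: the tree $\htreeAnnot(v)$ at each freshly added node has exactly the rigid shape fixed by its rule --- a single $\emph{op}$- or $\neg$-node whose leaves are labeled by distinct variable symbols $x_i$ --- so the ``unique leaf'' clause of Definition~\ref{def:hfta-accept} selects the intended argument position; this relies on the global uniqueness of the annotated variables.

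I expect the main obstacle to be the careful bookkeeping against Definition~\ref{def:hfta-accept}: one must show that an accepting run of the whole hierarchical tree pins the child-FTA runs down to exactly the runs whose root states carry the denotations handed up by Lemma~\ref{lem:sound-relaxed-term} or the inductive hypothesis --- equivalently, that the inter-FTA transitions leave no freedom in the values $c_i$ that propagate into the parent FTA. Once that stitching argument is in place, evaluating $\emph{op}$ and $\neg$ and reading off conformance are routine. This lemma then feeds Theorem~\ref{thm:sound-relaxed} via the Final rule: that rule restricts the root FTA's final states to $\set{q^{\top}_{s_0}}$, so every hierarchical tree accepted by the final HFTA has its root run rooted at $q^{\top}_{s_0}$, whence part~(2) gives satisfaction of $\Phi$ and part~(1) gives grammar conformance.
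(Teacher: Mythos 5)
Your proposal is correct and follows essentially the same route as the paper's proof: structural induction on $\Phi$, invoking Lemma~\ref{lem:sound-relaxed-term} for the atomic case and the inductive hypothesis for binary connectives and negation, with the truth value propagated through the FTA transitions of the Logical/Neg rules and the inter-FTA stitching of Definition~\ref{def:hfta-accept}. Your folding of parts (2) and (3) into the single invariant that the root run is rooted at $q^{\denot{\Phi}}_{s_0}$ is only a cosmetic strengthening of what the paper's case analysis already establishes.
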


\begin{proof}
Prove by structural induction on $\Phi$.
\begin{itemize}
\item Base case: $\Phi = t_1 ~ op ~ t_2$.
Suppose $\grammars, \occurs \vdash t_1 \leadsto \hfta_1 = (\hftaNodes_1, \hftaAnnot_1, v_{r_1}, \hftaTrans_1)$ and $\grammars, \occurs \vdash t_2 \leadsto \hfta_2 = (\hftaNodes_2, \hftaAnnot_2, v_{r_2}, \hftaTrans_2)$.
Given a hierarchical tree $\htree = (\htreeNodes, \htreeAnnot, \htreeRoot, \htreeEdges)$ accepted by $\hfta$, we denote its $t_1$ part as $\htree_1 = (\htreeNodes_1, \htreeAnnot_1, v_{r_1}, \htreeEdges_1)$ and denote the $t_2$ part as $\htree_2 = (\htreeNodes_2, \htreeAnnot_2, v_{r_2}, \htreeEdges_2)$.
On one hand, $\htreeNodes = \htreeNodes_1 \cup \htreeNodes_2 \cup \set{v_{op}}$. $\htreeEdges = \htreeEdges_1 \cup \htreeEdges_2 \cup \set{(v_{op}, v_{r_1}), (v_{op}, v_{r_2})}$. $\htreeAnnot_i(v) = \htreeAnnot(v)$ for all $v \in V_i$ where $i = 1, 2$.
On the other hand, we have $\hftaAnnot_i(v) = \hftaAnnot(v)$ for all $v \in V_i$ where $i = 1, 2$ based on the \textsf{Logical} rule in Figure~\ref{fig:build-hfta}.
Also, $\hftaTrans = \hftaTrans_1 \cup \hftaTrans_2 \cup \bigset{q^c_{s_i} \to q^c_{x_i} ~|~ q^c_{s_i} \in Q_{f_i}, i = 1, 2}$ where $Q_{f_i}$ is the final state set of $\hftaAnnot(v_{r_i})$.

\hspace{0.1in} According to the acceptance condition of Definition~\ref{def:hfta-accept}, we know $\htree_1$ is accepted by $\hfta_1$ and $\htree_2$ is accepted by $\hfta_2$.
By Lemma~\ref{lem:sound-relaxed-term}, we have $\htreeAnnot_i(v_f)$ conforms to grammar $\grammars(\occurs^{-1}(f))$ for all $v_f \in V_i$ where $i = 1, 2$.
Therefore, any function symbol $f_i$ in $\Phi$ conforms to its grammar $\grammars(\occurs^{-1}(f_i))$.
Also by Lemma~\ref{lem:sound-relaxed-term}, we have the run of $\htreeAnnot_i(v_{r_i})$ on $\hftaAnnot_i(v_{r_i})$ is rooted with final state $q^{\denot{t_i}}_{s_i}$ for $i = 1, 2$.
In addition, consider that \textsf{Logical} rule will add transitions $\bigset{op(q^{c_1}_{x_1}, q^{c_2}_{x_2}) \to q^c_{s_0} ~|~ c = \denot{op}(c_1, c_2)}$ to $\hftaAnnot(\hftaRoot)$, the run will reach state $q^\top_{s_0}$ if $t_1 ~ op ~ t_2$ evaluates to \emph{true} and reach state $q^\bot_{s_0}$ otherwise.
Hence, the lemma is proved for $\Phi = t_1 ~ op ~ t_2$.

\item Inductive case: $\Phi = \Phi_1 \odot \Phi_2 ~ (\odot = \land, \lor).$
Suppose $\grammars, \occurs \vdash \Phi_1 \leadsto \hfta_1 = (\hftaNodes_1, \hftaAnnot_1, v_{r_1}, \hftaTrans_1)$ and $\grammars, \occurs \vdash \Phi_2 \leadsto \hfta_2 = (\hftaNodes_2, \hftaAnnot_2, v_{r_2}, \hftaTrans_2)$.
Given a hierarchical tree $\htree = (\htreeNodes, \htreeAnnot, \htreeRoot, \htreeEdges)$ accepted by $\hfta$, we denote its $\Phi_1$ part as $\htree_1 = (\htreeNodes_1, \htreeAnnot_1, v_{r_1}, \htreeEdges_1)$ and denote the $\Phi_2$ part as $\htree_2 = (\htreeNodes_2, \htreeAnnot_2, v_{r_2}, \htreeEdges_2)$.
On one hand, $\htreeNodes = \htreeNodes_1 \cup \htreeNodes_2 \cup \set{v_{\odot}}$. $\htreeEdges = \htreeEdges_1 \cup \htreeEdges_2 \cup \set{(v_{\odot}, v_{r_1}), (v_{\odot}, v_{r_2})}$. $\htreeAnnot_i(v) = \htreeAnnot(v)$ for all $v \in V_i$ where $i = 1, 2$.
On the other hand, we have $\hftaAnnot_i(v) = \hftaAnnot(v)$ for all $v \in V_i$ where $i = 1, 2$ based on the \textsf{Logical} rule in Figure~\ref{fig:build-hfta}.
Also, $\hftaTrans = \hftaTrans_1 \cup \hftaTrans_2 \cup \bigset{q^c_{s_i} \to q^c_{x_i} ~|~ q^c_{s_i} \in Q_{f_i}, i = 1, 2}$ where $Q_{f_i}$ is the final state set of $\hftaAnnot(v_{r_i})$.

\hspace{0.1in} According to the acceptance condition of Definition~\ref{def:hfta-accept}, we know $\htree_1$ is accepted by $\hfta_1$ and $\htree_2$ is accepted by $\hfta_2$.
By inductive hypothesis, we have $\htreeAnnot_i(v_f)$ conforms to grammar $\grammars(\occurs^{-1}(f))$ for all $v_f \in V_i$ where $i = 1, 2$.
Therefore, any function symbol $f_i$ in $\Phi$ conforms to its grammar $\grammars(\occurs^{-1}(f_i))$.
Also by inductive hypothesis, we know that if the run of $\htreeAnnot_i(v_{r_i})$ on $\hftaAnnot_i(v_{r_i})$ is rooted with final state $q^\top_{s_i}$ then $\htreeAnnot_i(v_{f_i}), v_{f_i} \in V_i $ collectively satisfy $\Phi_i$ for $i = 1, 2$.
In addition, the \textsf{Logical} rule is essentially building an FTA that accepts programs with boolean input value $\denot{\Phi_1}$ and $\denot{\Phi_2}$. The run is rooted with $q^\top_{s_0}$ if $\denot{\Phi_1} \odot \denot{\Phi_2}$ evaluates to \emph{true}; otherwise, the run is rooted with $q^\bot_{s_0}$.
Hence, the lemma is proved for $\Phi = \Phi_1 \odot \Phi_2$.

\item Inductive case: $\Phi = \neg \Phi_1$.
Suppose $\grammars, \occurs \vdash \Phi_1 \leadsto \hfta_1 = (\hftaNodes_1, \hftaAnnot_1, v_{r_1}, \hftaTrans_1)$.
Given a hierarchical tree $\htree = (\htreeNodes, \htreeAnnot, \htreeRoot, \htreeEdges)$ accepted by $\hfta$, we denote its $\Phi_1$ part as $\htree_1 = (\htreeNodes_1, \htreeAnnot_1, v_{r_1}, \htreeEdges_1)$.
On one hand, $\htreeNodes = \htreeNodes_1 \cup \set{v_{\neg}}$. $\htreeEdges = \htreeEdges_1 \cup \set{(v_{\neg}, v_{r_1})}$. $\htreeAnnot_1(v) = \htreeAnnot(v)$ for all $v \in V_1$.
On the other hand, we have $\hftaAnnot_1(v) = \hftaAnnot(v)$ for all $v \in V_1$ based on the \textsf{Neg} rule in Figure~\ref{fig:build-hfta}.
Also, $\hftaTrans = \hftaTrans_1 \cup \bigset{q^c_s \to q^c_{x_1} ~|~ q^c_s \in Q_{f_1}}$ where $Q_{f_1}$ is the final state set of $\hftaAnnot(v_{r_1})$.

\hspace{0.1in} According to the acceptance condition of Definition~\ref{def:hfta-accept}, we know $\htree_1$ is accepted by $\hfta_1$.
By inductive hypothesis, we have $\htreeAnnot_1(v_f)$ conforms to grammar $\grammars(\occurs^{-1}(f))$ for all $v_f \in V_1$.
Therefore, any function symbol $f_i$ in $\Phi$ conforms to its grammar $\grammars(\occurs^{-1}(f_i))$.
Also by inductive hypothesis, we have the run of $\htreeAnnot_1(v_{r_1})$ on $\hftaAnnot_1(v_{r_1})$ is rooted with final state $q^{\denot{t_1}}_{s_1}$.
In addition, the \textsf{Neg} rule is building an FTA that accepts programs with boolean input value $\denot{\Phi_1}$. The run is rooted with $q^\top_{s_0}$ if $\neg \denot{\Phi_1}$ evaluates to \emph{true}; otherwise, the run is rooted with $q^\bot_{s_0}$.
Hence, the lemma is proved for $\Phi = \neg \Phi_1$.
\end{itemize}
By principle of structural induction, we have proved the lemma.
\end{proof}

\begin{proof}[Proof of Theorem~\ref{thm:sound-relaxed}]
Suppose $\grammars, \occurs \vdash \Phi \twoheadrightarrow \hfta$ and $\hfta = (\hftaNodes, \hftaAnnot, \hftaRoot, \hftaTrans)$, we know $\grammars, \occurs \vdash \Phi \leadsto \hfta'$ with $\hfta' = (\hftaNodes, \hftaAnnot', \hftaRoot, \hftaTrans)$ according to the \textrm{Final} rule in Figure~\ref{fig:build-hfta}.
The only difference between $\hftaAnnot$ and $\hftaAnnot'$ is that $\hftaAnnot(\hftaRoot) = (\ftaStates, \ftaAlphabet, \set{q^\top_{s_0}}, \ftaTrans)$ but $\hftaAnnot'(\hftaRoot) = (\ftaStates, \ftaAlphabet, \set{q^\top_{s_0}, q^\bot_{s_0}}, \ftaTrans)$.
By Definition~\ref{def:hfta-accept}, we know $\lang(\hfta) \subseteq \lang(\hfta')$. Thus, if hierarchical tree $\htree$ is accepted by $\hfta$, then $\htree$ is also accepted by $\hfta'$.
According to Lemma~\ref{lem:sound-relaxed-formula}, we have $\htreeAnnot(v_{f_i})$ conforms to grammar $\grammars(\occurs^{-1}(f_i))$. In addition, we have $\htreeAnnot(v_{f_1}), \ldots, \htreeAnnot(v_{f_n}) \models \Phi$ if the run of $\htreeAnnot(\htreeRoot)$ is rooted with final state $q^\top_{s_0}$.
Since $\hfta$ only has one final state $q^\top_{s_0}$ in $\hftaAnnot(\hftaRoot)$, we know the accepting run of $\htreeAnnot(\htreeRoot)$ on $\hftaAnnot(\hftaRoot)$ must be rooted with $q^\top_{s_0}$. Thus, $\htreeAnnot(v_{f_1}), \ldots, \htreeAnnot(v_{f_n}) \models \Phi$.
\end{proof}


\begin{lemma}\label{lem:complete-fta}
Suppose FTA $\fta = (\ftaStates, \ftaAlphabet, \ftaFinal, \ftaTrans)$ is built with grammar $G$ and initial states $[\ftaStates_1, \ldots, \ftaStates_m]$ using rules from Figure~\ref{fig:buildFTA}. If there is a program $P$ that conforms to grammar $G$ such that $\denot{P}(a_1, \ldots, a_m) = a$ where $q^{a_i}_{s_i} \in Q_i$ for $i \in [1, m]$, then there exists a tree accepted by $\fta$ and the run is rooted with final state $q^a_s$ from initial states $q^{a_1}_{x_1}, \ldots, q^{a_m}_{x_m}$.
\end{lemma}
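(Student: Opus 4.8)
The plan is to prove this by structural induction on the abstract syntax tree of $P$, as the exact dual of the induction used for Lemma~\ref{lem:sound-fta}. As in that proof, the bare statement is too weak to be carried through the induction, so the first step is to strengthen it: I will show that for \emph{every} sub-AST $P'$ of $P$ that is derivable from some nonterminal $s'$ of $G$, the FTA $\fta$ admits a run on the tree $P'$ --- starting from the states $q^{a_1}_{x_1}, \ldots, q^{a_m}_{x_m}$ at the variable leaves --- whose root is labeled $q^{a'}_{s'}$, where $a' = \denot{P'}(a_1, \ldots, a_m)$. For the sub-ASTs the root state need not be final; that extra property is only needed, and only recovered, at the very top.

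Next I would run the induction on the height of $P'$. In the base case $P'$ is a leaf: if $P'$ is a variable $x_i$, then $\denot{P'}(a_1, \ldots, a_m) = a_i$ and, since $q^{a_i}_{s_i} \in Q_i$ by hypothesis, the \textsf{Input} rule of Fig.~\ref{fig:buildFTA} places $q^{a_i}_{x_i}$ in $\ftaStates$ with the transition $x_i \to q^{a_i}_{x_i}$, so the one-node run sends the leaf to $q^{a_i}_{x_i}$ (and if $x_i$ is reached through a unary-identity production, one further \textsf{Prod} step lifts this to $q^{a_i}_{s'}$, exactly as in Section~\ref{sec:fta-construct}); if instead $P'$ is a constant $c$, the \textsf{Prod} rule applied to the corresponding nullary production supplies the transition $c \to q^{\denot{c}}_{s'}$. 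In the inductive case $P' = \sigma(P'_1, \ldots, P'_n)$, where $s' \to \sigma(s'_1, \ldots, s'_n)$ is a production of $G$ and each $P'_j$ is derivable from $s'_j$; by the induction hypothesis there is a run on each $P'_j$ rooted at $q^{a'_j}_{s'_j}$ with $a'_j = \denot{P'_j}(a_1, \ldots, a_m)$, and since $a' := \denot{P'}(a_1, \ldots, a_m) = \denot{\sigma(a'_1, \ldots, a'_n)}$, the \textsf{Prod} rule adds $q^{a'}_{s'}$ to $\ftaStates$ and $\sigma(q^{a'_1}_{s'_1}, \ldots, q^{a'_n}_{s'_n}) \to q^{a'}_{s'}$ to $\ftaTrans$. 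Stitching the sub-runs together under this transition yields the required run on $P'$.

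Finally, I would instantiate the strengthened claim with $P' = P$, which by assumption is derivable from the start symbol $s$: this gives a run on $P$ rooted at $q^{a}_{s}$ with $a = \denot{P}(a_1, \ldots, a_m)$, and the \textsf{Output} rule then makes $q^a_{s}$ a final state because $q^a_{s} \in \ftaStates$. Hence $P$ is accepted by $\fta$ via an accepting run rooted at the final state $q^a_s$ and starting from $q^{a_1}_{x_1}, \ldots, q^{a_m}_{x_m}$, as claimed. Beyond routine bookkeeping --- carrying the nonterminal $s'$ alongside the value $a'$, and treating variable versus constant leaves uniformly --- the only subtlety is the finiteness bound on \textsf{Prod} applications noted in the remark following Fig.~\ref{fig:fta-rules}: the argument uses exactly one \textsf{Prod} step per internal node of $P$, so the statement holds for the FTA built without such a bound, or with a bound at least as large as the AST depth of $P$; I would flag this as a standing hypothesis rather than treat it as an obstacle, since the real content of the proof --- the structural induction --- presents no genuine difficulty.
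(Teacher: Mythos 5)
Your proposal is correct and follows essentially the same route as the paper's own proof: an induction on the structure (height) of $P$'s AST, strengthened so that sub-terms only need a run rooted at $q^{a'}_{s'}$ (not a final state), with the \textsf{Input}/\textsf{Prod} rules handling the base and inductive cases and the \textsf{Output} rule supplying finality at the top. Your treatment is in fact slightly more thorough than the paper's, which handles only the variable base case and leaves the bound on \textsf{Prod} applications implicit, whereas you address constant leaves and explicitly flag the depth-bound hypothesis.
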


\begin{proof}
To prove by induction on height of $P$'s AST, we first strengthen the lemma to: the run of tree $t$ is rooted with state $q^a_s$ from initial states $q^{a_1}_{x_1}, \ldots, q^{a_m}_{x_m}$ where $a = \denot{t}(a_1, \ldots, a_m)$.
\begin{itemize}
\item Base case: height is 1.
In this case, $P$ only has one variable $x$, and $q^c_{s}$ is passed as an input.
Consider the \textsf{Input} rule in Figure~\ref{fig:buildFTA}, we know $q^c_{x} \in \ftaStates$ and $x \to q^c_x \in \ftaTrans$.
Then $x$ is accepted by $\fta$, and the run is rooted with final state $q^c_{x}$.

\item Inductive case: Suppose the lemma is correct for AST with height no larger than $h$, and the goal is to prove it is correct for AST of height $h+1$.
Without loss of generality, assume $P = \sigma(P_1, \ldots, P_n)$ is of height $h+1$, so the height of partial programs $P_1, \ldots, P_n$ is at most $h$.
By inductive hypothesis, there exists trees $t_1, \ldots, t_n$ whose runs are rooted with state $q^{c_i}_{s_i}$ form initial states $q^{a_1}_{x_1}, \ldots, q^{a_m}_{x_m}$, and $\denot{t_i} = c_i$ for all $i \in [1, n]$.
According to the \textsf{Prod} rule in Figure~\ref{fig:buildFTA}, we know the run of $t = \sigma(t_1, \ldots, t_n)$ is rooted with state $\denot{\sigma(\denot{t_1}, \ldots, \denot{t_n})} = \denot{\sigma(c_1, \ldots, c_n)}$. Hence, the inductive case is proved.
\end{itemize}
Consider $P$ is a program (not partial program), if the run of tree $t$ is rooted with state $q^a_s$, then $s$ must be a starting symbol.
By the \textsf{Output} rule in Figure~\ref{fig:buildFTA}, $q^a_s$ is a final state. Therefore, tree $t$ is accepted by $\fta$, and we have proved the lemma.
\end{proof}

\begin{lemma}\label{lem:complete-relaxed-term}
Let $t$ be a term where every function symbol $f_1, \ldots, f_n$ occurs exactly once, and suppose we have $\grammars,\occurs \vdash t \leadsto \hfta$. If there are implementations $P_i$ of $f_i$ such that $\denot{t} = a$, where $P_i$ conforms to grammar  $\grammars(\occurs^{-1}(f_i))$, then there exists a hierarchical tree $\htree = (\htreeNodes, \htreeAnnot, \htreeRoot, \htreeEdges)$ accepted by $\hfta$ such that $\htreeAnnot(v_{f_i}) = P_i$ for all $i \in [1,n]$, and the run of $\htreeAnnot(\htreeRoot)$ on $\hftaAnnot(\hftaRoot)$ is rooted with final state $q^a_s$.
\end{lemma}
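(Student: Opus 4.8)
The plan is to prove this by structural induction on $t$, running the argument of Lemma~\ref{lem:sound-relaxed-term} in reverse: wherever that proof invoked Lemma~\ref{lem:sound-fta} to analyze the FTA built by \textsc{BuildFTA}, I will instead invoke Lemma~\ref{lem:complete-fta} to \emph{produce} an accepting FTA run, and then glue the resulting runs into a hierarchical tree.

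\emph{Base case} ($t = c$). The \textsf{Const} rule of Fig.~\ref{fig:build-hfta} yields an HFTA with a single node $v_c$ whose FTA accepts exactly the one-node tree labeled $c$, with run rooted at the final state $q^{\denot{c}}_c$. Take $\htree$ to be that single-node hierarchical tree; since $t$ contains no function symbols the hypothesis on the $P_i$ is vacuous, and $a = \denot{c}$, so the claim holds.

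\emph{Inductive case} ($t = f(t_1,\ldots,t_m)$). The \textsf{Func} rule gives $\grammars,\occurs \vdash t_i \leadsto \hfta_i = (\hftaNodes_i,\hftaAnnot_i,v_{r_i},\hftaTrans_i)$ for each $i$, a fresh node $v_f$ annotated with $\fta_f = \textsc{BuildFTA}(\grammars(\occurs^{-1}(f)),[Q_{f_1},\ldots,Q_{f_m}])$ where $Q_{f_i}$ is the final-state set of $\hftaAnnot_i(v_{r_i})$, and new inter-FTA transitions $\set{q^c_s \to q^c_{x_i} \mid q^c_s \in Q_{f_i},\ i\in[1,m]}$. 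Because every function symbol occurs exactly once in $t$, the symbols split among $t_1,\ldots,t_m$ with $f$ in none of them; writing $a_i := \denot{t_i}$ under the given interpretation and $P_f$ for the implementation of $f$, I apply the induction hypothesis to each $t_i$ to get a hierarchical tree $\htree_i = (\htreeNodes_i,\htreeAnnot_i,v_{r_i},\htreeEdges_i)$ accepted by $\hfta_i$, with $\htreeAnnot_i(v_{f_j}) = P_j$ for the symbols $f_j$ in $t_i$, and with the run of $\htreeAnnot_i(v_{r_i})$ on $\hftaAnnot_i(v_{r_i})$ rooted at a final state $q^{a_i}_{s_i} \in Q_{f_i}$. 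Since $q^{a_i}_{s_i} \in Q_{f_i}$, the \textsf{Input} rule of Fig.~\ref{fig:buildFTA} has put $q^{a_i}_{x_i}$ into $\fta_f$ together with $x_i \to q^{a_i}_{x_i}$; and since $\denot{t} = \denot{P_f}(a_1,\ldots,a_m) = a$, Lemma~\ref{lem:complete-fta} (read constructively, so the accepted tree is the AST of $P_f$ itself) gives an accepting run $\pi$ of $\fta_f$ on $P_f$ from initial states $q^{a_1}_{x_1},\ldots,q^{a_m}_{x_m}$ rooted at $q^a_s$. I assemble $\htree$ by $\htreeNodes = \bigcup_i \htreeNodes_i \cup \set{v_f}$, $\htreeAnnot = \bigcup_i \htreeAnnot_i \cup \set{v_f \mapsto P_f}$, $\htreeRoot = v_f$, $\htreeEdges = \bigcup_i \htreeEdges_i \cup \set{(v_f,v_{r_i}) \mid i\in[1,m]}$; this is well defined because the $\htreeNodes_i$ are pairwise disjoint (freshness of all HFTA nodes) and the single-occurrence hypothesis means no $P_j$ is assigned twice, and it immediately gives $\htreeAnnot(v_{f_i}) = P_i$ for all $i$ and the desired root state $q^a_s$.

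\emph{Verifying $\htree \in \lang(\hfta)$.} By Definition~\ref{def:hfta-accept} I must check node-wise acceptance and the stitching condition on edges. Node-wise acceptance on the nodes of each $\htreeNodes_i$ is inherited from $\htree_i$ (the \textsf{Func} rule leaves $\hftaAnnot(v) = \hftaAnnot_i(v)$ there), and at $v_f$ it is exactly what Lemma~\ref{lem:complete-fta} produced. The new edges are the $(v_f, v_{r_i})$; for each one I use as witnesses the run $\pi$ of $\fta_f$ on $P_f$ above and the accepting run of $\hftaAnnot(v_{r_i})$ on $\htreeAnnot(v_{r_i})$, whose root state is $q^{a_i}_{s_i}$, together with the fact that $\pi$ sends $x_i$-labeled leaves to $q^{a_i}_{x_i}$ and that the \textsf{Func} rule put $q^{a_i}_{s_i} \to q^{a_i}_{x_i}$ into $\hftaTrans$. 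Matching the root state of each child's run to the corresponding input state $q^{a_i}_{x_i}$ inside the run $\pi$ of $\fta_f$, in the precise sense demanded by Definition~\ref{def:hfta-accept}, is the step I expect to need the most care; this is where the hypothesis that each function symbol occurs exactly once (hence, under the global-uniqueness annotation on argument variables, that the HFTA has exactly one child node per argument of $f$) does the real work, and where disjointness of the $\htreeNodes_i$ prevents the inductive witnesses from interfering. Structural induction then completes the proof.
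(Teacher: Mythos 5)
Your proposal is correct and follows essentially the same route as the paper's proof: structural induction on $t$, the single-node hierarchical tree in the \textsf{Const} base case, and in the \textsf{Func} case the induction hypothesis on each $t_i$ combined with Lemma~\ref{lem:complete-fta} to obtain the run at $v_f$, then assembling $\htree$ with root $v_f$ and edges $(v_f, v_{r_i})$ and checking Definition~\ref{def:hfta-accept}. Your version is in fact slightly more explicit than the paper's on two points the paper glosses over---that the tree accepted at $v_f$ is the AST of $P_f$ itself (needed for $\htreeAnnot(v_{f_i}) = P_i$) and how the child runs stitch to the input states $q^{a_i}_{x_i}$ via the inter-FTA transitions---but the argument is the same.
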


\begin{proof}
Prove by structural induction on $t$.
\begin{itemize}
\item Base case: $t = c$.
Suppose the HFTA built from \textsf{Const} rule in Figure~\ref{fig:build-hfta} is $\hfta = (\set{v_c}, \hftaAnnot, v_c, \hftaTrans)$ and $\hftaAnnot(v_c)$ is
\[
    \fta = (\bigset{q^{\denot{c}}_c}, \set{c}, \bigset{q^{\denot{c}}_c}, \bigset{c \to q^{\denot{c}}_c})
\]
Consider the hierarchical tree $\htree = (\set{v_c}, \htreeAnnot, v_c, \emptyset)$ where $\htreeAnnot(v_c)$ is a tree that only contains a single node $c$.
By Definition~\ref{def:hfta-accept}, $\htree$ is accepted by $\hfta$ and the run of $\htreeAnnot(v_c)$ on $\hftaAnnot(v_c)$ is rooted with final state $q^{\denot{c}}_c$.

\item Inductive case: $t = f(t_1, \ldots, t_m)$.
Given $\grammars, \occurs \vdash t \leadsto \hfta = (\hftaNodes, \hftaAnnot, v_f, \hftaTrans)$, we denote its $t_i$ part as $\hfta_i = (\hftaNodes_i, \hftaAnnot_i, v_{r_i}, \hftaTrans_i)$ for $i \in [1, m]$. 
In addition, we have $\hftaNodes_i(v_i) = \hftaNodes(v_i)$ and $\hftaAnnot_i(v_i) = \hftaNodes(v_i)$ for all node $v_i \in \hftaNodes_i$. $\hftaTrans_i \subset \hftaTrans$.
Note that the $t_i$ part is recognized by inter-FTA transition $q^c_s \to q^c_{x_i} \in \hftaTrans$ where $q^c_s$ is a state in $\hftaAnnot_i(v_{r_i})$ and $q^c_{x_i}$ is a state in $\hftaAnnot(v_f)$.
According to the \textsf{Func} rule in Figure~\ref{fig:build-hfta}, we have $\grammars, \occurs \vdash t_i \leadsto \hfta_i$ for $i \in [1, m]$.
By inductive hypothesis, we know there exists a hierarchical tree $\htree_i = (\htreeNodes_i, \htreeAnnot_i, v_{r_i}, \htreeEdges_i)$ accepted by $\hfta_i$ and the run of $\htreeAnnot_i(v_{r_i})$ on $\hftaAnnot_i(v_{r_i})$ is rooted with final state $q^{\denot{t_i}}_{s_i}$ for all $i \in [1, m]$.
Moreover, by Lemma~\ref{lem:complete-fta}, we know there exists a tree $T_f$ accepted by $\hftaAnnot(v_f)$ and the run is rooted with final state $q^a_s$ from initial states $q^{\denot{t_1}}_{s_1}, \ldots, q^{\denot{t_m}}_{s_m}$, where $a = \denot{f(\denot{t_1}, \ldots, \denot{t_m})}$.
Now consider the hierarchical tree
\[
    \htree = (\cup^m_{i=1}\htreeNodes_i \cup \set{v_f}, ~ \cup^m_{i=1}\htreeAnnot_i \cup \set{v_f \mapsto T_f}, ~ v_f, ~ \cup^m_{i=1} (\htreeEdges_i \cup \set{(v_f, v_{r_i})} ))
\]
By acceptance condition of HFTA in Defintion~\ref{def:hfta-accept}, $\htree$ is accepted by $\hfta$.
Thus, the lemma is proved for $t = f(t_1, \ldots, t_m)$ case.
\end{itemize}
By principle of structural induction, we have proved the lemma.
\end{proof}

\begin{lemma}\label{lem:complete-relaxed-formula}
Let $\Phi$ be a ground formula where every function symbol $f_1, \ldots, f_n$ occurs exactly once, and suppose we have $\grammars,\occurs \vdash \Phi \leadsto \hfta$. If there are implementations $P_i$ of $f_i$ where $P_i$ conforms to grammar $\grammars(\occurs^{-1}(f_i))$, then there exists a hierarchical tree $\htree = (\htreeNodes, \htreeAnnot, \htreeRoot, \htreeEdges)$ accepted by $\hfta$ such that $\htreeAnnot(v_{f_i}) = P_i$ for all $i \in [1,n]$, and
\begin{enumerate}
\item If $P_1, \ldots, P_n \models \Phi$, the run of $\htreeAnnot(\htreeRoot)$ on $\hftaAnnot(\hftaRoot)$ is rooted with final state $q^\top_{s_0}$.
\item If $P_1, \ldots, P_n \not\models \Phi$, the run of $\htreeAnnot(\htreeRoot)$ on $\hftaAnnot(\hftaRoot)$ is rooted with final state $q^\bot_{s_0}$.
\end{enumerate}
\end{lemma}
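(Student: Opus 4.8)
The plan is to prove this by structural induction on $\Phi$, mirroring the soundness argument of Lemma~\ref{lem:sound-relaxed-formula} but in the opposite direction. The key simplification is that, since every function symbol $f_1,\ldots,f_n$ occurs exactly once in $\Phi$, each step of the induction splits these symbols disjointly among the immediate subterms or subformulas, so the implementations chosen for the different children never conflict; the hierarchical tree for $\Phi$ can therefore be assembled simply by taking the union of the children's hierarchical trees (furnished by the induction hypothesis, or by Lemma~\ref{lem:complete-relaxed-term} in the base case) and grafting on a fresh root node annotated with the one-level tree that the root FTA accepts.

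For the base case $\Phi = t_1 \ \emph{op} \ t_2$, I would first invoke Lemma~\ref{lem:complete-relaxed-term} on $t_1$ and on $t_2$ separately --- each is a term in which its function symbols occur exactly once --- obtaining hierarchical trees $\htree_1, \htree_2$ accepted by the sub-HFTAs $\hfta_1,\hfta_2$ with $\htreeAnnot_i(v_{f_j}) = P_j$ and whose root runs land on states $q^{\denot{t_1}}_{s_1}$ and $q^{\denot{t_2}}_{s_2}$. By the Logical rule of Fig.~\ref{fig:build-hfta}, the root FTA at $v_{\emph{op}}$ then contains states $q^{\denot{t_i}}_{x_i}$, inter-FTA transitions $q^{\denot{t_i}}_{s_i} \to q^{\denot{t_i}}_{x_i}$, and a transition $\emph{op}(q^{\denot{t_1}}_{x_1}, q^{\denot{t_2}}_{x_2}) \to q^{c}_{s_0}$ with $c = \denot{\emph{op}}(\denot{t_1},\denot{t_2})$. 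I would let $\htree$ be the union of $\htree_1,\htree_2$ together with a new node $v_{\emph{op}}$ annotated by the tree $\emph{op}(x_1,x_2)$ and edges to the two child roots, then verify the two clauses of Definition~\ref{def:hfta-accept}: each annotated tree is accepted by its FTA (for $v_{\emph{op}}$ this is immediate, since the run above lands on $q^c_{s_0}$, which the $\leadsto$-stage FTA marks final), and for each new edge the inter-FTA transition $q^{\denot{t_i}}_{s_i} \to q^{\denot{t_i}}_{x_i}$ applies at the unique leaf $x_i$ of $\emph{op}(x_1,x_2)$. The root run then lands on $q^c_{s_0}$, and since the boolean value $c$ is exactly the truth value of $t_1 \ \emph{op} \ t_2$ under $P_1,\ldots,P_n$, we get $c = \top$ iff $P_1,\ldots,P_n \models \Phi$, which establishes both clauses.

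The inductive cases $\Phi = \Phi_1 \odot \Phi_2$ with $\odot \in \set{\land,\lor}$ and $\Phi = \neg\Phi_1$ proceed in exactly the same way: apply the induction hypothesis to the immediate subformula(s) to obtain hierarchical trees whose root runs land on the boolean states $q^{\denot{\Phi_i}}_{s_0}$, then attach a fresh root annotated by $\odot(x_1,x_2)$ (resp.\ $\neg(x_1)$) whose FTA, as built by the Logical (resp.\ Neg) rule, contains precisely the transitions realising the connective on boolean inputs; re-checking Definition~\ref{def:hfta-accept} as above shows $\htree$ is accepted, and its root run lands on $q^{\denot{\Phi}}_{s_0}$, with $\denot{\Phi}$ computed from $\denot{\Phi_1}$ (and $\denot{\Phi_2}$) by the semantics of the connective --- matching whether $P_1,\ldots,P_n \models \Phi$.

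The main obstacle is bookkeeping rather than conceptual: in each case one must carefully verify the second clause of Definition~\ref{def:hfta-accept}, i.e.\ that for every newly added edge there genuinely is an accepting run of the child FTA whose root state is linked, by an inter-FTA transition of $\hfta$, to the state the parent run assigns at a \emph{unique} leaf of the parent tree. This amounts to threading the exact root states produced by Lemma~\ref{lem:complete-relaxed-term} and by the induction hypothesis through the inter-FTA transitions added by each construction rule, and observing that the auxiliary trees $\emph{op}(x_1,x_2)$ and $\neg(x_1)$ have exactly one occurrence of each argument variable, so the uniqueness requirement holds by construction and no spurious transition can apply at the ``wrong'' leaf. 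A minor additional point is that Lemma~\ref{lem:complete-fta}, invoked inside Lemma~\ref{lem:complete-relaxed-term}, presupposes that the Prod-rule bound used in \textsc{BuildFTA} is large enough to contain each $P_i$; I would make this assumption explicit, consistent with how completeness is phrased elsewhere in the paper.
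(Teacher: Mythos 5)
Your proposal is correct and follows essentially the same route as the paper's proof: structural induction on $\Phi$, invoking Lemma~\ref{lem:complete-relaxed-term} on the terms in the base case $t_1~\emph{op}~t_2$, and in each case assembling the hierarchical tree as the union of the children's trees plus a fresh root annotated with $\emph{op}(x_1,x_2)$ or $\neg(x_1)$, then checking Definition~\ref{def:hfta-accept} and reading off the truth value from the root state $q^{c}_{s_0}$. Your explicit remarks about the disjointness of function symbols across subterms and the Prod-rule bound in \textsc{BuildFTA} are sensible clarifications of assumptions the paper leaves implicit, but they do not constitute a different argument.
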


\begin{proof}
Prove by structural induction on $\Phi$.
\begin{itemize}
\item Base case: $\Phi = t_1 ~ op ~ t_2$.
Given $\grammars, \occurs \vdash t \leadsto \hfta = (\hftaNodes, \hftaAnnot, v_{op}, \hftaTrans)$, we denote its $t_i$ part as $\hfta_i = (\hftaNodes_i, \hftaAnnot_i, v_{r_i}, \hftaTrans_i)$ for $i = 1, 2$. 
In addition, we have $\hftaNodes_i(v_i) = \hftaNodes(v_i)$ and $\hftaAnnot_i(v_i) = \hftaNodes(v_i)$ for all node $v_i \in \hftaNodes_i$. $\hftaTrans_i \subset \hftaTrans$.
Note that the $t_i$ part is recognized by inter-FTA transition $q^c_s \to q^c_{x_i} \in \hftaTrans$ where $q^c_s$ is a state in $\hftaAnnot_i(v_{r_i})$ and $q^c_{x_i}$ is a state in $\hftaAnnot(v_{op})$.
According to the \textsf{Logical} rule in Figure~\ref{fig:build-hfta}, we have $\grammars, \occurs \vdash t_i \leadsto \hfta_i$ for $i = 1, 2$.
By Lemma~\ref{lem:complete-relaxed-term}, we know there exists a hierarchical tree $\htree_i = (\htreeNodes_i, \htreeAnnot_i, v_{r_i}, \htreeEdges_i)$ accepted by $\hfta_i$ and the run of $\htreeAnnot_i(v_{r_i})$ on $\hftaAnnot_i(v_{r_i})$ is rooted with final state $q^{\denot{t_i}}_{s_i}$ for $i = 1, 2$.
Moreover, based on the \textsf{Logical} rule, we know there exists a tree $T_{op} = op(x_1, x_2)$ accepted by $\hftaAnnot(v_{op})$. The run is rooted with $q^\top_{s_0}$ if $\denot{t_1} ~ op ~ \denot{t_2}$ evaluates to \emph{true} (i.e., $P_1, \ldots, P_n \models t_1 ~ op t_2$); the run is rooted with $q^\bot_{s_0}$ otherwise.
Now consider the hierarchical tree
\[
    \htree = (\htreeNodes_1 \cup \htreeNodes_2 \cup \set{v_{op}}, ~ \htreeAnnot_1 \cup \htreeAnnot_2 \cup \set{v_{op} \mapsto T_{op}}, ~ v_{op}, ~ \htreeEdges_1 \cup \htreeEdges_2 \cup \set{(v_{op}, v_{r_1}), (v_{op}, v_{r_2})} ))
\]
By acceptance condition of HFTA in Defintion~\ref{def:hfta-accept}, $\htree$ is accepted by $\hfta$.
Thus, the lemma is proved for $\Phi = t_1 ~ op ~ t_2$.

\item Inductive case: $\Phi = \Phi_1 \odot \Phi_2 ~ (\odot = \land, \lor)$.
Given $\grammars, \occurs \vdash t \leadsto \hfta = (\hftaNodes, \hftaAnnot, v_{\odot}, \hftaTrans)$, we denote its $\Phi_i$ part as $\hfta_i = (\hftaNodes_i, \hftaAnnot_i, v_{r_i}, \hftaTrans_i)$ for $i = 1, 2$. 
In addition, we have $\hftaNodes_i(v_i) = \hftaNodes(v_i)$ and $\hftaAnnot_i(v_i) = \hftaNodes(v_i)$ for all node $v_i \in \hftaNodes_i$. $\hftaTrans_i \subset \hftaTrans$.
Note that the $\Phi_i$ part is recognized by inter-FTA transition $q^c_s \to q^c_{x_i} \in \hftaTrans$ where $q^c_s$ is a state in $\hftaAnnot_i(v_{r_i})$ and $q^c_{x_i}$ is a state in $\hftaAnnot(v_{\odot})$.
According to the \textsf{Logical} rule in Figure~\ref{fig:build-hfta}, we have $\grammars, \occurs \vdash \Phi_i \leadsto \hfta_i$ for $i = 1, 2$.
By inductive hypothesis, we know there exists a hierarchical tree $\htree_i = (\htreeNodes_i, \htreeAnnot_i, v_{r_i}, \htreeEdges_i)$ accepted by $\hfta_i$ and the run of $\htreeAnnot_i(v_{r_i})$ on $\hftaAnnot_i(v_{r_i})$ is rooted with final state $q^\top_{s_0}$ if $\Phi_i$ is satisfied; the run is rooted with final state $q^\top_{s_0}$ otherwise.
Moreover, based on the \textsf{Logical} rule, we know there exists a tree $T_{\odot} = \odot(x_1, x_2)$ accepted by $\hftaAnnot(v_{\odot})$. The run is rooted with $q^\top_{s_0}$ if $\denot{\Phi_1} \odot \denot{\Phi_2}$ evaluates to \emph{true}, but rooted with $q^\bot_{s_0}$ otherwise.
Now consider the hierarchical tree
\[
    \htree = (\htreeNodes_1 \cup \htreeNodes_2 \cup \set{v_{\odot}}, ~ \htreeAnnot_1 \cup \htreeAnnot_2 \cup \set{v_{\odot} \mapsto T_{\odot}}, ~ v_{\odot}, ~ \htreeEdges_1 \cup \htreeEdges_2 \cup \set{(v_{\odot}, v_{r_1}), (v_{\odot}, v_{r_2})} )
\]
By acceptance condition of HFTA in Defintion~\ref{def:hfta-accept}, $\htree$ is accepted by $\hfta$.
Thus, the lemma is proved for $\Phi = \Phi_1 \odot \Phi_2$.

\item Inductive case: $\Phi = \neg \Phi_1$.
Given $\grammars, \occurs \vdash t \leadsto \hfta = (\hftaNodes, \hftaAnnot, v_{\neg}, \hftaTrans)$, we denote its $\Phi_1$ part as $\hfta_1 = (\hftaNodes_1, \hftaAnnot_1, v_{r_1}, \hftaTrans_1)$.
In addition, we have $\hftaNodes_1(v_1) = \hftaNodes(v_1)$, $\hftaAnnot_1(v_1) = \hftaNodes(v_1)$, and $\hftaTrans_1 \subset \hftaTrans$.
Note that the $\Phi_1$ part is recognized by inter-FTA transition $q^c_s \to q^c_{x_1} \in \hftaTrans$ where $q^c_s$ is a state in $\hftaAnnot_1(v_{r_1})$ and $q^c_{x_1}$ is a state in $\hftaAnnot(v_{\neg})$.
According to the \textsf{Neg} rule in Figure~\ref{fig:build-hfta}, we have $\grammars, \occurs \vdash \Phi_1 \leadsto \hfta_1$.
By inductive hypothesis, we know there exists a hierarchical tree $\htree_1 = (\htreeNodes_1, \htreeAnnot_1, v_{r_1}, \htreeEdges_1)$ accepted by $\hfta_1$ and the run of $\htreeAnnot_1(v_{r_1})$ on $\hftaAnnot_1(v_{r_1})$ is rooted with final state $q^\top_{s_0}$ if $\Phi_1$ is satisfied; the run is rooted with final state $q^\top_{s_0}$ otherwise.
Moreover, based on the \textsf{Neg} rule, we know there exists a tree $T_{\neg} = \neg(x_1)$ accepted by $\hftaAnnot(v_{\neg})$. The run is rooted with $q^\top_{s_0}$ if $\neg \denot{\Phi_1}$ evaluates to \emph{true}, but rooted with $q^\bot_{s_0}$ otherwise.
Now consider the hierarchical tree
\[
    \htree = (\htreeNodes_1 \cup \set{v_{\neg}}, ~ \htreeAnnot_1 \cup \set{v_{\neg} \mapsto T_{\neg}}, ~ v_{\neg}, ~ \htreeEdges_1 \cup \set{(v_{\neg}, v_{r_1})} )
\]
By acceptance condition of HFTA in Defintion~\ref{def:hfta-accept}, $\htree$ is accepted by $\hfta$.
Thus, the lemma is proved for $\Phi = \neg \Phi_1$.
\end{itemize}
By principle of structural induction, we have proved the lemma.
\end{proof}

\begin{proof}[Proof of Theorem~\ref{thm:complete-relaxed}]
Suppose $\grammars, \occurs \vdash \Phi \twoheadrightarrow \hfta$ and $\hfta = (\hftaNodes, \hftaAnnot, \hftaRoot, \hftaTrans)$, we have $\grammars, \occurs \vdash \Phi \leadsto \hfta'$ with $\hfta' = (\hftaNodes, \hftaAnnot', \hftaRoot, \hftaTrans)$ according to the \textrm{Final} rule in Figure~\ref{fig:build-hfta}.
The only difference between $\hftaAnnot$ and $\hftaAnnot'$ is that $\hftaAnnot(\hftaRoot) = (\ftaStates, \ftaAlphabet, \set{q^\top_{s_0}}, \ftaTrans)$ but $\hftaAnnot'(\hftaRoot) = (\ftaStates, \ftaAlphabet, \set{q^\top_{s_0}, q^\bot_{s_0}}, \ftaTrans)$.
By Lemma~\ref{lem:complete-relaxed-formula}, we know there exists a hierarchical tree $\htree = (\htreeNodes, \htreeAnnot, \htreeRoot, \htreeEdges)$ accepted by $\hfta'$ such that $\htreeAnnot(v_{f_i}) = P_i$ for all $i \in [1, n]$. Furthermore, since $P_1, \ldots, P_n \models \Phi$, we know the run of $\htreeAnnot(\htreeRoot)$ on $\hftaAnnot(\hftaRoot)$ is rooted with final state $q^\top_{s_0}$.
Since $q^\top_{s_0}$ is also a final state in $\hftaAnnot'(\hftaRoot)$, the run of $\htreeAnnot(\htreeRoot)$ on FTA $\hftaAnnot(\hftaRoot)$ is the same as the run of $\htreeAnnot(\htreeRoot)$ on FTA $\hftaAnnot'(\hftaRoot)$.
According to the HFTA accepting condition in Definition~\ref{def:hfta-accept}, $\htree$ is also accepted by $\hfta$.
\end{proof}


\begin{lemma}\label{lem:find-sound}
Suppose we have a ground relational specification $\Phi$, its relaxation $\Phi'$ with symbol mapping $\occurs$, and an HFTA $\hfta$ built from $\Phi'$, $\occurs$ and grammars $\grammars$. If the \textsc{FindProgs} procedure returns programs $\programs$ (not \emph{null}) given input $\hfta$ and $\occurs$, then $\programs$ satisfy $\Phi$, and program $P_i$ for function $f_i$ conforms to the grammar $\grammars(f_i)$.
\end{lemma}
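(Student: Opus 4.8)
The plan is to reduce this statement to the already–proved HFTA soundness theorem (Theorem~\ref{thm:sound-relaxed}) for the \emph{relaxed} specification $\Phi'$, exploiting the fact that $\Phi'$ is syntactically $\Phi$ with each occurrence of a function symbol renamed, so that interpreting all relaxed symbols $f_i \in \occurs(f)$ by one and the same program turns ``$\models \Phi'$'' into ``$\models \Phi$''. The work, then, is to show that whenever the top-level call $\textsc{FindProgs}(\hfta_0, \programs_\bot, \occurs)$ (with $\hfta_0$ the HFTA of the statement) returns a non-null $\programs$, the induced interpretation $f_i \mapsto \programs(\occurs^{-1}(f_i))$ of the relaxed symbols still satisfies $\Phi'$, and each $\programs(f)$ conforms to $\grammars(f)$.

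First I would prove a monotonicity property of \textsc{Propagate}: $\lang(\textsc{Propagate}(\hfta, p, f, \occurs)) \subseteq \lang(\hfta)$ (immediate from Definition~\ref{def:hfta-accept}, since \textsc{Propagate} only shrinks final-state sets and deletes inter-FTA transitions), and, more importantly, that every hierarchical tree accepted by $\textsc{Propagate}(\hfta, p, f, \occurs)$ annotates each node $v_{f_i}$ with $f_i \in \occurs(f)$ by a program whose output on the argument tuple actually flowing into $v_{f_i}$ equals $\denot{p}$ on that tuple; this uses Lemma~\ref{lem:sound-fta} to read off the state in which the restricted FTA's accepting run terminates, together with the fact that $\textsf{ReachableFinalStates}(\hftaAnnot(v_{f_i}), p)$ collects exactly the states $q^{\denot{p}(\vec c)}_{s_0}$ for argument tuples $\vec c$ realizable at $v_{f_i}$. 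Next I would set up the recursion invariant for Algorithm~\ref{algo:find}: for any call $\textsc{FindProgs}(\hfta, \programs, \occurs)$ reachable from the top-level call, (i) $\lang(\hfta) \subseteq \lang(\hfta_0)$; (ii) every hierarchical tree accepted by $\hfta$ is functionally consistent with $\programs$ on all assigned symbols, in the sense of the \textsc{Propagate} property above; and (iii) every non-$\bot$ value $\programs(f)$ was extracted as $\htreeAnnot(v_{f_i})$ from a tree accepted by some HFTA with language $\subseteq \lang(\hfta_0)$, hence — by the conformance clause of Theorem~\ref{thm:sound-relaxed} — conforms to $\grammars(\occurs^{-1}(f_i)) = \grammars(f)$. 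The invariant holds vacuously at the top-level call, and is preserved by the recursive call because \textsc{Propagate} only further restricts the HFTA and because the only place $\programs$ is extended assigns an extracted program.

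Finally, a short induction on recursion depth reduces the claim to the case where \textsc{FindProgs} returns because \textsf{ChooseUnassigned} yields \emph{null} — i.e., every function symbol is assigned and, since this call is reached only through the emptiness guard (barring the vacuous zero-symbol case), $\lang(\hfta) \neq \emptyset$. Choose any $\htree \in \lang(\hfta) \subseteq \lang(\hfta_0)$. Conformance of the returned programs is immediate from invariant (iii). For satisfaction, Theorem~\ref{thm:sound-relaxed} gives that the interpretation $J : f_i \mapsto \htreeAnnot(v_{f_i})$ of the relaxed symbols satisfies $\Phi'$; then, using invariant (ii) and a structural induction over the ground subterms of $\Phi'$ (with Lemma~\ref{lem:sound-relaxed-term} identifying the value entering each $v_{f_i}$ as the value of its argument subterm under $J$), one shows that replacing each $\htreeAnnot(v_{f_i})$ by $\programs(\occurs^{-1}(f_i))$ preserves the value of every subterm, hence the truth of $\Phi'$. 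Thus $K : f_i \mapsto \programs(\occurs^{-1}(f_i))$ also satisfies $\Phi'$, and since $K$ factors through $\occurs^{-1}$ and $\Phi'$ is $\Phi$ with occurrences renamed, this is precisely $\programs \models \Phi$. I expect the main obstacle to be exactly this structural-induction step: one must argue that, in the final accepted hierarchical tree, each occurrence $v_{f_i}$ of $f$ is annotated not merely by a program agreeing with $\denot{\programs(f)}$ on \emph{some} FTA-realizable input tuple, but by one producing $\denot{\programs(f)}$ on the \emph{particular} argument tuple dictated by the surrounding tree; pinning this down requires combining the \textsc{Propagate} restrictions at all occurrences of $f$ with the remaining HFTA acceptance constraints (in particular the consequence of Theorem~\ref{thm:sound-relaxed} that the whole tree forces $\Phi'$) and propagating already-pinned argument values bottom-up along the DAG of subterms. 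The remaining ingredients — the \textsc{Propagate} monotonicity lemma and the bookkeeping about what \textsc{FindProgs} assigns and when — are routine given the rules of Fig.~\ref{fig:build-hfta} and the definitions of \textsc{FindProgs} and \textsc{Propagate}.
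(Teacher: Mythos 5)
Your overall skeleton (a language-shrinking invariant for \textsc{Propagate}, an appeal to Theorem~\ref{thm:sound-relaxed} on a tree accepted by the final HFTA, then a transfer from $\Phi'$ back to $\Phi$ via functional consistency, with grammar conformance read off from Theorem~\ref{thm:sound-relaxed}(1)) is close in spirit to the paper's proof, and the conformance half of your argument is fine. The genuine gap is the key property you attribute to \textsc{Propagate}: that every hierarchical tree accepted by $\textsc{Propagate}(\hfta, p, f, \occurs)$ annotates each occurrence node $v_{f_i}$ with a program agreeing with $p$ on the argument tuple \emph{actually flowing into} $v_{f_i}$. This does not follow from the definition of \textsf{ReachableFinalStates}: the retained final states of $\hftaAnnot(v_{f_i})$ are just the output states $q_{s_0}^{\denot{p}(\vec{c})}$ taken over \emph{all} argument tuples $\vec{c}$ realizable in that FTA, the input states $q_{x_j}^{c}$ are never pruned, and a final state does not record which inputs produced it. So an accepted annotation at $v_{f_i}$ only has to reach \emph{some} such output value from its own (possibly different) input tuple. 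Your proposed repair --- pinning argument values bottom-up --- does not close this: for a nested occurrence such as the outer $f$ in $f(f(1))$, even after the inner occurrence's output is pinned to $\denot{p}(1)$ by the deleted inter-FTA transitions, the outer FTA's retained finals were computed against all of its original input values, so an accepted tree may annotate the outer node with a program whose value on the pinned input differs from $\denot{p}$ on that input while still landing in a retained final state (and such a tree can also satisfy $\Phi'$, so the ``whole tree forces $\Phi'$'' constraint does not rescue the argument). Consequently your value-preservation induction --- replacing each $\htreeAnnot(v_{f_i})$ by $\programs(\occurs^{-1}(f_i))$ preserves the value of every subterm --- is exactly the unproved step, and with it the satisfaction half of the lemma.

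The paper's proof avoids making any claim about \emph{all} trees accepted after propagation. It fixes one hierarchical tree $\htree$ accepted by the input HFTA, considers the sequence $\hfta_0, \ldots, \hfta_n$ of propagated HFTAs together with the trees $\htree_0, \ldots, \htree_n$ obtained by substituting the chosen program $P_k$ at \emph{every} occurrence of $f_k$, establishes the invariant $\lang(\hfta_i) \subseteq \lang(\hfta)$ plus consistency of the substituted tree, argues that the fully substituted tree $\htree_n$ is accepted by $\hfta_n$ (appealing to the emptiness checks performed after each propagation), and then applies Theorem~\ref{thm:sound-relaxed} directly to $\htree_n \in \lang(\hfta)$; functional consistency is then immediate by construction of $\htree_n$ rather than derived from a property of arbitrary accepted trees. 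To make your route work you would need a strengthened propagation invariant in which reachable final states are computed relative to the already-pinned argument values (equivalently, FTA states tracking input--output pairs rather than outputs alone); otherwise you should switch to the paper's tactic of reasoning about the substituted tree itself rather than an arbitrary member of $\lang(\hfta_n)$.
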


\begin{proof}
If \textsc{FindProgs} actually returns programs $\programs$, then there must be a hierarchical tree $\htree = (\htreeNodes, \htreeAnnot, \htreeRoot, \htreeEdges)$ accepted by $\hfta$. Otherwise, it will skip the for loop and directly return \emph{null}.
For ease of illustration, we assign $\hfta_0 = \hfta$ and assign $\htree_0 = \htree = (\htreeNodes, \htreeAnnot_0, \htreeRoot, \htreeEdges)$.
Observe that the \textsc{FindProgs} procedure assigns one program to a function symbol at a time, and terminates when all function symbols are successfully assigned.
Without loss of generality, we assume the function symbols are assigned in the order of $f_1, \ldots, f_n$.
We denote by $\htree_i = (\htreeNodes, \htreeAnnot_i, \htreeRoot, \htreeEdges)$ the hierarchical tree and $\htreeAnnot_i$ is obtained by substituting \emph{all} $\htreeAnnot_{i-1}(v_{f'_i})$ in $\htree_{i-1}$ with program tree $P_i$, where $f'_i$ is a symbol in $\occurs(f_i)$. We also denote the HFTA of invoking \textsc{Propagate}($\hfta_{i-1}, P_i, f_i, \occurs$) by $\hfta_i$.

To prove programs $P_1, \ldots, P_n$ in $\programs$ satisfy $\Phi$ and program $P_i$ for function $f_i$ conforms to grammar $\grammars(f_i)$, consider the following invariant: For any integer $i \in [0, n]$,
\begin{enumerate}
\item $\lang(\hfta_i) \subseteq \lang(\hfta)$.
\item For all $k \leq i$, function symbol $f_k$ in original ground specification $\Phi$ is consistently assigned to program $P_k$, i.e. $P_k = \htreeAnnot_i(v_{f'_k})$ for all $f'_k \in \occurs(f_k)$.
\end{enumerate}
First, let us prove it is indeed an invariant.
\begin{itemize}
\item Base case: $i = 0$. Given $\hfta_0 = \hfta$, we have $\lang(\hfta_0) \subseteq \lang(\hfta)$. (2) trivially holds because no program has been assigned for $i = 0$.
\item Inductive case: Suppose (1)(2) hold for $i = j$, the goal is to prove they also hold for $i = j+1$.

\begin{enumerate}
\item Since \textsc{Propagate} will only remove final states and inter-FTA transitions from $\hfta_j$, we know $\lang(\hfta_{j+1}) \subseteq \lang(\hfta_j)$ by acceptance condition in Definition~\ref{def:hfta-accept}. Since $\lang(\hfta_j) \subseteq \lang(\hfta)$ by inductive hypothesis, we have $\lang(\hfta_{j+1}) \subseteq \lang(\hfta)$.
\item By definition of $\htree_{j+1}$, $\htreeAnnot_{j+1}(v_{f'_{j+1}}) = P_{j+1}$ holds for all $f'_{j+1} \in \occurs(f_{j+1})$. Also by inductive hypothesis, we know for all $k \leq j$, $P_k = \htreeAnnot_k(v_{f'_k})$. Thus, for all $k \leq j+1$, we have $P_k = \htreeAnnot_k(v_{f'_k})$ for all $f'_k \in \occurs(f_k)$.
\end{enumerate}
\end{itemize}

After assigning programs to all function symbols, we have
\begin{enumerate}
\item $\lang(\hfta_n) \subseteq \lang(\hfta)$.
\item For all function symbol $f_i$ in $\Phi$, function occurrences $f'_i$ are consistently assigned to program $P_i$ for all $f'_i \in \occurs(f_i)$.
\end{enumerate}

Since the programs $\programs$ are accepted by the last HFTA $\hfta_n$, i.e. $\htree_n \in \lang(\hfta_n)$ (otherwise \textsc{Propagate} would return \emph{null} due to the recurrent failure of emptiness check on $\hfta_n$, we have $\htree_n \in \lang(\hfta)$.
By Theorem~\ref{thm:sound-relaxed}, we have $\htreeAnnot(v_{f'_i})$ conforms to grammar $\grammars(\occurs^{-1}(f_i))$ and $\htreeAnnot(v_{f'_1}), \ldots, \htreeAnnot(v_{f'_n}) \models \Phi'$.
Consider the fact that $f_i = \occurs^{-1}(f'_i) \Leftrightarrow f'_i \in \occurs(f_i)$ and the only difference between $\Phi$ and $\Phi'$ is captured by $\occurs$, we have proved $P_1, \ldots, P_n \models \Phi$ and $P_i$ for function $f_i$ conforms to grammar $\grammars(f_i)$.
\end{proof}

\begin{proof}[Proof of Theorem~\ref{thm:sound}]
Observe the \textsc{Synthesize} procedure shown in Algorithm~\ref{algo:cegis} can only return a set of programs $\programs$ after it invokes the \textsf{Verify} procedure and successfully verifies $\programs$ indeed satisfy the relational specification $\Psi$. Thus, the programs $\programs$ returned by \textsc{Synthesize} are guaranteed to satisfy $\Psi$.
In addition, observe that Algorithm~\ref{algo:cegis} only has two ways to generate candidate programs that may be potentially returned after verification.
\begin{enumerate}
\item Randomly generate programs $\programs_{\emph{random}}$ that conform to context-free grammars $\grammars$ in the initialization phase.
\item Generate programs $\programs_{\emph{res}}$ by the \textsc{FindProgs} procedure. According to Lemma~\ref{lem:find-sound}, for program $P_i$ of function $f_i$ in $\programs_{\emph{res}}$, program $P_i$ conforms to the grammar $\grammars(f_i)$.
\end{enumerate}
Put two cases together, we have program $P_i$ of function symbol $f_i$ conforms to grammar $\grammars(f_i)$.
\end{proof}


\begin{lemma}\label{lem:find-complete}
Suppose we have a ground relational specification $\Phi$, its relaxation $\Phi'$ with symbol mapping $\occurs$, and an HFTA $\hfta$ built from $\Phi'$, $\occurs$, and grammars $\grammars$. If there exist a set of programs that (a) satisfy $\Phi$ and (b) can be implemented using the DSLs given by $\grammars$, then invoking \textsc{FindProgs} on $\hfta$ and $\occurs$ will eventually return programs $\programs$ satisfying $\Phi$.
\end{lemma}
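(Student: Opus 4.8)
The plan is to reduce completeness of \textsc{FindProgs} to the completeness of the HFTA construction (Theorem~\ref{thm:complete-relaxed}), and then argue that the backtracking search in Algorithm~\ref{algo:find} cannot fail to discover the consistent solution whose existence that theorem guarantees. Concretely, suppose $P_1,\ldots,P_n$ implement the function symbols $f_1,\ldots,f_n$ of the \emph{original} specification $\Phi$ using the DSLs $\grammars$, with $P_1,\ldots,P_n\models\Phi$. I would first lift this to the relaxed problem: define an interpretation of the \emph{relaxed} symbols that sends every occurrence symbol $f'\in\occurs(f)$ to $P_f$. Since $\Phi'$ differs from $\Phi$ only by renaming occurrences (as recorded in $\occurs$), this interpretation satisfies $\Phi'$ and each $f'$ receives a program conforming to $\grammars(\occurs^{-1}(f'))$. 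Because every function symbol occurs \emph{exactly once} in $\Phi'$, Theorem~\ref{thm:complete-relaxed} applies and produces a hierarchical tree $\htree^\star$ accepted by $\hfta$ with $\htreeAnnot^\star(v_{f'})=P_{\occurs^{-1}(f')}$ for every occurrence node $v_{f'}$; in particular $\htree^\star$ is functionally consistent, i.e.\ all occurrence nodes of a given original function carry the same program.

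Next I would prove, by induction on the recursion depth of \textsc{FindProgs}, the invariant: at the current call, the HFTA argument $\hfta_j$ still accepts $\htree^\star$, and the partial map $\programs_j$ agrees with $\htree^\star$ on every already-assigned symbol. The base case ($\hfta_0=\hfta$, $\programs_0=\programs_\bot$) is immediate from the previous step. For the inductive step, let $f$ be the symbol returned by \textsf{ChooseUnassigned} and $f_i$ the occurrence returned by \textsf{ChooseOccurrence} (the argument must work for \emph{any} choice these oracles make). Since $\htree^\star\in\lang(\hfta_j)$, the program $p^\star:=\htreeAnnot^\star(v_{f_i})$ belongs to the enumerated candidate set, so some iteration of the loop picks $p=p^\star$. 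For this choice \textsc{Propagate} preserves $\htree^\star$: at each occurrence node $v$ of $f$ the accepting run of $\hftaAnnot_j(v)$ on $\htreeAnnot^\star(v)=p^\star$ ends in a final state that, by definition, lies in $\textsf{ReachableFinalStates}(\hftaAnnot_j(v),p^\star)$, so it survives the restriction of final states, and the inter-FTA transition out of $v$ used by the accepting run of $\htree^\star$ has this surviving final state on its left-hand side, hence is not among the spurious transitions \textsc{Propagate} deletes. Therefore $\htree^\star\in\lang(\hfta')$, the emptiness test does not trigger a \textbf{continue}, and the recursive call is made on $\hfta'$ and $\programs_j[f\mapsto p^\star]$, for which the invariant holds again. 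When \textsf{ChooseUnassigned} finally returns \emph{null}, the map returned agrees with $\htree^\star$ everywhere and is thus a fully consistent solution.

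Finally, the AST-depth bound makes every FTA---hence every enumerated candidate set---finite, and the recursion depth is at most the number $n$ of function symbols, so the search tree explored by \textsc{FindProgs} is finite and every invocation terminates. Since the previous step exhibits at least one path from the root of this tree to a successful leaf, \textsc{FindProgs} cannot return \emph{null}: it returns either the solution along that path or some other solution encountered earlier in the exhaustive backtracking. Lemma~\ref{lem:find-sound} then shows that any non-\emph{null} output of \textsc{FindProgs} satisfies $\Phi$, which establishes the lemma.

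I expect the crux to be the acceptance-preservation claim for \textsc{Propagate}: it requires carefully relating the accepting run witnessing $\htree^\star\in\lang(\hfta_j)$ (guaranteed by Definition~\ref{def:hfta-accept}) to precisely which final states \textsf{ReachableFinalStates} retains and which inter-FTA transitions are dropped, and then re-verifying the ``stitching'' condition of Definition~\ref{def:hfta-accept} at the parent edge of each propagated node. A secondary subtlety is that \textsf{ChooseUnassigned} and \textsf{ChooseOccurrence} are unspecified oracles, so the invariant must hold for every choice they could make rather than for a particular heuristic.
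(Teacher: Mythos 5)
Your proposal is correct and follows essentially the same route as the paper: both reduce the claim to Theorem~\ref{thm:complete-relaxed} to obtain a functionally consistent hierarchical tree $\htree^\star$ accepted by $\hfta$, and then argue that the backtracking search of \textsc{FindProgs} must succeed, closing with the soundness lemma. The only difference is one of detail: where the paper simply asserts that the backtracking search ``may potentially enumerate all possible hierarchical trees'' and will therefore find $\htree^\star$, you make this explicit via the invariant that \textsc{Propagate} preserves acceptance of $\htree^\star$ when the propagated program is the one $\htree^\star$ assigns, which is a faithful elaboration of the same argument rather than a different approach.
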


\begin{proof}
Suppose there exist a set of programs that satisfy $\Phi$ and can be implemented using the DSLs given by $\grammars$, we denote the program for function $f_i$ by $P_i$.
Given that the relationship between $\Phi$ and $\Phi'$ is captured by mapping $\occurs$, and $f_i = \occurs^{-1}(f'_i) \Leftrightarrow f'_i \in \occurs(f_i)$, we know that assigning program $P_i$ for \emph{all} $f'_i$ if $f'_i \in \occurs(f_i)$ would satisfy $\Phi'$, and $P_i$ conforms to grammar $\grammars(\occurs^{-1}(f'_i)$.
By Theorem~\ref{thm:complete-relaxed}, we know there exists a hierarchical tree $\htree = (\htreeNodes, \htreeAnnot, \htreeRoot, \htreeEdges)$ accepted by $\hfta$ such that $\htreeAnnot(v_{f'_i}) = P_i$ for any $i \in [1, n]$.
Observe that the \textsc{FindProgs} procedure essentially employs standard backtrack search and may potentially enumerate all possible hierarchical trees accepted by $\hfta$, it will eventually find $\htree$ and thus return the program set $\programs$.
\end{proof}

\begin{proof}[Proof of Theorem~\ref{thm:complete}]
Suppose the ground relational specification obtained by $i$-th loop iteration of Algorithm~\ref{algo:cegis} is denoted by $\Phi_i$, and $\Phi_0 = \top$, let us first prove that $\Psi \Rightarrow \Phi_i$ for any natural number $i$ by induction on $i$.
\begin{itemize}
\item Base case: $i = 0$. $\Psi \Rightarrow \top$ holds obviously.
\item Inductive case: Suppose $\Psi \Rightarrow \Phi_{i-1}$ holds, we would like to prove $\Psi \Rightarrow \Phi_i$. Given $\Phi_i = \Phi_{i-1} \land \emph{GetRelationalCounterexample}(\programs, \Psi)$, and consider that $\Psi \Rightarrow \emph{GetRelationalCounterexample}(\programs, \Psi)$ because the counterexamples are obtained by instantiating the universal quantifiers in $\Psi$, we have $\Psi \Rightarrow \Phi_i$.
\end{itemize}
According to the principle of induction, $\Psi \Rightarrow \Phi_i$ holds for any natural number $i$.
If there exist programs $\programs$ that satisfy $\Psi$ and $\programs$ can be implemented using the DSLs given by $\grammars$, then programs $\programs$ also satisfy $\Phi_i$ for any $i$.
By Lemma~\ref{lem:find-complete}, we know the \textsc{FindProgs} procedure will always return programs satisfying $\Phi_i$.
Since the Algorithm~\ref{algo:cegis} will not terminate until it finds a set of programs that satisfy $\Psi$ or the \textsc{FindProgs} procedure fails to find any programs, and we have proved \textsc{FindProgs} will never return failure, we can conclude that Algorithm~\ref{algo:cegis} will eventually terminate with programs satisfying $\Psi$.
\end{proof}

\section{Complexity Analysis}

The complexity of our inductive synthesis algorithm is parametrized by 1) the number of functions to synthesize, 2) the size of ground relational specification, 3) the size of largest FTA in the HFTA, and 4) the bound of AST size for programs under consideration.
Specifically, the ground relational specification can be relaxed in linear time $\bigO(k)$, where $k$ is the size of ground relational specification.
Since the HFTA may contain at most $k$ FTAs, it can be constructed in $\bigO(km)$ time, where $m$ is the size of largest FTA in the HFTA. In particular, $m = \Sigma_{\delta \in \Delta} |\delta|$ where $|\delta| = w+1$ for a transition $\delta$ of the form $\sigma(q_1, \ldots, q_w) \to q$.
Observe that \textsc{FindProgs} is a backtracking search procedure, its worst case complexity is $\bigO(m^{bn})$ where $n$ is the number of functions to synthesize, and $b$ is the bound of AST size for all programs under consideration.
Therefore, the overall complexity is $\bigO(km + m^{bn})$.

\fi

\end{document}